\documentclass[preprint,11pt,authoryear]{elsarticle}

\usepackage{amsmath,amssymb,amsthm}
\usepackage{graphicx}
\usepackage{booktabs}
\usepackage{bm}
\usepackage{siunitx}
\usepackage[hidelinks]{hyperref}
\usepackage[margin=1in]{geometry}
\usepackage{setspace}

\journal{Journal of Econometrics}

\makeatletter
\long\def\pprintMaketitle{\clearpage
  \iflongmktitle\if@twocolumn\let\columnwidth=\textwidth\fi\fi
  \resetTitleCounters
  \def\baselinestretch{1}%
  \printFirstPageNotes
  \begin{\elsarticletitlealign}%
 \thispagestyle{pprintTitle}%
   \def\baselinestretch{1}%
    \LARGE\@title\par\vskip18pt%
    \ifx\@elsarticlenewpageafter\newpage@after@title%
      \newpage
    \fi%
    \ifdoubleblind
      \vspace*{2pc}
    \else
      \normalsize\elsauthors\par\vskip10pt
      \footnotesize\itshape\elsaddress\par\vskip36pt
    \fi
    \ifx\@elsarticlenewpageafter\newpage@after@author%
      \newpage
    \fi%
    \ifvoid\absbox\else\unvbox\absbox\par\vskip20pt\fi
    \ifvoid\keybox\else\unvbox\keybox\par\vskip10pt\fi
    \end{\elsarticletitlealign}%
    \gdef\thefootnote{\arabic{footnote}}%
    \clearpage
  }

\renewenvironment{abstract}{\global\setbox\absbox=\vbox\bgroup
  \hsize=\textwidth\def\baselinestretch{1}%
  {\centering\textbf{\@elsarticleabstitle}\par}%
  \medskip\noindent\unskip\ignorespaces}
 {\egroup}

\def\keyword{%
  \def\sep{\unskip, }%
  \def\MSC{\@ifnextchar[{\@MSC}{\@MSC[2000]}}%
  \def\@MSC[##1]{\par\leavevmode\hbox{\it ##1~MSC:\space}}%
  \def\PACS{\par\leavevmode\hbox{\it PACS:\space}}%
  \def\JEL{\par\addvspace{8pt}\leavevmode\hbox{\it JEL:\space}}%
  \global\setbox\keybox=\vbox\bgroup\hsize=\textwidth
  \setstretch{1}\normalsize\normalfont
  \leftskip=2em\rightskip=2em\parskip\z@\parindent\z@
  \noindent\textit{\@elsarticlekwdtitle\@elsarticlekeywordtitlesep}%
  \ignorespaces}
\makeatother

\newtheorem{assumption}{Assumption}
\newtheorem{lemma}{Lemma}
\newtheorem{proposition}{Proposition}
\newtheorem{corollary}{Corollary}
\theoremstyle{definition}
\newtheorem{remark}{Remark}

\newcommand{\E}{\mathbb{E}}
\newcommand{\Prob}{\mathbb{P}}
\newcommand{\Cov}{\operatorname{Cov}}
\newcommand{\Var}{\operatorname{Var}}
\newcommand{\Ztil}{\tilde{Z}}
\newcommand{\indep}{\perp\!\!\!\perp}
\newcommand{\late}{\mathrm{LATE}}
\newcommand{\convd}{\overset{d}{\longrightarrow}}
\newcommand{\convp}{\overset{p}{\longrightarrow}}

\begin{document}
\setstretch{1.15}
\setlength{\emergencystretch}{2em}

\begin{frontmatter}

\title{What No First Stage Can Detect: Functional-Form Contamination in Linear IV}

\author{Parush Arora\fnref{aff}}
\fntext[aff]{Department of Economics, Ashoka University, Sonipat, India.}

\begin{abstract}
\begin{spacing}{1}%
\leftskip=2em\rightskip=2em\noindent
Applied instrumental variables (IV) practice reports a first-stage $F$, now often the
conditional $F$ of Sanderson and Windmeijer (2016), and reads a large value as license to
interpret the second stage. We show that no first-stage diagnostic can provide it. With a
scalar instrument, a scalar treatment, and covariates entered linearly, the 2SLS estimand
splits into a signal that a saturated specification would target and a contamination, the
covariance between curvature in the instrument propensity and a covariate level function.
The same nuisance sits in both terms, so it biases the estimand and inflates the reported
strength at once. When the instrument is nearly collinear with the covariates the signal
vanishes and the strength is manufactured. When the strength is honest the curvature still
biases the estimand through the outcome, where no first-stage number reaches it. We prove
that no functional of the joint distribution of instrument, treatment, and covariates can
detect this second bias, and we give a directed test built from reduced-form regressions, a
corrected estimator, and a reportable contamination share. Two applications show the modes.
A husband's insurance instrument (Olson, 1998) has a conditional $F$ above $36{,}000$, yet
correcting a linear income control more than doubles the estimate. The instrument of Nunn
and Wantchekon (2011) loses most of its first stage once geography enters flexibly.
\par\end{spacing}
\end{abstract}

\begin{keyword}
Instrumental variables \sep two-stage least squares \sep local average
treatment effect \sep first-stage strength \sep contamination bias \sep
functional form \sep specification testing \sep conditional moment test
\JEL C26 \sep C21 \sep C52
\end{keyword}

\end{frontmatter}

%=============================================================================
\section{Introduction}\label{sec:intro}

The instrumental-variable (IV) estimator is the workhorse of applied microeconomics
whenever treatment is suspected to be endogenous. Its credibility rests on two
pillars, the exclusion restriction and the strength of the first stage. The first is
fundamentally untestable and is defended by argument and by sensitivity analysis
\citep{conley2012,kippersluis2018}. The second is treated as a statistical object to
be certified. A large literature, including \citet{bound1995}, \citet{staiger1997},
\citet{stock2005}, \citet{montielolea2013}, and \citet{andrews2019}, has produced a
battery of rule-of-thumb and size-based procedures for diagnosing weak instruments,
and reporting a first-stage $F$ is now routine, indeed mandatory in most journals.
When the model carries a rich set of controls, \citet{sanderson2016} show that the
relevant object is the \emph{conditional} first-stage $F$, computed after partialling
the covariates out of the excluded instrument, and applied work increasingly reports
it. Whether a large value of it should reassure is a separate question, and \citet{stevenson2026} argues on behavioural grounds that it need not, a first-stage $F$ above $200$ coexisting with substantial bias once researcher discretion enters. We give a mechanical counterpart that binds even for a single disciplined analyst running the standard specification.

Running in parallel, a second literature has clarified \emph{what} a valid, strong
instrument identifies. Since \citet{imbens1994} and \citet{angrist1996}, IV is
understood to recover a local average treatment effect (LATE) under treatment-effect
heterogeneity. \citet{heckman2005} formalize essential heterogeneity,
\citet{dechaisemartin2017} treats failures of monotonicity, and
\citet{mogstad2018,mogstad2024} synthesize the modern picture. Most consequential for
practice is \citet{blandhol2022}. Once covariates enter the second stage
\emph{linearly}, the near-universal specification, 2SLS is in general \emph{not} a
convex average of LATEs unless the covariates are entered nonparametrically.
\citet{sloczynski2022} treats the just-identified case, \citet{goldsmith2024} give a
general treatment of the contamination bias additive covariate adjustment induces
across \emph{multiple} treatments, and \citet{ddl2025} show that consistent interacted
2SLS requires the product of the IV propensity score and the covariates to be linear
in the covariates. It is by now understood that linearity of the instrument propensity
$\E[Z\mid X]$ in the covariates is the pivotal condition, and \citet{blandhol2022}
recommend assessing it directly with a Ramsey RESET on $\E[Z\mid X]$, a diagnostic on
$(Z,X)$ alone. \citet{ssu2026} synthesize these diagnostics.

\paragraph{The object of study.} We study linear IV with a scalar instrument $Z$, a
scalar treatment $D$, and covariates entered linearly, under conditional-on-$X$
validity. The population 2SLS coefficient on $D$ is the Frisch--Waugh--Lovell ratio
$\beta_{2SLS}=\E[\Ztil Y]/\E[\Ztil D]$, where $\Ztil=Z-L(Z\mid1,X)$ is the
covariate-residualized instrument. Writing the residual as $\Ztil=(Z-e(X))+h(X)$, with
$e(x)=\E[Z\mid X=x]$ the instrument propensity and $h=e-\ell$ its component orthogonal
to the linear covariate span, delivers an exact decomposition,
\begin{equation}\label{eq:introdecomp}
\beta_{2SLS}
=\frac{\overbrace{\E[\Cov(Z,Y\mid X)]}^{\text{signal }S_Y}+\overbrace{\Cov(h(X),m_Y(X))}^{\text{contamination }\theta_Y}}
      {\underbrace{\E[\Cov(Z,D\mid X)]}_{\text{signal }S_D}+\underbrace{\Cov(h(X),m_D(X))}_{\text{contamination }\theta_D}},
\end{equation}
where $m_D(x)=\E[D\mid X=x]$ and $m_Y(x)=\E[Y\mid X=x]$. The signal is the average within-cell covariance of the instrument with the outcome and with the treatment, the estimand a fully interacted (saturated) specification would target. Each
contamination term is a covariance between the propensity curvature $h$ and a covariate
level function, and so vanishes exactly when $h$ is orthogonal to that level function,
which a propensity linear in $X$ ($h\equiv0$) or a linear level function delivers as a
sufficient case. The binary-treatment LATE model is the leading special case, where
$\Cov(Z,D\mid X)=e(X)\{1-e(X)\}p(X)$ with $p$ the conditional first stage and the ratio
$S_Y/S_D$ is a convex average of conditional LATEs. The contamination, the
strength--bias identity, the impossibility theorem, the directed test, the correction,
and the reportable share developed below are properties of \emph{linear IV with a linear
covariate specification}, and none of them turns on the support of $Z$ or $D$.

\paragraph{What the general target means.} The bias throughout is measured against
$\bar L$, which is a convex average of causal effects only in the binary case. The
quantity of interest is never $\bar L$ in isolation but the gap between two estimators an
analyst chooses between. The estimand $\beta_{2SLS}$ is
what applied work reports, and $\bar L$ is the estimand of the saturated specification
that \citet{blandhol2022} recommend as the fix, so $\beta_{2SLS}-\bar L$ is the price of
the linear covariate shortcut relative to the recommended alternative, well defined and
consequential whatever the support of $D$. The binary case adds a causal reading, since
there $\bar L$ is a convex average of conditional LATEs and the estimand can be shown to
leave the LATE hull, which is why we run that case in full. Outside it $\bar L$ is the
saturated-propensity linear-IV estimand rather than a causal average, and the results
read as detecting functional-form contamination of the reported estimand relative to
that benchmark. The impossibility theorem, the directed test, the correction, and the
share concern this gap and hold whatever $\bar L$ means.

\paragraph{Two questions.} Two questions this literature leaves open motivate the
paper, and they are the questions a practitioner actually faces. First, \emph{what does
the strength diagnostic tell you about all this?} The implicit bridge in applied work,
that a strong (conditional) first stage licenses a causal reading of the second stage,
has never been examined. We show it is false, and more sharply that \emph{no}
first-stage diagnostic can repair it. Second, \emph{is checking the propensity enough?}
A propensity-only check such as a RESET on $\E[Z\mid X]$ cannot distinguish a biased
design from an unbiased one with equally nonlinear propensity, because whether the
nonlinearity matters depends on the outcome. These two negative results, and the
constructive test, correction, and share that answer them, are what is new here.

Two concurrent papers are the closest antecedents. In independent work
\citet{krumme2026} reach a related conclusion, that a diagnostic
must be directed at the outcome rather than the propensity alone. Their procedure
extends testable-inequality implications for IV validity to a joint null of instrument
validity and correct specification, and it requires a strong first stage. Ours instead
conditions on validity, isolates the contamination as a single conditional-moment
restriction, ties that restriction to the first-stage strength through the identity
below, and remains valid under weak identification. \citet{ssu2026} synthesize the
strength and heterogeneity diagnostics a practitioner already reports. Our object is the
one restriction those diagnostics cannot see, which we make precise as the impossibility
result of Section~\ref{sec:collin}.

\paragraph{Relation to the literature.} The results build on three lines of work.
\citet{blandhol2022} show that linear covariates break the LATE reading unless the
propensity is linear in $X$, and recommend saturating the controls, and we take that
level dependence as the starting point. Three things build on it. The first is the
strength--bias identity, that the same propensity curvature which biases the estimand
also inflates the conditional $F$ the analyst reports as reassurance
(Proposition~\ref{prop:identity}), so the diagnostic co-moves with the bias rather than staying uninformative about it. The second is that the propensity check those authors recommend
flags curvature that is merely present, so a bias-targeting test is needed to separate
the curvature that reaches the estimate from the curvature that does not. The third is a
size-controlled test valid under weak identification, a correction whose identification
cost is known in advance, and a reportable share. \citet{sloczynski2022} studies the
sign of the LATE weights in the just-identified design, a question about the causal
reading of $\bar L$ itself, whereas our object is the gap $\beta_{2SLS}-\bar L$ between
the reported estimand and the saturated benchmark and whether any first-stage number can
see it. \citet{goldsmith2024} study contamination across \emph{several} treatments, a
channel that survives a saturated propensity, whereas ours is single-treatment
functional form and vanishes under saturation, so the two are distinct and compound in a
design with both. The new primitives are the strength--bias identity and the reportable
share, and the impossibility theorem follows from them.

\paragraph{The identity, and two failure modes.} Our organizing observation is a
feature of \eqref{eq:introdecomp} that has gone unremarked. The \emph{denominator}
$S_D+\theta_D=\E[\Ztil D]$ is exactly the first-stage covariance the conditional $F$ is
built from, so the same nuisance $\theta_D$ that biases the estimand inflates the
reported strength (Proposition~\ref{prop:identity}). Because the contamination sits in
both numerator and denominator, it produces two distinct failure modes. In
\textbf{mode~(a)}, \emph{manufactured strength}, the instrument is nearly collinear with
the covariates, the signal $S_D$ vanishes, $\theta_D$ dominates the denominator, and the
conditional $F$ is inflated by the very curvature that biases the estimand
(Section~\ref{sec:collin}). In \textbf{mode~(b)}, \emph{invisible outcome-side bias}, the
strength is honest, $\theta_D$ is small and the $F$ large, yet the estimand is
biased through the numerator term $\theta_Y$, which loads on the outcome where no
first-stage number can reach it.

\paragraph{Results.} \emph{First}, Proposition~\ref{prop:collin} characterizes
collinearity between instrument and covariates, the pure form of mode~(a). Along any sequence in
which $\Var(Z\mid X)\to0$, the signal vanishes while the contamination does not, and the
estimand converges to $\Cov(h^\ast,m_Y^\ast)/\Cov(h^\ast,m_D^\ast)$, a ratio of covariate
contrasts with no causal content. \emph{Second}, Corollary~\ref{cor:Ffails} shows the
conditional first-stage slope stays bounded away from zero along the same sequence, so
the conditional $F$ \emph{diverges} in $n$ rather than collapsing, and
Proposition~\ref{prop:impossible} sharpens this into an impossibility result: no
functional of the law of $(Z,D,X)$, the conditional $F$, effective $F$, and
Cragg--Donald statistic in particular, can detect the bias, because it loads on the
conditional law of the outcome. \emph{Third}, we give a \emph{directed}
conditional-moment test of the null of no contamination, built from reduced-form
regressions alone (Section~\ref{sec:test}). It has power exactly where the $F$ is silent,
and its size is controlled even under weak identification, in both the screening form
and a signal-cleared bias-targeting form whose validity holds uniformly across the
collinearity range, where a saturate and compare contrast collapses. \emph{Fourth}, the same identity yields a
bias-corrected estimator with an exact no-free-lunch cost, the identifying variation it
surrenders equalling the manufactured share, and a reportable \emph{contamination share}
$\varrho$ to be read beside the conditional $F$ (Section~\ref{sec:correction}). The share
measures the first-stage channel alone, and like the $F$ it stays silent about the
outcome-side bias, so a small $\varrho$ narrows but does not certify a design, a point
the health insurance application makes concrete.

Each result is stated for the general model, with the binary LATE case noted as it
arises. The general model is a scalar treatment of arbitrary support. A vector of treatments, where the contamination studied here compounds with the cross-treatment channel of \citet{goldsmith2024}, is the one boundary we leave to future work. Section~\ref{sec:mc} confirms them in Monte Carlo designs spanning a multivalued
(ordered) treatment, a continuous instrument with a continuous treatment, and an
over-identified design in which the screen stacks moments across instruments. In each, as $\Var(Z\mid X)\to0$ the estimand leaves the hull of the
within-cell IV slopes while the conditional $F$ diverges, and the directed test fires.
Section~\ref{sec:empirics} works one application in full and adds two further designs.
The flagship, the effect of spousal health insurance coverage on wives' labour supply
\citep{olson1998}, is a clean instance of mode~(b). A
conditional $F$ above $36{,}000$ and a contamination share of $0.09$ are both
reassuring, yet entering husband income linearly, the standard choice, more than
doubles the estimated effect once corrected, and the corrected estimator stays fully
identified. The slave trade instrument of \citet{nunn2011}, with a continuous instrument
and a continuous treatment, is the real data face of mode~(a), where correcting the
propensity collapses a first stage that looked powerful, and the 401(k) eligibility
design corroborates the mode~(b) pattern in the most studied strong-instrument
application in the literature.

Section~\ref{sec:framework} sets up the general model and proves the decomposition and
the strength--bias identity. Section~\ref{sec:collin} develops the collinearity limit,
the failure of the conditional $F$, and the impossibility theorem.
Section~\ref{sec:test} constructs the test. Section~\ref{sec:correction} presents the
bias-corrected estimator and the identification trade-off. Section~\ref{sec:mc} reports
the Monte Carlo evidence and Section~\ref{sec:empirics} the applications.
Section~\ref{sec:practice} gives recommendations for practice and
Section~\ref{sec:conclusion} concludes. Proofs are collected in the Appendix.
%=============================================================================
\section{Framework and the strength--bias identity}\label{sec:framework}
%=============================================================================

Let $Z\in\mathbb{R}$ be a scalar instrument, $D\in\mathbb{R}$ a scalar treatment,
$Y\in\mathbb{R}$ an outcome, and $X\in\mathbb{R}^{d_x}$ a vector of covariates. The
decomposition that follows uses only the two maintained assumptions below, a linear
specification and regularity, and no structure on treatment assignment, so in
particular no binariness or monotonicity. For the \emph{interpretation} of the signal
ratio as a convex average of local average treatment effects, and hence for the
binary-treatment specialization we return to throughout, we additionally invoke the
standard conditional-on-$X$ LATE assumptions. We state them first, so that later
results may reference them, but they are used only for that interpretation. In the
LATE model potential treatments are $D(0),D(1)$ and potential outcomes $Y(d,z)$, with
observed $D=D(Z)$ and $Y=Y(D,Z)$.

\begin{assumption}[Conditional validity]\label{as:valid}
$\{Y(d,z),D(z)\}_{d,z}\indep Z \mid X$, and $Y(d,1)=Y(d,0)\equiv Y(d)$
(exclusion), almost surely.
\end{assumption}

\begin{assumption}[Conditional monotonicity]\label{as:mono}
$D_i(1)\ge D_i(0)$ almost surely.
\end{assumption}

\begin{assumption}[Conditional relevance]\label{as:rel}
The within-cell covariance of instrument and treatment $\Cov(Z,D\mid X=x)$ is nonzero on a
set of $x$ of positive probability, and its average $S_D=\E[\Cov(Z,D\mid X)]$ is
nonzero, so that the saturated specification benchmark $\bar L=S_Y/S_D$ is well
defined. In the binary specialization of Remark~\ref{rem:binary} this covariance is
$e(X)\{1-e(X)\}p(X)$ with $p(x)=\E[D\mid Z=1,X=x]-\E[D\mid Z=0,X=x]$, and instrument
overlap $0<e(x)<1$ together with $p(x)>0$ on a positive-probability set makes $S_D$
strictly positive. Relevance confined to strata where $\Var(Z\mid X)=0$ would leave
$S_D=0$, and the estimand, though still defined through $\E[\Ztil D]=\theta_D$ whenever
$\theta_D\neq0$, would be pure contamination with no $\bar L$ to benchmark against.
\end{assumption}

\begin{assumption}[Maintained linear specification]\label{as:lin}
The analyst estimates the just-identified linear IV model
$Y=\beta D + X'\gamma + \varepsilon$, using $Z$ as the single excluded
instrument and $(1,X')$ as included instruments.
\end{assumption}

\begin{assumption}[Regularity]\label{as:reg}
$Y$, $D$, $Z$ and $X$ have finite second moments, and $\E[(1,X')'(1,X')]$ is
nonsingular. In addition, $\E[\Ztil D]\neq 0$, where $\Ztil\equiv Z-L(Z\mid 1,X)$ and
$L(\cdot\mid 1,X)$ denotes the population linear projection on $(1,X')$.
\end{assumption}

Under Assumption~\ref{as:lin}, the population 2SLS coefficient on $D$ admits the
familiar FWL / partialling-out form
\begin{equation}\label{eq:fwl}
\beta_{2SLS}=\frac{\E[\Ztil\,Y]}{\E[\Ztil\,D]} .
\end{equation}
Define the instrument propensity $e(x)=\Prob(Z=1\mid X=x)=\E[Z\mid X=x]$, its
linear projection $\ell(x)=L(Z\mid 1,X)(x)$, and the \emph{propensity curvature}
\begin{equation}\label{eq:h}
h(x)\;\equiv\; e(x)-\ell(x),\qquad \E[h(X)]=0,\quad \E[h(X)\,r(X)]=0
\ \text{for all affine } r.
\end{equation}
Let $m_D(x)=\E[D\mid X=x]$, $m_Y(x)=\E[Y\mid X=x]$, and let
$\late(x)=\E[Y(1)-Y(0)\mid D(1)>D(0),X=x]$ denote the conditional LATE.

\begin{lemma}[Contamination decomposition]\label{lem:decomp}
Under Assumptions~\ref{as:lin}--\ref{as:reg},
\begin{equation}\label{eq:decomp}
\beta_{2SLS}=\frac{S_Y+\theta_Y}{S_D+\theta_D},
\end{equation}
where the \emph{signal} terms are the averaged within-cell covariances
\[
S_Y=\E\!\big[\Cov(Z,Y\mid X)\big],\qquad
S_D=\E\!\big[\Cov(Z,D\mid X)\big],
\]
and the \emph{contamination} terms are the covariances of the propensity curvature
with the covariate level functions,
\[
\theta_Y=\Cov\!\big(h(X),m_Y(X)\big)=\E[h(X)Y],\quad
\theta_D=\Cov\!\big(h(X),m_D(X)\big)=\E[h(X)D].
\]
Consequently, writing $\bar L\equiv S_Y/S_D$ for the saturated specification estimand,
\begin{equation}\label{eq:bias}
\beta_{2SLS}-\bar L=\frac{\theta_Y-\bar L\,\theta_D}{S_D+\theta_D},
\qquad\text{with } S_D+\theta_D=\E[\Ztil D].
\end{equation}
\end{lemma}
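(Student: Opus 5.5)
The plan starts from the FWL form \eqref{eq:fwl}, which holds under Assumptions~\ref{as:lin} and~\ref{as:reg}. It then decomposes the residualized instrument as $\Ztil=Z-\ell(X)=(Z-e(X))+h(X)$, using $\ell(X)=L(Z\mid 1,X)$ and $h=e-\ell$. For $W\in\{Y,D\}$ this gives
\[
\E[\Ztil W]=\E[(Z-e(X))W]+\E[h(X)W],
\]
so numerator and denominator split in the same way. Both expectations are finite by Cauchy--Schwarz, since all variables have finite second moments and $e(X)$, $\ell(X)$, $h(X)$ are $L^2$ functions of $X$ (conditional expectation and projection are $L^2$ contractions).

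\emph{Signal term.} I would apply the tower property. Because $\E[Z-e(X)\mid X]=0$,
\[
\E[(Z-e(X))W\mid X]=\E[ZW\mid X]-e(X)\,m_W(X)=\Cov(Z,W\mid X).
\]
Taking expectations yields $\E[(Z-e(X))W]=S_W$.

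\emph{Contamination term.} Conditioning on $X$ gives $\E[h(X)W]=\E[h(X)m_W(X)]$. Since $\E[h(X)]=0$ by \eqref{eq:h}, with the constant as the affine $r$, this equals $\Cov(h(X),m_W(X))=\theta_W$. The same orthogonality could also be used to drop any affine part of $m_W$, though the statement does not need that.

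Combining the two pieces gives $\E[\Ztil Y]=S_Y+\theta_Y$ and $\E[\Ztil D]=S_D+\theta_D$. The second is nonzero by Assumption~\ref{as:reg}, so \eqref{eq:decomp} follows. For \eqref{eq:bias}, $\bar L$ is well defined because $S_D\neq0$ (Assumption~\ref{as:rel}, used here only so that $\bar L$ exists). Substituting $S_Y=\bar L S_D$ into the numerator,
\[
\beta_{2SLS}-\bar L=\frac{S_Y+\theta_Y-\bar L S_D-\bar L\theta_D}{S_D+\theta_D}=\frac{\theta_Y-\bar L\theta_D}{S_D+\theta_D}.
\]

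The proof contains no real obstacle; the only care needed is bookkeeping. First, the identity $\E[h(X)]=0$ must come from the projection including a constant. Second, the conditional covariance should be defined as $\E[ZW\mid X]-e(X)m_W(X)$, so that the tower step is justified under finite second moments alone. Third, the binary-instrument formula for $e$ is not needed anywhere: $e(x)=\E[Z\mid X=x]$ suffices, which keeps the lemma valid for general supports as the paper claims.
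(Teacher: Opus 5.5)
Your proposal is correct and follows the paper's proof essentially step for step. Both split $\Ztil=(Z-e(X))+h(X)$, apply the tower property to get the signal term, use $\E[h(X)]=0$ (from the constant in the projection) to get the contamination term, and then substitute $S_Y=\bar L S_D$. Your remark that $\bar L$ needs $S_D\neq0$, which comes from Assumption~\ref{as:rel} rather than the lemma's stated Assumptions~\ref{as:lin}--\ref{as:reg}, is a small point of care that the paper's proof leaves implicit.
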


The signal $S_Y/S_D$ is the estimand of a fully interacted (saturated)
specification. Writing $\tau(x)=\Cov(Z,Y\mid X=x)/\Cov(Z,D\mid X=x)$ for the
within-cell IV slope, $\bar L=\E[\Cov(Z,D\mid X)\,\tau(X)]/\E[\Cov(Z,D\mid X)]$ is a
weighted average of within-cell slopes with weights proportional to the within-cell
covariance of instrument and treatment. We say the \emph{sign condition} holds when
$\Cov(Z,D\mid X=x)$ has a single sign across $x$. Under it these weights are
non-negative, $\bar L$ is a genuine convex average, and it lies in the hull of the
within-cell slopes, the reading we use in the Monte Carlo of Section~\ref{sec:mc} and
in the applications of Section~\ref{sec:empirics}. Without it the weights can change
sign and $\bar L$ need not lie in that hull. In the binary specialization of
Remark~\ref{rem:binary} monotonicity (Assumption~\ref{as:mono}) delivers the sign
condition, and for a general treatment it is an additional restriction we flag where it
is used. The contamination terms are covariances between the propensity
curvature $h$ and the covariate level functions. By \eqref{eq:h} they equal the
covariance between the \emph{nonlinear} part of the propensity and the
\emph{nonlinear} part of the level function, so each vanishes exactly under
orthogonality, $\theta_D=0\iff h\perp m_D$ and $\theta_Y=0\iff h\perp m_Y$. A linear
propensity ($h\equiv0$) kills both terms and reproduces \citet{blandhol2022}, while a
linear level function kills only its own. Orthogonality can also hold with neither
function linear.

\begin{remark}[Binary-treatment LATE specialization]\label{rem:binary}
Suppose $Z,D\in\{0,1\}$ and Assumptions~\ref{as:valid}--\ref{as:rel} hold. Then
$\Cov(Z,D\mid X)=e(X)\{1-e(X)\}p(X)$ and, by the conditional Wald identity,
$\Cov(Z,Y\mid X)=e(X)\{1-e(X)\}p(X)\late(X)$. Hence $S_D=\E[e(1-e)p]$,
$S_Y=\E[e(1-e)p\,\late]$, the within-cell slope $\tau(x)$ is the conditional LATE, and
$\bar L=S_Y/S_D$ is a convex average of conditional LATEs, the
saturated specification estimand of \citet{blandhol2022}. Monotonicity is what
single-signs $\Cov(Z,D\mid X)$ and so makes the average convex, and it plays no role in
the decomposition itself.
\end{remark}

The convex-average reading of $\bar L$ rests on the sign condition alone, and we collect the primitive cases in which that condition holds, so that it reads as a bounded
scope statement rather than a caveat invoked in passing.

\begin{remark}[Sufficient conditions for the sign condition]\label{rem:sign}
The sign condition, that $\Cov(Z,D\mid X=x)$ has a single sign across $x$, holds in each of
the following cases. (i) Binary $Z$ and binary $D$ under conditional monotonicity
(Assumption~\ref{as:mono}), where $\Cov(Z,D\mid X)=e(X)\{1-e(X)\}p(X)$ and overlap with
$p(x)\ge0$ fixes the sign, the case of Remark~\ref{rem:binary}. (ii) An ordered treatment
$D=\sum_j\mathbf 1\{D\ge j\}$ under stagewise monotonicity, where each threshold crossing is
monotone in $Z$ and $\Cov(Z,D\mid X)$ is a sum of nonnegative stage covariances, the
ordered design of Section~\ref{sec:mc}. (iii) A continuous $Z$ and $D$ with a within-cell
first stage of constant sign, $\partial_z\E[D\mid Z=z,X=x]$ of one sign in $(z,x)$, which
holds for a monotone structural first stage. Outside these cases the weights can change
sign and $\bar L$ need not lie in the hull of within-cell slopes. The condition governs the
\emph{interpretation} of $\bar L$ only. The decomposition of Lemma~\ref{lem:decomp}, the
strength--bias identity, the impossibility result, the directed test, and the correction
hold without it.
\end{remark}

The signal ratio $\bar L$ is the object a saturated specification targets, and we take
it as the benchmark. Our subject is a feature of the decomposition that has gone
unremarked, which we state as the paper's organizing result.

\begin{proposition}[Strength--bias identity]\label{prop:identity}
The denominator of the 2SLS estimand is the population first-stage covariance
$\E[\Ztil D]=S_D+\theta_D$ from which the Sanderson--Windmeijer conditional $F$ is
constructed. The contamination $\theta_D$ is the common term. It enters the
denominator of the bias $\beta_{2SLS}-\bar L=(\theta_Y-\bar L\theta_D)/(S_D+\theta_D)$
and, being a signed covariance, inflates the reported first-stage covariance when
$\theta_D>0$ and deflates it when $\theta_D<0$. The displacement of $\beta_{2SLS}$
from $\bar L$ requires the numerator $\theta_Y-\bar L\theta_D\neq0$, so strength and
interpretability need not move together. The point of the identity is the weaker but
consequential one that they \emph{can} be driven by the same nuisance $\theta_D$, and
under collinearity (Section~\ref{sec:collin}) they are.
\end{proposition}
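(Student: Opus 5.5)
The proposition is largely a restatement of Lemma~\ref{lem:decomp} together with a standard fact about the Sanderson--Windmeijer construction, so the plan is to verify its three substantive claims in turn, taking the decomposition as given.

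\emph{Step 1 (denominator is the first-stage covariance).} By Frisch--Waugh--Lovell, the population first-stage coefficient on the excluded instrument in the regression of $D$ on $(1,X',Z)$ is $\pi=\E[\Ztil D]/\E[\Ztil^2]$. The conditional $F$ of \citet{sanderson2016} is built from the partialled-out first stage, which in the just-identified scalar case is $\hat\pi^2$ divided by its estimated variance. Its population noncentrality is therefore $n\,\E[\Ztil D]^2/(\E[\Ztil^2]\sigma^2_v)$, with $\sigma_v^2$ the first-stage error variance. So the covariance entering the $F$ is exactly $\E[\Ztil D]$, the denominator of \eqref{eq:fwl}. To get the split $S_D+\theta_D$, I would reuse the algebra of Lemma~\ref{lem:decomp}. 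Write $\Ztil=(Z-e(X))+h(X)$. Then the law of iterated expectations gives $\E[(Z-e(X))D]=\E[\Cov(Z,D\mid X)]=S_D$. Separately, $\E[h(X)D]=\E[h(X)m_D(X)]$, and this equals $\Cov(h,m_D)=\theta_D$ because $\E[h]=0$ by \eqref{eq:h}.

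\emph{Step 2 (common term and sign).} I would substitute into \eqref{eq:bias} to exhibit $\theta_D$ in both the numerator $\theta_Y-\bar L\theta_D$ and the denominator $S_D+\theta_D$. The inflation and deflation claim concerns the signed covariance relative to the signal. Taking $S_D>0$, which is the normalization under the sign condition, $|S_D+\theta_D|>|S_D|$ when $\theta_D>0$, and the reverse holds for small negative $\theta_D$. This is immediate because $\theta_D$ is additive. I would state it in those terms and note that the magnitude of the $F$ is monotone in $\E[\Ztil D]^2$ once $\E[\Ztil^2]$ and $\sigma_v^2$ are held fixed.

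\emph{Step 3 (need not move together, but can).} Displacement requires $\theta_Y\neq\bar L\theta_D$, so I would give two cases. First, take $\theta_D\neq0$ and set $\theta_Y=\bar L\theta_D$, for instance with $m_Y=\bar L m_D$ plus a function orthogonal to $h$. This gives inflated strength with no bias. Second, take $\theta_D=0$ and $\theta_Y\neq0$, which gives honest strength with bias. Finally, for the co-movement claim I would point forward to Proposition~\ref{prop:collin}. As $\Var(Z\mid X)\to0$, $S_D\to0$ while $\theta_D$ stays bounded away from zero, so $\theta_D$ drives both the $F$ and the bias.

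The main obstacle is making the inflation statement precise. The $F$ also depends on $\E[\Ztil^2]=\E[\Var(Z\mid X)]+\E[h^2]$, which itself contains the curvature, so \emph{inflates} must be phrased for the covariance, as the proposition does, rather than for the $F$ itself. I would state explicitly that the claim is about $\E[\Ztil D]$.
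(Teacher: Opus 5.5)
Your proposal is correct and follows the paper's proof. Like your Step~1, the paper writes $\Ztil=(Z-e(X))+h(X)$, uses $\E[(Z-e(X))D\mid X]=\Cov(Z,D\mid X)$ and $\E[h(X)D]=\Cov(h,m_D)$, and reads off $\E[\Ztil D]=S_D+\theta_D$ as the denominator in Lemma~\ref{lem:decomp}. Your Steps~2 and~3 spell out interpretive claims that the paper treats as immediate, and they are consistent with it: the examples with $\theta_Y=\bar L\theta_D$ and with $\theta_D=0\neq\theta_Y$, the caveat that $|\E[\Ztil D]|$ shrinks only for $-2S_D<\theta_D<0$, and the caveat that the $F$ also depends on $\E[\Ztil^2]$.
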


\begin{proof}
This is immediate from Lemma~\ref{lem:decomp}. The denominator of \eqref{eq:decomp} is
$S_D+\theta_D$, and $\E[\Ztil D]=\E[(h(X)+(Z-e(X)))D]=\theta_D+S_D$ since
$\E[(Z-e(X))D\mid X]=\Cov(Z,D\mid X)$ integrates to $S_D$.
\end{proof}

\noindent The identity organizes the two failure modes. Through the
\emph{denominator} the contamination inflates the reported strength, dominant under
collinearity (mode~a, Section~\ref{sec:collin}). Through the \emph{numerator}, via
$\theta_Y$, it biases the estimand even when the reported strength is honest, a bias
that loads entirely on the outcome (mode~b, the health insurance design of
Section~\ref{sec:hi}).

\subsection*{The leading binary case in one pass}
Before developing the results for a treatment of arbitrary support, we run the whole
argument once in the binary-treatment LATE model of Remark~\ref{rem:binary}, where every
object has a causal reading. It is the case most applied designs occupy, and it fixes the
intuition that the general theory then extends.

Here the benchmark $\bar L=S_Y/S_D$ is a genuine convex average of conditional
LATEs, so a saturated specification recovers a causal parameter and the only question is
what the 2SLS estimate does to it. The bias is
$\beta_{2SLS}-\bar L=(\theta_Y-\bar L\theta_D)/(S_D+\theta_D)$, and both contamination
terms are covariances of the propensity curvature $h$ with a covariate level function.
The strength--bias identity (Proposition~\ref{prop:identity}) is the observation that the
denominator $S_D+\theta_D$ is exactly the first-stage covariance the conditional $F$ is
built from, so the nuisance $\theta_D$ that shifts the estimand is the same one that moves
the reported strength.

That single fact produces the two modes. In mode~a the instrument is nearly a nonlinear
function of the covariates, the signal $S_D$ vanishes, and $\theta_D$ manufactures the
conditional $F$ out of the very curvature that biases the estimand. In mode~b the strength
is honest, yet $\theta_Y$ displaces the estimand through the outcome, where no first-stage
number can reach, so here the estimand can leave the hull of conditional LATEs while the
$F$ stays large (Section~\ref{sec:mcbinary}). The impossibility result, the directed test,
and the correction previewed above all hold in this binary model with a causal reading of
$\bar L$, and the remainder of the paper establishes them for a scalar treatment of
arbitrary support, with the binary case as the special instance in which $\bar L$ is a
convex average of LATEs.

%=============================================================================
\section{Collinearity and the failure of the first-stage diagnostic}\label{sec:collin}
%=============================================================================

We now formalize what happens as the instrument becomes collinear with the
covariates. Intuitively, ``$Z$ is collinear with $X$'' means $Z$ is close to a
(possibly nonlinear) function of $X$, i.e.\ the conditional variance
$\Var(Z\mid X)$ is small. In the binary case this variance equals $e(X)\{1-e(X)\}$. We
index a sequence of data-generating processes (DGPs) by $t$.

\begin{assumption}[Collinearity sequence]\label{as:seq}
Along $t=1,2,\dots$ the following hold. (i) $\Var_t(Z\mid X)\to 0$ for almost every
$x$ and $\E[\Var_t(Z\mid X)]\to 0$. (ii) The propensity curvature converges,
$h_t\to h^\ast$ in $L^2$, with $h^\ast$ non-degenerate. (iii) The within-cell slope
and the level functions converge in $L^2$ to limits $\tau^\ast,m_D^\ast,m_Y^\ast$,
with $\sup_t\E[\Var_t(D\mid X)]<\infty$ and $\{\Var_t(Y\mid X)\}$ uniformly
integrable. (iv) $\Cov(h^\ast,m_D^\ast)\neq 0$.
\end{assumption}

Part (i) is the collinearity limit. Part (ii) says the propensity retains curvature as
$Z$ is pushed toward a deterministic function of $X$, and a nonlinear index is the
leading example, of which Section~\ref{sec:mc} uses several. Part (iv) is the genericity
condition that makes the limit interesting. It holds whenever the nonlinear part of
$m_D=\E[D\mid X]$ is not orthogonal to $h^\ast$, and it rules out a knife-edge
cancellation of the propensity curvature by curvature in the baseline conditional mean
rather than following from relevance alone.

\begin{proposition}[Collinearity contamination limit]\label{prop:collin}
Under Assumptions~\ref{as:lin}--\ref{as:reg} and \ref{as:seq}, the signal terms
vanish, $S_{D,t}\to 0$ and $S_{Y,t}\to 0$, while the contamination terms converge to
$\Cov(h^\ast,m_D^\ast)\neq0$ and $\Cov(h^\ast,m_Y^\ast)$. Hence
\begin{equation}\label{eq:limit}
\beta_{2SLS,t}\;\longrightarrow\;
\frac{\Cov(h^\ast,m_Y^\ast)}{\Cov(h^\ast,m_D^\ast)}.
\end{equation}
The limit is a ratio of covariate-level contrasts and carries no causal content. In
the binary LATE specialization of Remark~\ref{rem:binary} it may lie outside
$[\operatorname*{ess\,inf}_x \late^\ast(x),\ \operatorname*{ess\,sup}_x \late^\ast(x)]$
and take the opposite sign of every $\late^\ast(x)$.
\end{proposition}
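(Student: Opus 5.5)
The proof starts from the decomposition of Lemma~\ref{lem:decomp}, $\beta_{2SLS,t}=(S_{Y,t}+\theta_{Y,t})/(S_{D,t}+\theta_{D,t})$, and treats the four terms separately. The ratio then converges by continuity, because the limiting denominator $\Cov(h^\ast,m_D^\ast)$ is nonzero by Assumption~\ref{as:seq}(iv).

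\emph{Signal terms.} By conditional Cauchy--Schwarz,
\[
|\Cov_t(Z,D\mid X)|\le \Var_t(Z\mid X)^{1/2}\,\Var_t(D\mid X)^{1/2}.
\]
Taking expectations and applying Cauchy--Schwarz again gives
\[
|S_{D,t}|\le \E[\Var_t(Z\mid X)]^{1/2}\,\E[\Var_t(D\mid X)]^{1/2}.
\]
The first factor tends to zero by (i) and the second is bounded by (iii), so $S_{D,t}\to0$.

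For $S_{Y,t}$ the same bound applies, but (iii) gives only uniform integrability of $\Var_t(Y\mid X)$, not a uniform bound on its mean. I would therefore truncate. Split on the event $\{\Var_t(Y\mid X)>K\}$. On its complement the bound is $\sqrt K\,\E[\Var_t(Z\mid X)]^{1/2}\to0$. On the event itself, Cauchy--Schwarz gives
\[
\E\!\left[\Var_t(Z\mid X)\,\mathbf 1\{\Var_t(Y\mid X)>K\}\right]^{1/2}\,\E\!\left[\Var_t(Y\mid X)\,\mathbf 1\{\Var_t(Y\mid X)>K\}\right]^{1/2}.
\]
The first factor tends to zero as $t\to\infty$. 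The second is small uniformly in $t$ once $K$ is large, by uniform integrability. Sending $t\to\infty$ first and then $K\to\infty$ gives $S_{Y,t}\to0$. This bookkeeping under UI is the only delicate point, and the truncation should settle it.

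\emph{Contamination terms.} By \eqref{eq:h}, $\E[h_t]=0$, so $\theta_{D,t}=\E[h_t(X)m_{D,t}(X)]$. Write
\[
h_tm_{D,t}-h^\ast m_D^\ast=(h_t-h^\ast)m_{D,t}+h^\ast(m_{D,t}-m_D^\ast).
\]
The first piece is bounded by $\|h_t-h^\ast\|_2\|m_{D,t}\|_2$, where $\|m_{D,t}\|_2$ is bounded because $m_{D,t}$ converges in $L^2$. The second piece is bounded by $\|h^\ast\|_2\|m_{D,t}-m_D^\ast\|_2$. Both vanish by (ii) and (iii), so $\theta_{D,t}\to\E[h^\ast m_D^\ast]$. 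The $L^2$ limit of mean-zero functions is mean zero, so $\E[h^\ast]=0$ and $\E[h^\ast m_D^\ast]=\Cov(h^\ast,m_D^\ast)$. The same argument gives $\theta_{Y,t}\to\Cov(h^\ast,m_Y^\ast)$. Combining the signal and contamination limits yields \eqref{eq:limit}; Assumption~\ref{as:reg} guarantees each denominator is nonzero along the sequence.

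\emph{The no-causal-content claim.} I would prove this by construction in the binary specialization of Remark~\ref{rem:binary}, since the limit depends on $m_Y^\ast$ and $m_D^\ast$ but not on $\late^\ast$. Take $X$ scalar and uniform, and let $e_t(X)$ be a nonlinear index pushed toward $\{0,1\}$, for example $e_t(X)=\Phi(t\,g(X))$ with $g$ nonlinear. Its curvature limit is $h^\ast=\mathbf 1\{g>0\}-L(\mathbf 1\{g>0\}\mid1,X)$, which is non-degenerate. Set $\late(x)\equiv1>0$. Choose baseline means $\E[D(0)\mid X]$ and $\E[Y(0)\mid X]$ with nonlinear parts such that $\Cov(h^\ast,m_D^\ast)>0$ and $\Cov(h^\ast,m_Y^\ast)<0$. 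This is possible because the level functions are free apart from the LATE structure. The limit ratio is then negative, so it lies outside the LATE hull and has the opposite sign of every $\late^\ast(x)$.
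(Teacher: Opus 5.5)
Your argument is correct, and its core matches the paper's. You decompose by Lemma~\ref{lem:decomp}, bound $S_{D,t}$ by Cauchy--Schwarz inside and outside the conditional expectation, pass $\theta_{D,t}$ and $\theta_{Y,t}$ to their limits by $L^2$ continuity, and conclude by continuity of the ratio. You depart from the paper in two places, and both work in your favour.

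\textbf{The $S_{Y,t}$ step.} The paper uses Vitali's theorem. It justifies uniform integrability of $\sqrt{\Var_t(Z\mid X)\Var_t(Y\mid X)}$ by boundedness of $\Var_t(Z\mid X)$. That holds for binary $Z$ but is not part of Assumption~\ref{as:seq}. Your truncation needs only $\E[\Var_t(Z\mid X)]\to0$, so it is more general. Your stated reason for truncating is wrong, though. On a probability space, uniform integrability of $\{\Var_t(Y\mid X)\}$ already implies $\sup_t\E[\Var_t(Y\mid X)]<\infty$. So the same two-line bound you used for $S_{D,t}$ gives $S_{Y,t}\to0$ directly, and the truncation, though valid, is superfluous.

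\textbf{The final existence claim.} For this part the paper's proof only cites the binary Monte Carlo of Section~\ref{sec:mcbinary}. There the limit is near $9$ while every $\late(x)\in[0.5,1.5]$. That shows a limit outside the LATE hull, but it never exhibits one of opposite sign. Your analytic construction covers both parts. It moves the outcome level in the free direction $h^\ast$, exactly as in the proof of Proposition~\ref{prop:impossible}.

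You should make the construction explicit:
\begin{itemize}
\item Take $g$ to change sign with $\Prob(g(X)=0)=0$, so that $h^\ast\neq0$.
\item With a constant unit effect, $m_Y=\E[Y(0)\mid X]+m_D$.
\item With complier share $p$ and $\E[D(0)\mid X]$ both constant, $\Cov(h^\ast,m_D^\ast)=p\,\E[(h^\ast)^2]>0$.
\item Setting $\E[Y(0)\mid X]=-c\,h^\ast(X)$ with $c>p$ gives $\Cov(h^\ast,m_Y^\ast)=(p-c)\,\E[(h^\ast)^2]<0$.
\end{itemize}
The limit is then $(p-c)/p<0$ while $\late\equiv1$, which establishes both the outside-the-hull and the opposite-sign claims.
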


The reduced-form intuition is transparent. As $\Var(Z\mid X)\to0$, the
within-cell experimental variation in the instrument disappears. All
that survives in $\Ztil=Z-\ell(X)$ is the projection error $h(X)$, which is a
deterministic function of $X$. The estimator then compares $Y$ and $D$ across
covariate cells using weights $h(X)$, exactly the comparison an IV design is
meant to avoid. The next result is the payoff for practice.

\begin{corollary}[The conditional $F$ does not warn]\label{cor:Ffails}
The population Sanderson--Windmeijer conditional first-stage slope is
$\pi_t=\E[\Ztil_t D]/\Var(\Ztil_t)=(S_{D,t}+\theta_{D,t})/\Var(\Ztil_t)$, and
under Assumption~\ref{as:seq},
\[
\pi_t\;\longrightarrow\;\frac{\Cov(h^\ast,m_D^\ast)}{\E[(h^\ast)^2]}\neq 0 .
\]
Consequently the conditional first-stage $F$ statistic, whose population
counterpart is proportional to $n\,\pi_t^2\,\Var(\Ztil_t)/\sigma^2_{v}$, is
bounded away from zero and diverges with the sample size along the sequence,
even though by Proposition~\ref{prop:collin} the estimand is pure contamination.
A large conditional $F$ therefore places no bound on the contamination bias
\eqref{eq:bias}.
\end{corollary}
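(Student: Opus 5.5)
The corollary has two parts: the first-stage slope converges to a nonzero limit, and the $F$ statistic therefore diverges. I would prove the slope limit by handling numerator and denominator of $\pi_t=\E[\Ztil_t D]/\Var(\Ztil_t)$ separately, reusing Proposition~\ref{prop:collin} for the numerator.

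\emph{Numerator.} By Proposition~\ref{prop:identity}, $\E[\Ztil_t D]=S_{D,t}+\theta_{D,t}$. Proposition~\ref{prop:collin} gives $S_{D,t}\to0$ and $\theta_{D,t}\to\Cov(h^\ast,m_D^\ast)$. If one wants this self-contained, the argument is short:
\begin{itemize}
\item Conditional Cauchy--Schwarz gives $|S_{D,t}|\le \E[\Var_t(Z\mid X)]^{1/2}\,\E[\Var_t(D\mid X)]^{1/2}$, which tends to zero by Assumption~\ref{as:seq}(i) and the uniform bound in (iii).
\item $\theta_{D,t}=\E[h_t m_{D,t}]$ converges because $h_t\to h^\ast$ and $m_{D,t}\to m_D^\ast$ in $L^2$, and the inner product is jointly continuous in $L^2$.
\end{itemize}

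\emph{Denominator.} Since $\Ztil_t=(Z-e_t(X))+h_t(X)$ and the two pieces are uncorrelated (the first has conditional mean zero given $X$), we have $\Var(\Ztil_t)=\E[\Var_t(Z\mid X)]+\E[h_t^2]$. This uses $\E[\Ztil_t]=0$, which holds because the projection includes a constant. The first term tends to zero by (i). The second tends to $\E[(h^\ast)^2]$ by $L^2$ convergence, and this limit is strictly positive because $h^\ast$ is non-degenerate and has mean zero. The ratio therefore converges to $\Cov(h^\ast,m_D^\ast)/\E[(h^\ast)^2]$, which is nonzero by (iv).

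\emph{The $F$ statistic.} Its population analogue is $n\pi_t^2\Var(\Ztil_t)/\sigma_v^2$. Along the sequence, $\pi_t^2\Var(\Ztil_t)\to\Cov(h^\ast,m_D^\ast)^2/\E[(h^\ast)^2]>0$. The residual variance $\sigma_{v,t}^2$ is bounded above, since it is at most $\Var(D)$, which is finite by the regularity moments. We need this bound to hold uniformly in $t$. I would obtain it from $\Var(D)=\E[\Var_t(D\mid X)]+\Var(m_{D,t})$: the first term is bounded by (iii), and the second is bounded by $L^2$ convergence of $m_{D,t}$. The noncentrality per observation is therefore bounded away from zero, and the $F$ grows linearly in $n$. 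Combining this with Proposition~\ref{prop:collin}, the $F$ diverges even though the estimand converges to pure contamination. Because the limiting $\beta$ can be arbitrary relative to $\bar L$, no threshold on $F$ bounds \eqref{eq:bias}.

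\emph{Expected obstacle.} The only delicate point is the uniform-in-$t$ upper bound on $\sigma_{v,t}^2$ and the joint limit in $(t,n)$. I would make this precise by stating the divergence for each fixed large $t$ as $n\to\infty$, with a constant in $t$ that is uniformly bounded below, rather than attempting a formal double-limit theorem.
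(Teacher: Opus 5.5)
Your proposal is correct and follows essentially the same route as the paper's proof. You get the numerator $\E[\Ztil_t D]=S_{D,t}+\theta_{D,t}$ from the decomposition and the limits of Proposition~\ref{prop:collin}, the denominator from $\Var(\Ztil_t)=\E[\Var_t(Z\mid X)]+\E[h_t^2]$, and then use the population $F$ analogue with a bounded $\sigma_v^2$, exactly as the paper does. Your extra steps fill in points the paper only asserts: the strict positivity of $\E[(h^\ast)^2]$, the uniform-in-$t$ bound on $\sigma_{v,t}^2$ via $\Var(D)=\E[\Var_t(D\mid X)]+\Var(m_{D,t})$, and the uniform lower bound that handles the joint $(t,n)$ limit.
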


Corollary~\ref{cor:Ffails} is the formal statement of the applied warning. The
conditional $F$ of \citet{sanderson2016} correctly certifies that $\Ztil$ is a
strong predictor of $D$. Under collinearity, however, $\Ztil$ predicts $D$
\emph{through the contamination channel} $h(X)$, not through the within-cell signal
$\Cov(Z,D\mid X)$. Strength and interpretability are distinct, and can even be driven by the same nuisance.

Two qualifications keep the warning honest and locate where it bites. First, the
collinearity that drives mode~(a) is itself visible without the outcome. As
$\Var(Z\mid X)\to0$ the instrument is nearly a function of the covariates, so a
regression of $Z$ on a flexible basis in $X$ returns a fit approaching one, and the
within-cell signal announces itself as weak even while the conditional $F$ is large. A
diligent analyst can therefore see manufactured strength coming in mode~(a) from the
fit of the instrument on the covariates, and the sharp content of Corollary~\ref{cor:Ffails} is that
the $F$ points the wrong way exactly when that fit is what should be checked. Second, and
by consequence, the impossibility bites hardest in mode~(b), where the signal is honest,
the fit of the instrument on the covariates is unremarkable, and only the outcome reveals the bias.
The bias-targeting test of Section~\ref{sec:test} inherits this division. Its power
against a fixed alternative is full when the signal is bounded away from zero, the
mode~(b) regime, and declines as the signal vanishes, the mode~(a) regime, which is the
Anderson--Rubin trade-off of Proposition~\ref{prop:tradeoff}. The two do not leave a gap,
because the regime in which the test loses power is the regime the fit of the instrument on the covariates already flags.

The following result shows the failure is not special to the conditional $F$ but
is shared by \emph{every} strength diagnostic, and explains why. Such diagnostics
are computed from the first stage, whereas the bias lives in the outcome.

\begin{proposition}[Strength diagnostics cannot detect contamination]
\label{prop:impossible}
Let $T$ be any diagnostic that is a measurable functional of the joint
distribution of $(Z,D,X)$, in particular any first-stage relevance or strength
statistic, including the conditional $F$, the effective $F$ of
\citet{montielolea2013}, and the Cragg--Donald statistic. Fix any such law with
nonzero curvature, $h\neq0$. Then $T$ is uninformative about the contamination
bias. There exist data-generating processes with that common joint distribution of
$(Z,D,X)$, hence identical values of $T$, but arbitrarily different biases
$\beta_{2SLS}-\bar L$. Detecting the bias requires the conditional distribution of
$Y$. The restriction to $h\neq0$ is not a weakening. When $h\equiv0$ the first stage
already certifies $\theta_D=\theta_Y=0$ and hence no contamination, by
Lemma~\ref{lem:decomp}, so the interesting case is exactly the one $T$ cannot
resolve.
\end{proposition}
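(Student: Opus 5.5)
Fix a law $P_{ZDX}$ of $(Z,D,X)$ satisfying Assumptions~\ref{as:lin}--\ref{as:reg} with $h\neq0$. Every quantity in the decomposition of Lemma~\ref{lem:decomp} that involves only $(Z,D,X)$ is then fixed: $e$, $\ell$, $h$, $\Ztil$, $S_D$, $\theta_D$ and $\E[\Ztil D]$. So is any $T$, by measurability. The plan is to construct outcome processes $Y$ on top of this common law, that is, conditional laws of $Y$ given $(Z,D,X)$, that move the numerator $\theta_Y-\bar L\theta_D$ of \eqref{eq:bias} arbitrarily while leaving $P_{ZDX}$ untouched. Since the denominator $S_D+\theta_D=\E[\Ztil D]\neq0$ is frozen, the bias then ranges over all of $\mathbb{R}$.

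\textbf{Construction.} Start from any admissible outcome $Y_0$ with finite second moment, for instance $Y_0=\beta_0 D$. This gives signal $S_{Y_0}$ and contamination $\theta_{Y_0}$. Set
\[
Y_c=Y_0+c\,g(X)+\eta,\qquad c\in\mathbb{R},
\]
with $g\in L^2(P_X)$ and $\eta$ independent noise with mean zero. Adding a function of $X$ alone does not change $\Cov(Z,Y\mid X)$, so $S_{Y_c}=S_{Y_0}$ and $\bar L$ is unchanged. It does change the contamination: $\theta_{Y_c}=\theta_{Y_0}+c\,\E[h(X)g(X)]$. Choosing $g=h$, which is nonzero in $L^2$ by hypothesis, gives $\E[hg]=\E[h^2]>0$. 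The bias is therefore
\[
\beta_{2SLS}-\bar L=\frac{\theta_{Y_0}-\bar L\theta_D+c\,\E[h^2]}{S_D+\theta_D},
\]
which is affine in $c$ with nonzero slope and so attains every real value. In particular $c^\ast=-(\theta_{Y_0}-\bar L\theta_D)/\E[h^2]$ gives zero bias, and other values of $c$ give arbitrarily large bias, all under the same $P_{ZDX}$.

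\textbf{Consistency with the causal model.} For the binary LATE reading of Remark~\ref{rem:binary}, the added term should be realised structurally. I would set $Y_c(d)=Y_0(d)+c\,h(X)$ for both $d$. This leaves treatment effects, and hence each $\late(x)$, unchanged. It also preserves Assumption~\ref{as:valid}, because $h(X)$ is $X$-measurable, so conditional independence given $X$ and the exclusion restriction carry over. Finite second moments follow from $h\in L^2$.

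\textbf{Main obstacle.} The main thing to check is that varying $Y$ genuinely cannot feed back into $T$. This holds because $T$ depends on the law of $(Z,D,X)$ only, and the construction specifies a conditional law of $Y$ given $(Z,D,X)$ without touching the marginal of $(Z,D,X)$. Finally, for the case $h\equiv0$, I would note that the closing remark follows directly from $\theta_D=\E[hD]=0$ and $\theta_Y=\E[hY]=0$ in Lemma~\ref{lem:decomp}.
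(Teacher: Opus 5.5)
Your proof is correct and follows the paper's own argument. The paper likewise fixes the law of $(Z,D,X)$, which pins down $S_D$, $\theta_D$, $\E[\Ztil D]$ and hence $T$. It then perturbs the outcome level $m_Y$ in a direction orthogonal to $(1,X')$, which leaves the within-cell covariances (so $S_Y$ and $\bar L$) untouched while moving $\theta_Y=\Cov(h,m_Y)$. Your explicit choice of perturbation $c\,h(X)$, with the bias affine in $c$ and slope $\E[h^2]/\E[\Ztil D]\neq0$, just makes that step concrete. Both versions also tacitly need $S_D\neq0$ (Assumption~\ref{as:rel}) so that $\bar L$ is defined.
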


Proposition~\ref{prop:impossible} is the deep reason the applied bridge fails. No
refinement of the first-stage report, however sophisticated, can be made to carry
information about the contamination. By \eqref{eq:bias} the bias depends on
$\theta_Y=\E[h(X)Y]$ and on $\bar L=S_Y/S_D$, both functionals of the outcome that
are left completely free once the law of $(Z,D,X)$ is fixed. This is an instance
of the \emph{level dependence} of \citet{blandhol2022}. Concretely, the free
direction is the outcome level of the units the instrument does not move. Shifting the
conditional outcome mean $m_Y$ in the curvature direction $h$ changes $\theta_Y$, and
hence the bias, while leaving the law of $(Z,D,X)$, the within-cell slopes, and every
strength diagnostic unchanged. A strength statistic summarizes how the instrument moves
the treatment, whereas the bias lives with the level of the units it does not move, so a
valid diagnostic must use $Y$. One thing the first stage \emph{can} certify is the absence of
curvature: by Lemma~\ref{lem:decomp}, $h\equiv0$, a propensity linear in $X$ and
testable from $(Z,X)$ alone by a RESET, forces $\theta_D=\theta_Y=0$ and hence no
contamination. The impossibility is therefore conditional on nonzero curvature. Once
$h\neq0$, no first-stage functional can say whether that curvature is harmful,
because harm runs through $\theta_Y$ and $\bar L$, which only the outcome pins down.
This is the precise content of the blanket statements elsewhere that a strong first
stage cannot license a causal reading. This cuts in the opposite direction to the
familiar impossibility that instrument \emph{validity} cannot be confirmed even
with
infinite data \citep{ssu2026}. There the outcome is of no help, while here it is
exactly what a diagnostic requires. The observation both motivates and disciplines
the test of the next section, whose central moment $\theta_Y=\E[h(X)Y]$ is
precisely such an outcome functional. It also implies, a fortiori, that a Ramsey
RESET applied to the first stage, a functional of $(Z,X)$ alone, cannot
distinguish a biased design from an unbiased one. We confirm this in
Section~\ref{sec:mc}.

%=============================================================================
\section{A directed test for contamination}\label{sec:test}
%=============================================================================

\subsection{The null and its interpretation}

By Lemma~\ref{lem:decomp}, $\beta_{2SLS}=\bar L$ whenever $\theta_D=\theta_Y=0$.
We therefore test
\begin{equation}\label{eq:null}
H_0:\ \theta_D=0\ \text{ and }\ \theta_Y=0
\qquad\text{against}\qquad
H_1:\ (\theta_D,\theta_Y)\neq(0,0).
\end{equation}
Under $H_0$ the 2SLS estimand retains its interpretation as a convex average of LATEs. Because $\theta_D=\E[h(X)D]$ and $\theta_Y=\E[h(X)Y]$ depend only
on the propensity curvature and on observables, $H_0$ is a statement about the
\emph{reduced form}. The consequential misspecification is nonlinearity of the
instrument propensity $\E[Z\mid X]$ in the directions that also load on $D$ and
$Y$. This is what distinguishes the test from a Ramsey RESET applied to the first
stage $D\sim(Z,X)$. RESET rejects for propensity curvature in \emph{any}
direction, including curvature orthogonal to the outcome that leaves
\eqref{eq:bias} unaffected. This screening statistic instead probes only curvature
that loads on $D$ and $Y$, and does so without touching an endogenous regressor. It
is still a \emph{sufficient}-condition test. It can reject when the loadings satisfy
$\theta_Y=\bar L\theta_D$ and the estimand is in fact unbiased (harmless curvature),
which is why Section~\ref{sec:exact} adds the necessary-and-sufficient
bias-targeting variant that probes the single direction $Y-\bar L D$.

\subsection{Estimator and limiting distribution}

Let $b(X)=(b_1(X),\dots,b_K(X))'$ be a vector of basis functions (for example
polynomials or splines) orthogonalized against $(1,X')$ in the population, so that
$\E[b(X)(1,X')]=0$. These span the curvature directions to be probed. Model the
propensity as $e(X)=\alpha_0+X'\alpha_1+b(X)'\delta+\rho(X)$ with
$\E[\rho(X)b(X)]=0$, so that the component of the curvature captured by the basis
is $h_K(X)=b(X)'\delta$. Given an i.i.d.\ sample $\{(Y_i,D_i,Z_i,X_i)\}_{i=1}^n$,
the test is computed in three steps.

\begin{enumerate}
\item[(i)] Regress $Z_i$ on $(1,X_i',b(X_i)')$ by OLS and form
$\hat h_i=b(X_i)'\hat\delta$.
\item[(ii)] Form the sample contamination moments
$\hat\theta_D=n^{-1}\sum_i \hat h_i D_i$ and
$\hat\theta_Y=n^{-1}\sum_i \hat h_i Y_i$, stacked as
$\hat\theta=(\hat\theta_D,\hat\theta_Y)'$.
\item[(iii)] Compute the Wald statistic
$W_n=n\,\hat\theta'\hat\Sigma^{-1}\hat\theta$, where $\hat\Sigma$ is a consistent
estimator of the asymptotic variance $\Sigma$ of $\sqrt n\,\hat\theta$.
\end{enumerate}

\begin{assumption}[Test regularity]\label{as:test}
(i) $\{(Y_i,D_i,Z_i,X_i)\}_{i=1}^n$ are i.i.d.\ with $\E[Y^4]<\infty$,
$\E[D^4]<\infty$ and $\E\|b(X)\|^4<\infty$. (ii) $G=\E[b(X)b(X)']$ is nonsingular
and the basis dimension $K$ is fixed. (iii) $(\alpha_0,\alpha_1,\delta)$ are the
unique population least-squares coefficients in
$e(X)=\alpha_0+X'\alpha_1+b(X)'\delta+\rho(X)$ with
$\E[\rho(X)(1,X',b(X)')']=0$. (iv) $\Sigma=\Var(\psi_i)$ is nonsingular.
\end{assumption}

\begin{proposition}[Limiting distribution]\label{prop:test}
Under Assumption~\ref{as:test},
$\sqrt n\,(\hat\theta-\theta)\convd N(0,\Sigma)$, where $\Sigma=\Var(\psi_i)$ is the
asymptotic variance of the two-step moment estimator. The influence function $\psi_i$
is the stacked contamination moment $(b(X_i)'\delta\,W_i-\theta_W)_{W\in\{D,Y\}}$ plus a
generated-regressor correction for the first-step estimation of $\delta$, given
explicitly in the Appendix. Under $H_0$, $W_n\convd \chi^2_2$. Against a fixed
alternative with $(\theta_D,\theta_Y)\neq0$, $W_n\to\infty$, so the test is consistent.
Against local alternatives $\theta=\eta/\sqrt n$ it has nontrivial power with
noncentrality $\eta'\Sigma^{-1}\eta$.
\end{proposition}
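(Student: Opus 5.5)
The plan is to treat $\hat\theta$ as a two-step GMM/M-estimator: an OLS first step for $\delta$, then a sample-mean second step. We linearize around the population coefficients, apply a multivariate CLT, and obtain the Wald results by the continuous mapping theorem.

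\textbf{Step 1: first-step expansion.} Write $R_i=(1,X_i',b(X_i)')'$ and $\hat\gamma=(\hat\alpha_0,\hat\alpha_1',\hat\delta')'$. Since $\E[Z\mid X]=e(X)$, the population projection of $Z$ on $R$ coincides with that of $e(X)$. By Assumption~\ref{as:test}(iii), $\gamma$ satisfies $\E[R(Z-R'\gamma)]=0$. Because $b$ is orthogonalized against $(1,X')$, the matrix $Q=\E[RR']$ is block diagonal with lower block $G$. Hence
\[
\sqrt n(\hat\delta-\delta)=G^{-1}n^{-1/2}\textstyle\sum_i b(X_i)u_i+o_p(1),
\qquad u_i=Z_i-R_i'\gamma .
\]
This follows from the LLN (finite second moments of $b$) and nonsingularity of $G$.

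\textbf{Step 2: second-step linearization.} For $W\in\{D,Y\}$,
\[
\hat\theta_W-\theta_W=n^{-1}\textstyle\sum_i(b(X_i)'\delta W_i-\theta_W)+(\hat\delta-\delta)'\,n^{-1}\textstyle\sum_i b(X_i)W_i .
\]
Here $\theta_W$ is the captured-curvature moment $\E[b'\delta W]$, which equals $\E[h_KW]$. The sample mean $n^{-1}\sum_i b(X_i)W_i$ converges in probability to $\mu_W=\E[b(X)W]$, with the LLN justified by Cauchy--Schwarz and the fourth moments. This yields the influence function
\[
\psi_{W,i}=b(X_i)'\delta\,W_i-\theta_W+\mu_W'G^{-1}b(X_i)u_i ,
\]
which is the generated-regressor correction.

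\textbf{Step 3: asymptotics.} Stacking $\psi_i=(\psi_{D,i},\psi_{Y,i})'$, the $\psi_i$ are i.i.d. with mean zero. The term $\E[b u]=0$ holds by the normal equations, so that $\E[\mu_W'G^{-1}bu]=0$. They have finite second moments by the fourth-moment conditions and Cauchy--Schwarz; $Z$ has finite fourth moment given the binary or bounded case, and otherwise we assume it. The Lindeberg--L\'evy CLT then gives $\sqrt n(\hat\theta-\theta)\convd N(0,\Sigma)$.

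Under $H_0$, $\theta=0$. With $\hat\Sigma\convp\Sigma$ nonsingular by Assumption~\ref{as:test}(iv), the continuous mapping theorem gives $W_n\convd\chi^2_2$.

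Under a fixed alternative, $\hat\theta\convp\theta\neq0$. Then $W_n/n\convp\theta'\Sigma^{-1}\theta>0$, so $W_n\to\infty$.

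For local alternatives, consider a sequence of DGPs with $\theta_n=\eta/\sqrt n$. Here I need the CLT to hold uniformly along the sequence; I will invoke Lindeberg--Feller using the uniform fourth-moment bounds. Then $\sqrt n\hat\theta\convd N(\eta,\Sigma)$, and hence $W_n\convd\chi^2_2(\eta'\Sigma^{-1}\eta)$.

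\textbf{Main obstacle.} The delicate point is interpreting $\theta$ when $\rho\neq0$. The estimand is $\E[h_K W]$, not $\E[hW]$, so I would state the result for the basis-captured contamination. Under a correct basis the two coincide. Beyond that, the effort goes into verifying the moment bounds for the correction term and checking the uniform Lindeberg condition for the local-alternative claim.
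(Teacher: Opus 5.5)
Your proposal is correct and follows the paper's own argument: an OLS first step with influence $G^{-1}b(X_i)u_i$, an expansion of $\hat\theta_W$ in $\hat\delta$ that yields the correction $\E[W b(X)']G^{-1}b(X_i)u_i$, a central limit theorem, and the continuous mapping theorem for $W_n$. The points you make explicit are refinements the paper leaves implicit, not a different route: block diagonality of $\E[RR']$, exactness of the expansion because $\hat\theta_W$ is linear in $\hat\delta$, the moment condition on $Z$ needed for the correction term, and that the centering is really $\E[h_K W]$, which equals $\E[hW]$ only when $\E[\rho W]=0$.
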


The estimator $\hat\theta$ is a two-step GMM estimator, a first-step OLS for
$(\alpha_0,\alpha_1,\delta)$ followed by the second-step sample moments (ii).
Under Assumption~\ref{as:test}, \citet[Theorem 6.1]{newey1994} give asymptotic
linearity with the influence function above and $\Sigma=\Var(\psi_i)$. The second
term of $\psi_i$ is the generated-regressor correction for first-step estimation
of $\delta$, and it is nonzero precisely because $b(X)$ enters through
$\hat\delta$. The plug-in $\hat\Sigma=n^{-1}\sum_i\hat\psi_i\hat\psi_i'$, formed
from the sample analogues $\hat\psi_i$, is consistent by the law of large numbers
and the continuous mapping theorem, so $W_n\convd\chi^2_2$ under $H_0$. Because
$\hat\theta$ is a Hadamard-differentiable functional of the empirical measure, the
nonparametric pairs bootstrap is first-order valid by the functional delta method
\citep[Theorem 23.9]{vandervaart1998}. Resampling $(Y_i,D_i,Z_i,X_i)$ jointly
reproduces the generated-regressor term automatically, and we use it throughout
Section~\ref{sec:mc}.

The null \eqref{eq:null} is a statement about the reduced form, and the estimator
$\hat\theta$ never divides by the first-stage covariance $\E[\Ztil D]$. The test is
therefore immune to the weak-instrument problem, which is a small-denominator
problem for $\beta_{2SLS}$. This is what separates the diagnostic from any
procedure built on an IV estimate, and because the collinearity limit is precisely
where it matters, we state it formally rather than leave it to
Remark~\ref{rem:hausman}.

\begin{proposition}[Size under weak identification, power under manufactured strength]\label{prop:weakrobust}
Call $\{P_n\}$ a \emph{weak-identification null sequence} if the following hold. (i) The
fourth moments of $(Y,D,\|b(X)\|)$ are bounded uniformly in $n$ and
$\Sigma_n\to\Sigma_\ast\succ0$, so Assumption~\ref{as:test} holds uniformly along the
sequence. (ii) The first-stage strength drifts to zero, $S_{D,n}\to0$, as when
$\Var(Z\mid X)\to0$ or $p(X)\to0$. (iii) The curvature stays nondegenerate,
$h_n\to h^\ast\neq0$. (iv) The null is maintained, $\theta_{D,n}=\theta_{Y,n}=0$ for all
$n$. Along any such sequence, under $H_0$, $W_n\convd\chi^2_2$, so the test that rejects
when $W_n>\chi^2_{2,1-\alpha}$ has asymptotic size $\alpha$ regardless of the degree of
identification, and the same holds for the orthogonalized statistic $W^o_n$ of
Proposition~\ref{prop:orthogonal}. By
contrast, the contaminated collinearity sequence of Assumption~\ref{as:seq} has
$\Cov(h^\ast,m_D^\ast)\neq0$, hence $\theta_{D,n}\not\to0$, and is an \emph{alternative}
rather than a null. Along it the noncentrality $n\,\theta_n'\Sigma_n^{-1}\theta_n$
diverges and the test's power tends to one, even as every first-stage strength
diagnostic diverges with the sequence (Corollary~\ref{cor:Ffails}).
\end{proposition}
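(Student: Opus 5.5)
The plan is to prove the size claim with a triangular-array version of Proposition~\ref{prop:test}, and then to obtain the power claim from the same expansion evaluated along the alternative. The key structural fact is that $\hat\theta$ is built only from the regression of $Z$ on $(1,X',b(X)')$ and from the sample moments $n^{-1}\sum_i\hat h_iD_i$ and $n^{-1}\sum_i\hat h_iY_i$. Neither object divides by $\E[\Ztil D]$, and $S_{D,n}$ does not enter them at all. So condition (ii) of the proposition should play no role in the null distribution. The proof has to show exactly that: every ingredient of the expansion is controlled by conditions (i) and (iii) alone.

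\textbf{Size under the null sequence.} I would write the asymptotic linear representation
\[
\sqrt n(\hat\theta-\theta_n)=n^{-1/2}\sum_i\psi_{n,i}+o_{P_n}(1),
\]
where $\psi_{n,i}$ is the influence function of Proposition~\ref{prop:test} with population quantities indexed by $n$. The steps are as follows.
\begin{enumerate}
\item[(a)] The first-step OLS error satisfies $\hat\delta-\delta_n=O_{P_n}(n^{-1/2})$ uniformly. This follows from a uniform law of large numbers for $n^{-1}\sum(1,X',b')'(1,X',b')$, using the bounded fourth moments and the nonsingular $G$ from Assumption~\ref{as:test}, which I take to hold uniformly per condition (i).
\item[(b)] The remainder of the second-step expansion is a product of two $O_{P_n}(n^{-1/2})$ terms, so it vanishes.
\item[(c)] The leading term is a row-wise i.i.d.\ triangular array with mean $\theta_n=0$ and variance $\Sigma_n\to\Sigma_\ast$. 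The Lindeberg condition follows from uniformly bounded fourth moments of $(Y,D,\|b\|)$, together with the bounded support of $Z$ (or a uniform moment bound), which controls the first-step residual. The Lindeberg--Feller CLT then gives $N(0,\Sigma_\ast)$.
\item[(d)] The plug-in $\hat\Sigma$ converges to $\Sigma_\ast$ in probability by a triangular-array weak law of large numbers, which the same moment bounds license. Because $\Sigma_\ast\succ0$, the continuous mapping theorem yields $W_n\convd\chi^2_2$.
\end{enumerate}
For $W^o_n$ I would repeat steps (a) through (d) with that statistic's influence function. This is the same argument provided its moments are likewise outcome- and treatment-linear in reduced-form quantities, so I invoke Proposition~\ref{prop:orthogonal}'s representation.

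\textbf{Power under the contaminated sequence.} Under Assumption~\ref{as:seq}(ii)--(iv), the $L^2$ convergence of $h_t$ and $m_D$ gives $\theta_{D,n}=\Cov(h_n,m_{D,n})\to\Cov(h^\ast,m_D^\ast)\neq0$. This is the same continuity step used in Proposition~\ref{prop:collin}. One caveat: the test probes $h_K$, the basis projection of the curvature, not $h$ itself. I would therefore add the requirement that $\Cov(b'\delta^\ast,m_D^\ast)\neq0$, or equivalently that the basis captures the relevant curvature, and I would state this explicitly as the maintained reading.

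With that in place, $\hat\theta-\theta_n=O_{P_n}(n^{-1/2})$ by the same expansion, applied now without centering at zero. Hence
\[
W_n=n\,\theta_n'\Sigma_n^{-1}\theta_n+O_P(\sqrt n)\to\infty,
\]
since $\theta_n'\Sigma_n^{-1}\theta_n$ stays bounded away from zero. The power therefore tends to one. The statement that first-stage diagnostics diverge at the same time is quoted directly from Corollary~\ref{cor:Ffails}.

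\textbf{Main obstacle.} The hard step is uniformity. As $\Var(Z\mid X)\to0$ and $S_{D,n}\to0$, I need to rule out degeneracy of $\Sigma_n$ and of the generated-regressor correction. Condition (i) is imposed precisely to assume this away, and I would check that $\Sigma_\ast\succ0$ is compatible with the collinearity limit. The argument for compatibility is that the variance of $h_n(X)W - \theta_W$ persists through $h^\ast\neq0$ and the within-cell variances of $D$ and $Y$, even when the instrument noise vanishes.
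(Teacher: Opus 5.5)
Your proposal is correct and follows the paper's proof. Both arguments apply a Lindeberg--Feller triangular-array CLT to the influence function of Proposition~\ref{prop:test}, which never involves $\E[\Ztil D]$, take $\Sigma_\ast\succ0$ from the persistent curvature term, and get power from the divergence of the noncentrality as $\theta_{D,n}\to\Cov(h^\ast,m_D^\ast)\neq0$. Your added caveat is a genuine sharpening: the fixed-basis statistic actually centres at $\E[b(X)'\delta^\ast D]$, so power needs the probed projection $b'\delta^\ast$, not merely $h^\ast$, to covary with $m_D^\ast$; the paper's proof passes over this, and Remark~\ref{rem:power} implicitly concedes it.
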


The mechanism is that $\theta=(\theta_D,\theta_Y)$ and its influence function
$\psi$ are functionals of the reduced form $(Z,D,Y,X)$, estimated at rate $\sqrt n$
by the regression of $Z$ on $(1,X',b(X)')$, whose behaviour is governed by
$G=\E[bb']$ rather than by the relevance of the instrument. Because the curvature
stays nondegenerate, $h_n\to h^\ast\neq0$, the leading influence term
$b(X)'\delta\,W-\theta_W$ stays nondegenerate and $\Sigma_n$ remains nonsingular
even as the generated-regressor correction shrinks with the residual variance of
$Z$. The screening test thus retains its size in exactly the regime, $S_D\to0$, in
which every first-stage strength diagnostic diverges (Corollary~\ref{cor:Ffails})
and a saturate and compare Hausman contrast collapses. The point is that weak
identification and contamination are separate. The null sequence has $S_D\to0$
without contamination, whereas Assumption~\ref{as:seq} couples $S_D\to0$ with
$\theta_D\not\to0$ and is precisely the design the test is meant to reject. The bias-targeting refinement of Section~\ref{sec:exact} does require
$S_D$ bounded away from zero, for the separate reason that it plugs in
$\hat{\bar L}$. That boundary is discussed there.

Proposition~\ref{prop:test} fixes the basis dimension $K$, so the curvature is a
parametric first step. Applied users will often prefer to estimate the propensity
flexibly or by machine learning, which breaks Assumption~\ref{as:test}(ii). The
first-step error is then no longer $\sqrt n$-negligible, and a naive plug-in is
biased at rate faster than $\sqrt n^{-1}$. The following result restores valid
inference by replacing the plug-in moment with a Neyman-orthogonal one, so that
only $n^{1/4}$-consistent, cross-fitted nuisances are required.

\begin{proposition}[Orthogonalized test with a data-driven propensity]\label{prop:orthogonal}
Let the nuisances be the propensity $e(\cdot)$ and the level functions
$m_W(\cdot)=\E[W\mid X]$, $W\in\{D,Y\}$, and write
$\tilde m_W(X)=m_W(X)-L(m_W\mid 1,X)$ for the level function detrended of its
affine part. For $W\in\{D,Y\}$ define the orthogonalized moment
\begin{equation}\label{eq:ortho}
\psi^{o}_{W}(O;e,m_W)=h(X)\,W+\tilde m_W(X)\,\big(Z-e(X)\big)-\theta_W,
\qquad h=e-L(e\mid 1,X),
\end{equation}
which has mean zero at the truth and is Neyman-orthogonal. Its pathwise derivative
with respect to $e$ and with respect to $m_W$ vanishes. Estimate $(e,m_D,m_Y)$ by
any method with $K$-fold cross-fitting, and form
$\hat\theta^{o}=(\hat\theta^{o}_D,\hat\theta^{o}_Y)'$,
$\hat\theta^{o}_W=n^{-1}\sum_i\{\hat h(X_i)W_i+\hat{\tilde m}_W(X_i)(Z_i-\hat e(X_i))\}$,
and $W^{o}_n=n\,\hat\theta^{o\prime}\hat\Sigma_o^{-1}\hat\theta^{o}$. Suppose the
cross-fitted nuisances satisfy $\|\hat e-e\|_{L^2}=o_p(n^{-1/4})$ and
$\|\hat m_W-m_W\|_{L^2}=o_p(n^{-1/4})$, hence the product rate
$\|\hat e-e\|_{L^2}\|\hat m_W-m_W\|_{L^2}=o_p(n^{-1/2})$, with the fourth moments
of Assumption~\ref{as:test}(i) and $\Sigma_o=\Var(\psi^o_i)$ nonsingular. Then
$\sqrt n(\hat\theta^{o}-\theta)\convd N(0,\Sigma_o)$ and, under $H_0$,
$W^{o}_n\convd\chi^2_2$.
\end{proposition}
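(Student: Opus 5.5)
The plan is the standard double/debiased machine learning argument in three steps. First verify the moment condition. Then establish Neyman orthogonality. Finally control the cross-fitted remainder so that $\hat\theta^o$ is asymptotically linear with influence function $\psi^o$.

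\textbf{Unbiasedness.} At the truth the correction term has mean zero. Since $\tilde m_W(X)$ is a function of $X$ and $\E[Z-e(X)\mid X]=0$, we get $\E[\tilde m_W(X)(Z-e(X))]=0$. By Lemma~\ref{lem:decomp}, $\E[h(X)W]=\theta_W$. Hence $\E[\psi^o_W]=0$.

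\textbf{Orthogonality.} I will compute Gateaux derivatives along $e_r=e+r\Delta_e$ and $m_{W,r}=m_W+r\Delta_m$. Write $\tilde\Delta_e=\Delta_e-L(\Delta_e\mid1,X)$, using that the projection is linear.
\begin{itemize}
\item[] \emph{Derivative in $e$.} The first term contributes $\E[\tilde\Delta_e(X)W]=\E[\tilde\Delta_e m_W]=\E[\tilde\Delta_e\tilde m_W]$, because $\tilde\Delta_e$ is orthogonal to affine functions. The second term contributes $-\E[\tilde m_W\Delta_e]=-\E[\tilde m_W\tilde\Delta_e]$, because $\tilde m_W$ is also orthogonal to affine functions. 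The two cancel.
\item[] \emph{Derivative in $m_W$.} It is $\E[\tilde\Delta_m(X)(Z-e(X))]=0$ by iterated expectations.
\end{itemize}

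\textbf{Remainder.} I will write $\hat\theta^o_W-\theta_W=n^{-1}\sum_i\psi^o_{W}(O_i;e,m_W)+R_{1}+R_{2}$.

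$R_1$ is the empirical-process term $(\mathbb P_{n,k}-P)[\psi^o(\hat\eta_{-k})-\psi^o(\eta)]$. Cross-fitting lets me condition on the training fold, so this term has conditional mean zero. Its conditional variance is bounded by $\|\hat\eta-\eta\|_{L^2}^2$ times moment factors, which uses the fourth moments and Cauchy--Schwarz. So $R_1=o_p(n^{-1/2})$ by Chebyshev, provided the nuisances are consistent in $L^2$.

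$R_2$ is the bias term $P[\psi^o(\hat\eta)]-P[\psi^o(\eta)]$. I expand it exactly rather than by Taylor's theorem, since the moment is bilinear in $(e,m_W)$ apart from the linear projections. Evaluating at $(\hat e,\hat m_W)$, the first-order pieces vanish by the orthogonality computation. The leftover is
\[
-\E\big[(\hat{\tilde m}_W-\tilde m_W)(X)\,(\hat e-e)(X)\big].
\]
By Cauchy--Schwarz and $L^2$-contractivity of the residual-from-projection map, this is bounded by $\|\hat m_W-m_W\|\,\|\hat e-e\|=o_p(n^{-1/2})$.

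One subtlety concerns the sample versus population projection in $\hat h$ and $\hat{\tilde m}_W$. If $L(\cdot\mid1,X)$ is computed empirically, the extra error is a finite-dimensional OLS perturbation of order $O_p(n^{-1/2})$. It multiplies terms with mean zero, $\E[(1,X')'(W-m_W)]$-type pieces, and this is exactly what the orthogonality absorbs. I would handle it by treating the projection coefficients as part of the nuisance and rechecking that the derivative in those coefficients also vanishes. It does, because $h$ and $\tilde m_W$ are orthogonal to affine functions.

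\textbf{Conclusion.} The stacked CLT for the i.i.d. $\psi^o_i$ with finite second moments gives $\sqrt n(\hat\theta^o-\theta)\convd N(0,\Sigma_o)$. A cross-fitted plug-in $\hat\Sigma_o$ is consistent by the same remainder bounds. Continuous mapping then gives $W^o_n\convd\chi^2_2$ under $H_0$.

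The main obstacle is $R_2$. One has to show that the second-order remainder is exactly a product of the two nuisance errors, including the projection step, so that the product-rate condition suffices. I would verify this first by direct algebraic expansion.
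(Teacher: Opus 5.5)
Your skeleton matches the paper's proof: mean zero at the truth, the same G\^ateaux cancellation in $e$ (your $\E[\tilde\Delta_e\tilde m_W]$ bookkeeping is its self-adjointness step), cross-fitting for the empirical-process term, and a product-rate bias. Your exact identity $P\psi^o(\hat\eta)-P\psi^o(\eta)=-\E[(\hat{\tilde m}_W-\tilde m_W)(\hat e-e)]$ is correct when $L(\cdot\mid 1,X)$ is the population projection. The gap is the estimated projection, where you claim the opposite of what the paper's proof says.

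Write $h_\gamma=e-(1,X')\gamma$ and $\tilde m_{W,\lambda}=m_W-(1,X')\lambda$. The derivative of $\E[\psi^o_W]$ in $\lambda$ is $-\E[(1,X')'(Z-e(X))]=0$. The derivative in $\gamma$, however, is $-\E[(1,X')'W]$, whose first entry is $-\E[W]$. The correction term does not involve $\gamma$, so nothing cancels this. That $h$ is orthogonal to affine functions is beside the point: a shift in $\gamma$ moves $h$ by an affine function, whose covariance with $W$ need not vanish. The perturbation multiplies $\E[(1,X')'W]$, not $\E[(1,X')'(W-m_W)]$.

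Concretely, let $Q=\E[(1,X')'(1,X')]$ and $\ell_W=L(m_W\mid 1,X)$. The in-sample projection residual $\hat h$ of $\hat e$ differs from $\hat e-L(\hat e\mid 1,X)$ by $-(1,X')\{Q^{-1}n^{-1}\sum_i(1,X_i')'h(X_i)+o_p(n^{-1/2})\}$. So $\hat\theta^o_W$ acquires the first-order term $-n^{-1}\sum_i h(X_i)\ell_W(X_i)$, and its influence function is $\psi^o_W-h(X)\ell_W(X)$, not $\psi^o_W$. The paper's proof records that this projection error is ``not orthogonalized away but asymptotically linear,'' and leaves it to the pairs bootstrap that refits the projection on each resample.

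Consequently your last step fails for the feasible estimator. The limit variance of $\sqrt n(\hat\theta^o-\theta)$ is that of the stacked $\psi^o_{W,i}-h(X_i)\ell_W(X_i)$, and a plug-in built from $\hat\psi^o_i$ does not estimate it. For example, let $Y=\mu+\varepsilon$ with $\varepsilon$ independent of $(Z,X)$ and of variance $\sigma^2$. Then $\tilde m_Y=0$ and $\Var(\psi^o_Y)=\E[h^2](\mu^2+\sigma^2)$. But $\sum_i\hat h_iY_i=\sum_i\hat h_i\varepsilon_i$ exactly, because the in-sample residual sums to zero. So $\sqrt n\,\hat\theta^o_Y$ has asymptotic variance $\E[h^2]\sigma^2$; the correction term is negligible since $\tilde m_Y=0$.

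The repair uses your own device correctly. Since the in-sample $\hat h$ is orthogonal to every affine function, $\sum_i\hat h_iW_i=\sum_i\hat h_i\{W_i-\ell_W(X_i)\}$. The moment $h(X)\{W-\ell_W(X)\}+\tilde m_W(X)(Z-e(X))-\theta_W$ is Neyman-orthogonal in $e$, in $m_W$, and in every projection coefficient. Your cross-fitting and product-rate argument then goes through verbatim, with that moment's variance in place of $\Var(\psi^o_i)$. The latter is the right limit only if the projection is known.

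A smaller point: in $R_1$, Cauchy--Schwarz with fourth moments controls the conditional variance through $L^4$, not $L^2$, norms of the nuisance errors. For instance, $\E[(\hat h-h)^2W^2]\le\|\hat h-h\|_{L^4}^2\|W\|_{L^4}^2$. So you need either $L^4$-consistency or bounded $\E[W^2\mid X]$.
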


The fixed-basis test of Proposition~\ref{prop:test} and this orthogonalized test are
two ends of a spectrum, with an undersmoothed fixed-form series ($K=o(n^{1/2})$ with the
two-step correction of \citet{newey1994}) as the intermediate route. It is
orthogonalization~\eqref{eq:ortho} that removes the $n^{1/4}$ bottleneck and licenses
off-the-shelf machine learning for the propensity. We use a fixed cubic basis throughout
and treat $K$ as a robustness dimension.

\begin{remark}[What a non-rejection certifies]\label{rem:power}
The fixed-basis screen of Proposition~\ref{prop:test} has power only against
contamination that projects onto the span of $b(X)$. Its noncentrality loads on
$\E[b(X)D]$ and $\E[b(X)Y]$, so any curvature $h$ orthogonal to $b$ leaves the
statistic unaffected. A non-rejection therefore certifies only that there is no
detectable contamination \emph{in the probed directions}, not that
$\theta_D=\theta_Y=0$ in every direction. This is the price of directing the test.
The growing-basis and orthogonalized versions
(Proposition~\ref{prop:orthogonal}) restore consistency against all directions as
$K\to\infty$. With the fixed cubic basis used below, the empirical ``clears'' of
Section~\ref{sec:empirics} should accordingly be read as ``no contamination
detectable at cubic order,'' and $K$ as a robustness dimension one can raise.
\end{remark}

\begin{remark}[Why not just saturate and compare?]\label{rem:hausman}
The natural alternative to our test is to estimate $\beta$ under a saturated
covariate specification and compare it to the linear-specification estimate via a
Hausman contrast. Under collinearity this is exactly the wrong tool. Saturating
drives the effective instrument toward the vanishing signal term $S_D\to0$
(Proposition~\ref{prop:collin}), so the saturated estimator is weakly identified
and its sampling variability swamps the contrast. Our test sidesteps this because
it never forms an IV ratio, and its size is controlled along the collinearity
sequence (Proposition~\ref{prop:weakrobust}), the property the saturate and compare
contrast cannot have.
\end{remark}

\subsection{A bias-targeting variant}\label{sec:exact}

The null \eqref{eq:null} is \emph{sufficient} for $\beta_{2SLS}=\bar L$ but not
necessary. By \eqref{eq:bias} the bias vanishes whenever
$\theta_Y=\bar L\,\theta_D$, which can hold with $(\theta_D,\theta_Y)\neq0$. The
leading example is \emph{harmless curvature}. Here the propensity is nonlinear in
$X$, so $h\neq0$ and, since $m_D$ inherits the curvature, $\theta_D\neq0$, yet the
outcome level is linear in $X$ and the effect is homogeneous, so
$\theta_Y=\bar L\theta_D$ and the estimand is unbiased. Testing \eqref{eq:null}
then rejects a design that is in fact fine. The screening test of
Section~\ref{sec:test} is therefore best read as a strongly identified,
conservative first pass.

The exact no-bias restriction is a single moment. Since
$\theta_Y-\bar L\theta_D=\Cov\!\big(h(X),\,m_Y(X)-\bar L\,m_D(X)\big)
=\E\!\big[h(X)\{Y-\bar L D\}\big]$,
\begin{equation}\label{eq:exactnull}
H_0^{\ast}:\ \E\!\big[h(X)\{Y-\bar L\,D\}\big]=0,
\end{equation}
which is necessary and sufficient for $\beta_{2SLS}=\bar L$ whenever $S_D\neq0$.
Written as a ratio, $H_0^{\ast}$ invites a plug-in of
$\bar L=S_Y/S_D$, fragile precisely where contamination is worst. Clearing the
denominator gives an equivalent restriction in objects that are all
$\sqrt n$-estimable irrespective of the signal. With
$\psi\equiv\theta_Y S_D-\theta_D S_Y$,
\begin{equation}\label{eq:psi}
b=\beta_{2SLS}-\bar L=\frac{\psi}{S_D\,\E[\Ztil D]},
\qquad H_0^{\ast}\iff\psi=0 .
\end{equation}
The two forms test the same null but differ in where the small signal $S_D$ appears,
and \eqref{eq:psi} forces a genuine trade-off. We build both, a plug-in that tracks
$b$ directly and a signal-cleared form that tracks $\psi$.

For the plug-in we estimate $H_0^{\ast}$ with the bias-corrected $\hat{\bar L}$ of
\eqref{eq:correction} (Section~\ref{sec:correction}) and the flexible curvature
$\hat h_i$. Writing
$\hat g=n^{-1}\sum_i \hat h_i\{Y_i-\hat{\bar L}D_i\}
=\hat\theta_Y-\hat{\bar L}\hat\theta_D=\hat\psi/\hat S_D$, the statistic is
$T_n^{\ast}=\hat g^2/\hat v$ with $\hat v$ the pairs-bootstrap variance of $\hat g$,
obtained by recomputing $\hat h$ and $\hat{\bar L}$ on each resample. Suppose
$H_0^{\ast}$ and Assumption~\ref{as:test} hold and $S_D$ is bounded away from zero, so
that $\hat{\bar L}$ is $\sqrt n$-consistent for $\bar L$
(Proposition~\ref{prop:correction}). Then $\sqrt n\,\hat g\convd N(0,v)$ for a finite
$v>0$, the pairs-bootstrap variance satisfies $n\hat v\convp v$, and
$T_n^{\ast}=n\hat g^2/(n\hat v)\convd\chi^2_1$, consistent against any fixed
alternative with nonzero bias. The representation $\hat g=\hat\psi/\hat S_D$ shows the
weakness. As $S_D\to0$ the plug-in $\hat{\bar L}$ is no longer $\sqrt n$-consistent and
$T_n^{\ast}$ divides by a vanishing $\hat S_D$, so it is no longer asymptotically
pivotal (Proposition~\ref{prop:tradeoff}). This is the size side of the trade-off in
\eqref{eq:psi}.

The $\psi$-form removes exactly that boundary problem by testing $\hat\psi$ itself,
which never divides by the signal. Estimate the four reduced-form means from the
flexible first-step regression of $Z$ on $(1,X',b(X)')$, with
$\hat\theta_W=n^{-1}\sum_i\hat h_i W_i$ and
$\hat S_W=n^{-1}\sum_i(Z_i-\hat e(X_i))W_i$ for $W\in\{D,Y\}$, and form
\begin{equation}\label{eq:arstat}
\hat\psi=\hat\theta_Y\hat S_D-\hat\theta_D\hat S_Y,\qquad
A_n=\hat\psi^2/\hat v_\psi,
\end{equation}
where $\hat v_\psi$ is a consistent estimator of $\Var(\hat\psi)$, either the
delta-method plug-in $n^{-1}\widehat{\nabla\psi}{}'\hat\Omega\,\widehat{\nabla\psi}$
with $\nabla\psi=(-S_Y,S_D,\theta_Y,-\theta_D)'$ and $\hat\Omega$ the joint
variance of $\sqrt n\,(\hat\theta_D,\hat\theta_Y,\hat S_D,\hat S_Y)$, or the pairs
bootstrap. Because $\hat\psi$ is a smooth function of averages that are all
$\sqrt n$-estimable without dividing by the first stage, $A_n$ stays pivotal at the
collinearity boundary.

\begin{proposition}[Weak-identification-robust bias targeting]\label{prop:arbias}
Let $\{P_n\}$ satisfy Assumption~\ref{as:test} uniformly, with the curvature
nondegenerate, $h_n\to h^\ast\neq0$ and hence $(\theta_D^\ast,\theta_Y^\ast)\neq0$,
and let $\hat v_\psi$ be ratio-consistent for $\Var(\hat\psi)$ along the sequence.
Then under $H_0^{\ast}$, $A_n\convd\chi^2_1$ whether or not $S_{D,n}$ is bounded away
from zero, so the test that rejects when $A_n>\chi^2_{1,1-\alpha}$ has asymptotic size
$\alpha$ uniformly over the collinearity range. Against a fixed alternative with
$b=\beta_{2SLS}-\bar L\neq0$ and $S_D$ bounded away from zero the power tends to one.
As $S_D\to0$ with $b$ fixed, $\psi=b\,S_D\,\E[\Ztil D]\to0$ and the power against that
fixed $b$ declines toward $\alpha$, the unavoidable weak-identification trade-off
familiar from Anderson--Rubin inference.
\end{proposition}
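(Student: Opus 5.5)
The plan is to treat $\hat\psi$ as a smooth function of the four-vector of reduced-form averages $\hat\mu=(\hat\theta_D,\hat\theta_Y,\hat S_D,\hat S_Y)'$ and to establish a joint central limit theorem for $\hat\mu$ that holds uniformly along $\{P_n\}$ and never uses $S_{D,n}$. First, extend the two-step argument of Proposition~\ref{prop:test} from $(\hat\theta_D,\hat\theta_Y)$ to all four moments. Each $\hat S_W=n^{-1}\sum_i(Z_i-\hat e(X_i))W_i$ is again a second-step sample moment evaluated at the first-step OLS fit of $Z$ on $(1,X',b(X)')$. It therefore has an influence function of the form $(Z_i-e(X_i))W_i-S_W$ plus a generated-regressor term linear in $(1,X_i',b(X_i)')'(Z_i-e(X_i))$, with coefficients governed by $\E[(1,X',b')'(1,X',b')]$ and $G$. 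None of these terms involves $S_D$. Given the uniform fourth moments in Assumption~\ref{as:test}, a Lindeberg--Feller (or Lyapunov) triangular-array CLT then gives $\sqrt n(\hat\mu-\mu_n)=\Omega_n^{-1/2}$-normalized $N(0,I_4)+o_p(1)$, with $\Omega_n$ uniformly bounded.

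Second, apply an exact expansion rather than a limiting delta method, since $\mu_n$ drifts. Because $\psi$ is bilinear, the identity
\[
\hat\psi-\psi_n=\nabla\psi_n'(\hat\mu-\mu_n)+(\hat\theta_Y-\theta_{Y,n})(\hat S_D-S_{D,n})-(\hat\theta_D-\theta_{D,n})(\hat S_Y-S_{Y,n})
\]
holds exactly, and the remainder is $O_p(n^{-1})$ uniformly. Here $\nabla\psi_n=(-S_{Y,n},S_{D,n},\theta_{Y,n},-\theta_{D,n})'$. Under $H_0^\ast$ we have $\psi_n=0$. The key point is that $\nabla\psi_n$ does not vanish even when $S_{D,n}\to0$, because its last two entries converge to $(\theta_Y^\ast,-\theta_D^\ast)\neq0$ by the nondegenerate curvature. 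This is exactly where the hypothesis $h_n\to h^\ast\neq0$ enters.

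Hence $n\,\nabla\psi_n'\Omega_n\nabla\psi_n$ is bounded away from zero, provided $\Omega_n$ stays nonsingular in the limit, which I would take from the uniform version of Assumption~\ref{as:test}(iv) extended to the four moments. The linear term is then of exact order $n^{-1/2}$ and dominates the $O_p(n^{-1})$ remainder. So $\hat\psi/\sqrt{\Var(\hat\psi)}\convd N(0,1)$, and ratio-consistency of $\hat v_\psi$ gives $A_n\convd\chi^2_1$. Since nothing depends on whether $S_{D,n}$ is bounded away from zero, the size statement holds uniformly over the collinearity range. To state it as uniform, I would argue by subsequences: every sequence in the class has a further subsequence along which $(\mu_n,\Omega_n)$ converges, and the limit law is $\chi^2_1$ on each.

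For power, use \eqref{eq:psi}: $\psi=b\,S_D\,\E[\Ztil D]$.
\begin{itemize}
\item With $b\neq0$ fixed and $S_D$ bounded away from zero, $\E[\Ztil D]=S_D+\theta_D$ is nonzero by Assumption~\ref{as:reg}. Then $\psi\neq0$ is fixed while $\hat v_\psi=O_p(n^{-1})$, so $A_n\to\infty$.
\item As $S_{D,n}\to0$ with $b$ fixed, $\psi_n\to0$. The noncentrality $n\psi_n^2/(\nabla\psi_n'\Omega_n\nabla\psi_n)$ then depends on the rate. Along a fixed-$n$ family, or with $\sqrt n\,\psi_n\to0$, the statistic approaches its null $\chi^2_1$ law and power tends to $\alpha$. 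This is the Anderson--Rubin trade-off.
\end{itemize}

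The main obstacle is the first step. The uniform joint CLT must include the generated-regressor correction for the $\hat S_W$ terms, and the limiting variance of the linear part must be nondegenerate in the direction $\nabla\psi_n$ when $S_D\to0$. There $(Z-e)W$ is itself shrinking, because $\Var(Z\mid X)\to0$. I would handle this by showing that the $\theta$-block of $\Omega_n$ stays positive definite as in Proposition~\ref{prop:weakrobust}, and by noting that the weights on the $S$-block, $(\theta_Y^\ast,-\theta_D^\ast)$, only add variance. This gives a lower bound on $\nabla\psi_n'\Omega_n\nabla\psi_n$. If that lower bound fails in degenerate subcases, I would add it as an explicit regularity condition.
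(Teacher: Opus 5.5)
\paragraph{Verdict.} Your skeleton matches the paper's proof. You stack $(\hat\theta_D,\hat\theta_Y,\hat S_D,\hat S_Y)$, prove a triangular-array CLT, expand the bilinear $\psi$ with gradient $(-S_{Y,n},S_{D,n},\theta_{Y,n},-\theta_{D,n})'$, note that the surviving weights $(\theta_Y^\ast,-\theta_D^\ast)$ sit on the $S$-block, and finish with ratio-consistency and $\psi=b\,S_D\,\E[\Ztil D]$ for power. The gap is the step that makes the linear term nondegenerate.

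\paragraph{Where the nondegeneracy step fails.} The proposition targets sequences with $\Var(Z\mid X)\to0$. Along them the $S$-block scores $(Z-e(X))W-S_W$ go to zero in $L^2$. So does their generated-regressor correction, which is proportional to the first-step residual, and that residual also vanishes when the basis captures $e$. At the same time the $\theta$-block weights $(-S_{Y,n},S_{D,n})$ go to zero, and the cross block $\Omega_{\theta S,n}$ vanishes by Cauchy--Schwarz. Hence $\sigma_{\psi,n}^2=\nabla\psi_n'\Omega_n\nabla\psi_n\to0$ necessarily. Your assumption that ``$\Omega_n$ stays nonsingular in the limit'' is false there, and the linear term is not of exact order $n^{-1/2}$.

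Your patch does not repair this, for two reasons. First, positive definiteness of the $\theta$-block gives only the lower bound $\lambda_{\min}(\Omega_{\theta\theta,n})(S_{Y,n}^2+S_{D,n}^2)$, which tends to zero. Second, the $S$-block does not ``only add variance'': the $\theta$- and $S$-scores are correlated, and the variance of a sum is not monotone in its summands. Your fallback is to impose the lower bound as a regularity condition. That excludes exactly the $\Var(Z\mid X)\to0$ sequences and leaves a result only for $S_{D,n}\to0$ through $p(X)\to0$. This is weaker than size control uniformly over the collinearity range.

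\paragraph{The missing idea: self-normalization.} This is the device the paper's proof leans on. $A_n$ depends only on $\sqrt n\,\hat\psi/\sigma_{\psi,n}$, so you do not need $\sigma_{\psi,n}$ bounded away from zero. You need two things instead.

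(i) The remainder must be $o_p(n^{-1/2}\sigma_{\psi,n})$, not merely $o_p(n^{-1/2})$. Your exact bilinear identity delivers this once you note that every product in the remainder contains one $S$-block error. The remainder is therefore $O_p(n^{-1}\sigma_{\theta,n}\sigma_{S,n})$, not just $O_p(n^{-1})$. It is negligible provided $\sigma_{\psi,n}\ge c\,\sigma_{S,n}$. This is a scale-free nondegeneracy of the rescaled $S$-block in the direction $(\theta_Y^\ast,-\theta_D^\ast)$; under $H_0^\ast$ the leading score in that direction is $(Z-e(X))(\theta_Y D-\theta_D Y)$. That condition, not nonsingularity of $\Omega_n$, is the right replacement.

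(ii) You need a Lyapunov condition for the standardized summands $\nabla\psi_n'\zeta_{i,n}/\sigma_{\psi,n}$. Their higher moments can blow up as the scores degenerate, for example with binary $Z$ and $e(X)$ drifting to $0$ or $1$. So a rate such as $n\,\E[\Var(Z\mid X)]\to\infty$ enters.

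The same correction applies to your power argument. The relevant quantity is $\sqrt n\,\psi_n/\sigma_{\psi,n}$, not $\sqrt n\,\psi_n$, because the standard error vanishes along with $\psi_n$.
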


The two statistics are complementary by construction, not by hedge. $A_n$ delivers
size uniformly, including the collinearity boundary where $T_n^{\ast}$ divides by a
vanishing signal, and $T_n^{\ast}$ buys higher power where the design is strongly
identified. This sits alongside the earlier axis of the ``report both'' protocol, the
screening test of Section~\ref{sec:test}, whose size is likewise controlled under weak
identification (Proposition~\ref{prop:weakrobust}) and which serves as a cheap
sufficient first pass, against the bias-targeting test that adjudicates harmless
curvature. The protocol is to report the screen and the bias-targeting test alongside
the contamination share $\hat\varrho$, which by Corollary~\ref{cor:frontier} says how
much identification a correction would cost, using $A_n$ for a size guarantee that
holds across the collinearity range and $T_n^{\ast}$ where strong identification makes
its extra power worth having. In the finite-sample designs of Section~\ref{sec:head}
the two forms agree when strongly identified and $A_n$ holds size as
$\Var(Z\mid X)\to0$ (Table~\ref{tab:svx}), confirming
Proposition~\ref{prop:arbias}. Section~\ref{sec:head} compares both against a RESET.

\subsection{The over-identified case}\label{sec:overid}

The bias-targeting restriction has an exact over-identified analogue, at the cost of the
single reduced form. Let $Z=(Z_1,\dots,Z_J)'$ be a vector of instruments with residualized
components $\tilde Z_j=Z_j-L(Z_j\mid1,X)$, and collect the second moments
$Q_{ZD}=\E[\tilde Z D]$, $Q_{ZY}=\E[\tilde Z Y]$, and $Q_{ZZ}=\E[\tilde Z\tilde Z']$. Each
inherits the decomposition of Lemma~\ref{lem:decomp}, $Q_{ZD}=S_D+\theta_D$ and
$Q_{ZY}=S_Y+\theta_Y$ with per-instrument signal and contamination vectors
$S_{W,j}=\E[\Cov(Z_j,W\mid X)]$ and $\theta_{W,j}=\E[h_j(X)W]$, while
$Q_{ZZ}=\Sigma_S+H$ with $\Sigma_S=\E[\Cov(Z\mid X)]$ and $H=\E[h(X)h(X)']$ the curvature
Gram matrix. The population 2SLS coefficient on $D$ is the ratio of quadratic forms
\begin{equation}\label{eq:oid2sls}
\beta_{2SLS}=\frac{Q_{ZD}'\,Q_{ZZ}^{-1}\,Q_{ZY}}{Q_{ZD}'\,Q_{ZZ}^{-1}\,Q_{ZD}},
\end{equation}
which is just-identified IV with the combined instrument $\hat D=\pi'\tilde Z$,
$\pi=Q_{ZZ}^{-1}Q_{ZD}$. Let $\bar L$ be the benchmark of Section~\ref{sec:correction}, the
2SLS estimand built from the clean instruments $Z_j-e_j(X)$, which specializes
\eqref{eq:oid2sls} with $(S_D,S_Y,\Sigma_S)$ in place of $(Q_{ZD},Q_{ZY},Q_{ZZ})$ and
reduces to $S_Y/S_D$ when $J=1$.

\begin{proposition}[Over-identified screen and exact restriction]\label{prop:overid}
(i) If $\theta_D=\theta_Y=0$ then $\E[\hat D\,W]=\pi'S_W$ for $W\in\{D,Y\}$, so the combined
instrument's curvature part $\pi'h(X)$ is orthogonal to $D$ and $Y$ and $\beta_{2SLS}$ is a
contamination-free instrument combination whatever the curvature $H$ carried by the
weighting matrix. (ii) With $\E[\hat D D]\neq0$,
\begin{equation}\label{eq:oidexact}
\beta_{2SLS}=\bar L\quad\Longleftrightarrow\quad
H_0^{\dagger}:\ \E\big[\hat D\,(Y-\bar L D)\big]=0 .
\end{equation}
(iii) At $J=1$, $\hat D\propto\tilde Z$ and $S_Y-\bar L S_D=0$, so \eqref{eq:oidexact}
becomes $\E[h(X)\{Y-\bar L D\}]=0$ of \eqref{eq:exactnull}.
\end{proposition}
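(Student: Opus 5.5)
The plan is to write the combined instrument as a signal part plus a curvature part and read each claim off that split. Componentwise, Lemma~\ref{lem:decomp} gives $\tilde Z=(Z-e(X))+h(X)$, where $e=(e_1,\dots,e_J)'$ and $h=(h_1,\dots,h_J)'$. Hence
\[
\E[\tilde Z W]=\E[(Z-e(X))W]+\E[h(X)W]=S_W+\theta_W ,\qquad W\in\{D,Y\},
\]
because $\E[(Z_j-e_j(X))W\mid X]=\Cov(Z_j,W\mid X)$. This is the vector analogue of the proof of Proposition~\ref{prop:identity}. Since $\pi$ is a fixed population vector, $\hat D=\pi'\tilde Z$ is linear in $\tilde Z$, and so
\[
\E[\hat D W]=\pi'S_W+\pi'\theta_W .
\]
The first term is the clean-instrument moment and the second is the curvature loading.

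For part (i), I impose $\theta_D=\theta_Y=0$. The display then gives $\E[\hat D W]=\pi'S_W$. Equivalently, $\E[\pi'h(X)\,W]=\pi'\theta_W=0$, so the curvature part of $\hat D$ is orthogonal to $D$ and $Y$. The matrix $H$ enters only through $\pi=(\Sigma_S+H)^{-1}(S_D+\theta_D)$. That $\pi$ is just some fixed weight vector, so $\beta_{2SLS}=\pi'S_Y/\pi'S_D$ is the IV estimand of the contamination-free combination $\pi'(Z-e(X))$, whatever $H$ is. The care needed here is only to say that "contamination-free" means the combination uses the clean instruments with weights $\pi$, not that $\pi$ is the efficient clean weight.

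Part (ii) uses the just-identified reading of \eqref{eq:oid2sls}. Because $Q_{ZD}'Q_{ZZ}^{-1}Q_{ZY}=\pi'Q_{ZY}=\E[\hat D Y]$, and similarly for the denominator, we have $\beta_{2SLS}=\E[\hat D Y]/\E[\hat D D]$. With $\E[\hat D D]\neq0$,
\[
\beta_{2SLS}-\bar L=\frac{\E[\hat D(Y-\bar L D)]}{\E[\hat D D]} .
\]
The left side is zero exactly when the numerator is zero, which is the equivalence \eqref{eq:oidexact}. The step most worth checking is that \eqref{eq:oid2sls} really equals this ratio. That holds because $\hat D$ is the population projection of $D$ on $\tilde Z$, and 2SLS with a single endogenous regressor reduces to IV with the fitted value.

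For part (iii), set $J=1$. Then $\pi=Q_{ZD}/Q_{ZZ}$ is a nonzero scalar, because $\E[\hat D D]\neq0$ forces $Q_{ZD}\neq0$, and so $\hat D\propto\tilde Z$. The restriction becomes
\[
\E[\tilde Z(Y-\bar L D)]=(S_Y-\bar L S_D)+(\theta_Y-\bar L\theta_D)=0 .
\]
Here $S_Y-\bar L S_D=0$ by the definition $\bar L=S_Y/S_D$, which is well defined under Assumption~\ref{as:rel}. What remains is $\theta_Y-\bar L\theta_D=\E[h(X)\{Y-\bar L D\}]=0$, which is \eqref{eq:exactnull}.

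I do not expect a real obstacle. The only delicate point is part (ii): for $J>1$ the $S$-part $\pi'(S_Y-\bar L S_D)$ does not vanish, because $\bar L$ is built with the weights $\Sigma_S^{-1}S_D$ rather than $\pi$. For that reason I would state \eqref{eq:oidexact} as a restriction on the full $\hat D$ and not on its curvature part alone.
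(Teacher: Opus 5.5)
Your proposal is correct and follows the paper's proof essentially step for step: the split $\E[\hat D W]=\pi'S_W+\pi'\theta_W$ for (i), the representation $\beta_{2SLS}=\E[\hat D Y]/\E[\hat D D]$ for (ii), and, for (iii), cancelling the scalar $\pi$ and using $S_Y-\bar L S_D=0$. Your additional checks are consistent with the paper's surrounding discussion: that $\pi\neq0$ at $J=1$, and that $\pi'(S_Y-\bar L S_D)$ need not vanish for $J>1$.
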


\begin{proof}
For (i), $\E[\hat D W]=\pi'Q_{ZW}=\pi'(S_W+\theta_W)=\pi'S_W$ when $\theta_W=0$, and since
$\hat D=\pi'(Z-e(X))+\pi'h(X)$ the second term contributes $\pi'\theta_W=0$ to each
moment. For (ii), $\beta_{2SLS}=\E[\hat D Y]/\E[\hat D D]$ by \eqref{eq:oid2sls}, so
$\beta_{2SLS}-\bar L=\E[\hat D(Y-\bar L D)]/\E[\hat D D]$. For (iii), at $J=1$ the scalar
$\pi$ cancels and $\bar L=S_Y/S_D$ gives $S_Y-\bar L S_D=0$, leaving
$\theta_Y-\bar L\theta_D=\E[h(X)\{Y-\bar L D\}]$.
\end{proof}

The screen $\theta_D=\theta_Y=0$ therefore stays a \emph{sufficient} test for the absence
of contamination, now for a transparent reason. Any combination of instruments whose
per-component curvature is orthogonal to $D$ and $Y$ is itself a clean instrument, so the
weighting $\pi$ may be contaminated by $H$ without biasing the estimand, and the
finite-sample behaviour of the stacked $\chi^2_{2J}$ screen is the design~(C) evidence of
Section~\ref{sec:overidMC}. What the screen cannot do, exactly as in the just-identified
case, is separate harmful from harmless curvature, and \eqref{eq:oidexact} is the
restriction that does. Written out it is $\pi'(Q_{ZY}-\bar L Q_{ZD})=0$, and unlike the
just-identified null $\E[h(X)\{Y-\bar L D\}]=0$ it does not reduce to a statement about the
curvature alone. The combination $\pi=Q_{ZZ}^{-1}Q_{ZD}$ carries the contamination through
both the moment vector $Q_{ZD}$ and the inverse second-moment matrix $Q_{ZZ}^{-1}$, so the
exact analogue must invert $Q_{ZZ}$ and couples the signal-side over-identification
weighting with the contamination in a way the single reduced form does not. This is the
sense in which over-identification is the less mechanical direction.

The sample analogue plugs the residualized second moments and the corrected
$\hat{\bar L}$ of \eqref{eq:correction} into \eqref{eq:oidexact}, forming
$\hat g=n^{-1}\sum_i\hat D_i(Y_i-\hat{\bar L}D_i)$ and $T_n^{\dagger}=\hat g^2/\hat v$ with
$\hat v$ the pairs-bootstrap variance. As a smooth function of sample second moments
$\hat g$ is asymptotically normal when the combined first stage $\E[\hat D D]$ is bounded
away from zero, so $T_n^{\dagger}\convd\chi^2_1$ under $H_0^{\dagger}$, and clearing
$\E[\hat D D]$ from the denominator gives a signal-cleared form valid at the collinearity
boundary exactly as $A_n$ does for the just-identified $\psi$ in \eqref{eq:psi}.

\subsection{Reporting the contamination share}\label{sec:share}
Beyond a hypothesis test, the decomposition yields a single interpretable number.
Define the \emph{contamination share}
\begin{equation}\label{eq:share}
\varrho\;\equiv\;\frac{\theta_D}{S_D+\theta_D}\;=\;\frac{\theta_D}{\E[\Ztil D]},
\end{equation}
the fraction of the observed first-stage \emph{covariance} $\E[\Ztil D]$
attributable to propensity curvature rather than to the within-cell signal. It is a
share of the covariance, not of the conditional $F$ or the concentration parameter,
which are quadratic in it. In the nonlinear-propensity family of
Section~\ref{sec:mc}, \eqref{eq:share} reduces to
$\varrho=\E[h^2]/\Var(\Ztil)\in[0,1]$. In general $\varrho$ is a \emph{signed}
decomposition of the first-stage covariance. Since $S_D>0$ always but $\theta_D$ is a
signed covariance, $\varrho\in[0,1]$ and curvature \emph{inflates} the $F$ only under
$\theta_D>0$, the sign the mode-(a) designs here satisfy. When $\theta_D<0$ the
curvature instead deflates the $F$ and gives $\varrho<0$, and $\varrho$ is unstable when
signal and curvature nearly cancel ($\E[\Ztil D]\approx0$). We therefore report $\varrho$
with
its sign. It is estimated by
$\hat\varrho=\hat\theta_D/\widehat{\E[\Ztil D]}$ with
$\widehat{\E[\Ztil D]}=n^{-1}\sum_i \Ztil_i D_i$, and a percentile bootstrap
furnishes a confidence interval. We recommend reporting $\hat\varrho$ next to the
conditional $F$. A large $F$ paired with a non-negligible $\hat\varrho$ signals
that the strength is partly manufactured, the signature of mode~(a). Yet $\varrho$ is a first-stage gauge and measures only mode~(a). It is \emph{not} a
bias estimate, because by Proposition~\ref{prop:impossible} the bias also loads on
the outcome-side term $\theta_Y$. A small $\hat\varrho$ therefore does not certify a
design. When it is paired with a directed-test rejection, that combination is
itself the signature of mode~(b), namely honest strength with outcome-side bias, as
in the health insurance design, where $\hat\varrho=0.09$ yet the estimand more than
doubles once corrected.
The share thus complements, and does not replace, the outcome-based tests of
Sections~\ref{sec:test}--\ref{sec:exact}. Read together, a large $F$, the share
$\hat\varrho$, and the directed-test $p$-value locate which mode, if either, is in
play.

%=============================================================================
\section{Bias-corrected estimation and the identification trade-off}\label{sec:correction}
%=============================================================================

The decomposition suggests an immediate correction, namely to subtract the
estimated contamination from numerator and denominator. Using
$\E[\Ztil Y]-\theta_Y=\E[(Z-e(X))Y]$ and likewise for $D$, the corrected estimand
equals the signal ratio $\bar L$. It is implemented by replacing the
linear-projection residual $\Ztil$ with the residual from a \emph{flexibly}
estimated propensity $\hat e(\cdot)$,
\begin{equation}\label{eq:correction}
\hat{\bar L}=\frac{\sum_{i}\big(Z_i-\hat e(X_i)\big)Y_i}
                  {\sum_{i}\big(Z_i-\hat e(X_i)\big)D_i}.
\end{equation}

\begin{proposition}[Consistency of the corrected estimator]\label{prop:correction}
Suppose $\hat e(\cdot)$ satisfies $\|\hat e-e\|_{L^2}=o_p(1)$ and, in the fixed
data-generating process, $S_D=\E[\Cov(Z,D\mid X)]\neq0$. Then under
Assumptions~\ref{as:lin}--\ref{as:reg},
$\hat{\bar L}\convp \bar L=S_Y/S_D$, the saturated specification estimand, a convex
average of within-cell IV slopes and, in the binary specialization of
Remark~\ref{rem:binary}, a convex average of conditional LATEs. Consistency
requires no orthogonality or rate condition. Asymptotic normality and inference do (see the
Appendix and Proposition~\ref{prop:orthogonal}), and, since $S_D$ may be small,
weak-instrument-robust confidence sets are appropriate.
\end{proposition}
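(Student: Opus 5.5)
The plan is to show that numerator and denominator of \eqref{eq:correction}, each scaled by $n^{-1}$, converge in probability to $S_Y$ and $S_D$, and then apply the continuous mapping theorem. For $W\in\{D,Y\}$ I split
\[
n^{-1}\sum_i (Z_i-\hat e(X_i))W_i
= n^{-1}\sum_i (Z_i-e(X_i))W_i \;+\; n^{-1}\sum_i (e(X_i)-\hat e(X_i))W_i .
\]
The first average has mean $\E[(Z-e(X))W]=\E[\E[(Z-e(X))W\mid X]]=\E[\Cov(Z,W\mid X)]=S_W$. This is the same step used in the proof of Proposition~\ref{prop:identity}. The average has finite mean by Assumption~\ref{as:reg}, since $Z$ and $W$ are square integrable and Cauchy--Schwarz applies. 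The weak law of large numbers under i.i.d.\ sampling then gives convergence to $S_W$.

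The second term is the error from the estimated propensity. I would bound it by Cauchy--Schwarz, obtaining
\[
\Big(n^{-1}\sum_i (\hat e(X_i)-e(X_i))^2\Big)^{1/2}\Big(n^{-1}\sum_i W_i^2\Big)^{1/2}.
\]
The second factor is $O_p(1)$ by the law of large numbers and finite second moments. The first factor is the empirical $L^2$ norm of $\hat e-e$. The hypothesis controls only the population norm $\|\hat e-e\|_{L^2}$, and the gap between the two is where I expect the main obstacle, because $\hat e$ is estimated on the same sample. My first approach is cross-fitting, in line with Proposition~\ref{prop:orthogonal}. If $\hat e$ for fold $k$ is fit on the other folds, then conditional on that training data the fold-$k$ empirical mean of $(\hat e-e)^2$ has conditional expectation $\|\hat e-e\|_{L^2}^2=o_p(1)$. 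A conditional Markov inequality then shows the empirical norm is $o_p(1)$. Without cross-fitting I would either read the assumption as a statement about the empirical norm, or add a Glivenko--Cantelli condition on the class containing $\hat e$. I would state this choice explicitly, since it is the only nontrivial point. Note that no rate is needed, because the error enters only additively and linearly multiplied by an $O_p(1)$ factor. This is why consistency needs neither orthogonality nor a rate.

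Putting the two terms together, the denominator converges to $S_D\neq0$, so with probability tending to one it is bounded away from zero. The continuous mapping theorem then gives $\hat{\bar L}\convp S_Y/S_D=\bar L$.

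The interpretive claims follow from results already stated. Writing $\bar L=\E[\Cov(Z,D\mid X)\tau(X)]/\E[\Cov(Z,D\mid X)]$ gives a weighted average of within-cell slopes, and the weights are convex under the sign condition of Remark~\ref{rem:sign}. In the binary case, Remark~\ref{rem:binary} identifies $\tau$ with $\late$, so $\bar L$ is a convex average of conditional LATEs. The closing remarks about inference, namely the rate conditions and weak-instrument-robust sets when $S_D$ is small, are pointers to the Appendix and to Proposition~\ref{prop:orthogonal} rather than claims needing proof here.
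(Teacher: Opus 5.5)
Your proposal is correct and follows the paper's proof essentially step for step. You use the same split of each normalized sum into the oracle average $n^{-1}\sum_i(Z_i-e(X_i))W_i\to S_W$ plus a propensity-error term bounded by Cauchy--Schwarz, and you identify the same delicate point, the gap between the empirical and population norms of $\hat e-e$. The paper closes that gap as you propose, with a uniform law of large numbers for the fixed low-dimensional basis and cross-fitting for flexible or machine-learning $\hat e$ (your conditional Markov argument spells out the latter step), and then concludes from $S_D\neq0$ and the continuous mapping theorem.
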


Estimator~\eqref{eq:correction} is the residualized-propensity, or
partialling-out, IV estimator. It coincides with the recommendation of
\citet{blandhol2022} to saturate the covariates, here written in
influence-function form. Our contribution is not the estimator but the following
statement of its cost, which the collinearity limit makes precise.

\begin{proposition}[No free lunch]\label{prop:tradeoff}
The corrected estimator's own first-stage covariance is exactly the signal term,
$\E[(Z-e(X))D]=S_D=\E[\Cov(Z,D\mid X)]$. Under the collinearity sequence of
Assumption~\ref{as:seq}, $S_D\to0$. Hence the corrected estimator is weakly
identified in precisely the regime in which the correction matters most. There
is no specification of the covariates that both removes the contamination and
preserves the apparent first-stage strength, because the strength was the contamination.
\end{proposition}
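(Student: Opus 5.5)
The proof splits into an exact population identity and a limit statement that follows from Proposition~\ref{prop:collin}. For the identity, I would write the corrected estimator's population first-stage moment as $\E[(Z-e(X))D]$ and condition on $X$. Since $\E[Z-e(X)\mid X]=0$, the law of iterated expectations gives
\[
\E[(Z-e(X))D]=\E\big[\E[(Z-e(X))(D-m_D(X))\mid X]\big]=\E[\Cov(Z,D\mid X)]=S_D .
\]
This is the same computation used in the proof of Proposition~\ref{prop:identity}, with the curvature term $h(X)$ removed from the residual. The same argument with $Y$ in place of $D$ gives the numerator $S_Y$. The population counterpart of \eqref{eq:correction} is therefore $S_Y/S_D=\bar L$, consistent with Proposition~\ref{prop:correction}. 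Only finite second moments (Assumption~\ref{as:reg}) are needed so that these expectations exist.

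Next I would pass to the collinearity sequence. By Cauchy--Schwarz applied conditionally,
\[
|S_{D,t}|\le \E\big[\Var_t(Z\mid X)^{1/2}\Var_t(D\mid X)^{1/2}\big]\le \E[\Var_t(Z\mid X)]^{1/2}\,\E[\Var_t(D\mid X)]^{1/2}.
\]
The first factor tends to zero by Assumption~\ref{as:seq}(i), and the second is bounded by (iii), so $S_{D,t}\to0$. This is the same step that delivers the signal limit in Proposition~\ref{prop:collin}, and I would simply invoke that result. Weak identification then follows in the usual sense. The corrected estimator's first-stage slope is $S_{D,t}/\E[\Var_t(Z\mid X)]$, and its concentration parameter $n\,S_{D,t}^2/(\E[\Var_t(Z\mid X)]\,\sigma_v^2)$ is at most $n\,\E[\Var_t(D\mid X)]/\sigma_v^2$ in general. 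Comparing with Corollary~\ref{cor:Ffails}, the linear-specification first-stage covariance $S_{D,t}+\theta_{D,t}$ converges to $\Cov(h^\ast,m_D^\ast)\neq0$. The whole of that limiting covariance is $\theta_D$, so $\varrho_t\to1$.

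The main obstacle is the final sentence of the statement, that no covariate specification removes the contamination while preserving the strength. I would make it precise as follows. Any specification whose residualized instrument $Z-g(X)$ yields an estimand free of contamination for every outcome law compatible with $(Z,D,X)$ must have $\E[(g(X)-e(X))W]=0$ for all admissible $W$, which forces $g=e$. This uses the free outcome direction from the proof of Proposition~\ref{prop:impossible}: shift $m_Y$ along $g-e$. Once $g=e$, the first-stage covariance equals $S_D\to0$. The strength that disappears is exactly $\theta_{D,t}$, which equals the manufactured share $\varrho_t\,\E[\Ztil_t D]$. I expect pinning down the class of ``specifications'' to be the delicate step. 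I would restrict to residualizations of the form $Z-g(X)$, which covers linear, series and saturated controls, and state the result within that class.
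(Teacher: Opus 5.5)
Your identity $\E[(Z-e(X))D]=S_D$ and the limit $S_{D,t}\to0$ are exactly the paper's proof. That proof consists of these two facts, with the covariance written in the binary form $\E[e(1-e)p]$. It reads ``weakly identified'' directly off the vanishing covariance and gives no separate argument for the final sentence.

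Where you go further, one step fails. Your bound $nS_{D,t}^2/(\E[\Var_t(Z\mid X)]\,\sigma_v^2)\le n\,\E[\Var_t(D\mid X)]/\sigma_v^2$ is bounded in $t$ but does not vanish. It is therefore compatible with strong identification and proves nothing about weakness. Nor can the step be repaired from Assumption~\ref{as:seq} alone. Take $Z=e(X)+s_t\varepsilon$, $D=m(X)+\varepsilon+\nu$ and $Y=D+u$, with $e$ nonlinear, $m=e$, independent noises $\varepsilon,\nu,u$ ($\varepsilon,\nu$ standard normal), and $s_t\to0$. Every part of Assumption~\ref{as:seq} holds, and $S_{D,t}=s_t\to0$. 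Yet $S_{D,t}^2/\E[\Var_t(Z\mid X)]=1$ for all $t$, because the corrected instrument $s_t\varepsilon$ is just a rescaling of $\varepsilon$. A covariance shrinks with the scale of $Z-e(X)$; identification does not. To conclude weak identification you need a bounded within-cell first-stage slope, $|\Cov(Z,D\mid X)|\le\bar\pi\,\Var(Z\mid X)$. That gives $S_{D,t}^2/\E[\Var_t(Z\mid X)]\le\bar\pi^2\,\E[\Var_t(Z\mid X)]\to0$. In the binary case $\bar\pi=1$ because $|p|\le1$, so the paper's reading is right there. In general, either impose this condition or state the conclusion for the covariance only.

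Your formalization of the last sentence is correct and is a genuine addition beyond the paper. Shifting the outcome by $t\,(e(X)-g(X))$ leaves the law of $(Z,D,X)$, $S_Y$ and $\bar L$ unchanged. It moves $\E[(Z-g(X))Y]$ by $t\,\E[(e(X)-g(X))^2]$. So a residualization that is contamination-free for every admissible outcome law must have $g=e$ almost surely, and its first-stage covariance is then $S_D$. The FWL residual from any linearly entered covariate vector $\phi(X)$ has $g=L(Z\mid\phi(X))$, so your class covers series and saturated controls. The result then says precisely that uniform decontamination requires the specification to span $e$. It inherits the scale caveat above, though: what it shows cannot be preserved is the first-stage covariance, not necessarily the concentration parameter.
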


\begin{corollary}[The identification surrendered equals the contamination removed]\label{cor:frontier}
The corrected and reported (contaminated) first-stage covariances obey the exact
identity
\[
\underbrace{S_D}_{\text{corrected first stage}}
=(1-\varrho)\,\underbrace{\E[\Ztil D]}_{\text{reported first stage}},
\qquad \varrho=\frac{\theta_D}{\E[\Ztil D]},
\]
where $\varrho$ is the contamination share \eqref{eq:share}. De-contamination
scales the identifying covariance down by exactly $\varrho$, and the corrected
concentration parameter $S_D^2/\{\sigma_v^2\,\E[\Var(Z\mid X)]\}$ carries a factor
$(1-\varrho)^2$ in its numerator. The trade-off of
Proposition~\ref{prop:tradeoff} is therefore quantitative and known in advance.
The fraction of identifying variation surrendered by correcting is the same
$\varrho$ that measured how much of the apparent strength was contamination. A
design with small $\varrho$, such as the health insurance instrument of
Section~\ref{sec:hi} ($\hat\varrho=0.09$), is corrected at negligible cost,
whereas a design whose strength is mostly curvature ($\varrho\to1$) cannot be both
corrected and strongly identified.
\end{corollary}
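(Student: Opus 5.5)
The plan is to decompose the reported first-stage covariance using the strength--bias identity, and then read off the three claims in order. By Proposition~\ref{prop:identity} (equivalently Lemma~\ref{lem:decomp}), $\E[\Ztil D]=S_D+\theta_D$. Assumption~\ref{as:reg} guarantees $\E[\Ztil D]\neq0$, so $\varrho=\theta_D/\E[\Ztil D]$ in \eqref{eq:share} is well defined. It follows that
\[
(1-\varrho)\,\E[\Ztil D]=\E[\Ztil D]-\theta_D=S_D .
\]
Proposition~\ref{prop:tradeoff} identifies $S_D=\E[(Z-e(X))D]$ as the corrected estimator's first-stage covariance. Writing $\E[(Z-e(X))D]=\E\{\E[(Z-e(X))D\mid X]\}=\E[\Cov(Z,D\mid X)]$ by iterated expectations confirms this, and gives the displayed identity.

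For the concentration parameter, I will take the corrected instrument to be $Z-e(X)$. Its variance is $\Var(Z-e(X))=\E[\Var(Z\mid X)]$, so the population first-stage slope is $S_D/\E[\Var(Z\mid X)]$. The concentration parameter is therefore, up to the factor $n$, $\pi^2\Var(\text{instrument})/\sigma_v^2=S_D^2/\{\sigma_v^2\E[\Var(Z\mid X)]\}$, matching the display. Substituting $S_D=(1-\varrho)\E[\Ztil D]$ gives the numerator $(1-\varrho)^2\E[\Ztil D]^2$, which is the claimed factor.

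The main point needing care is what ``$\sigma_v^2$'' means. The first-stage error variance differs between the corrected and the reported regressions, so the statement should be read as a claim about the numerator only, which is how it is phrased. I will note explicitly that the denominators $\E[\Var(Z\mid X)]$ and $\Var(\Ztil)=\E[\Var(Z\mid X)]+\E[h^2]$ also differ. The orthogonality $\E[(Z-e)h]=0$ gives this decomposition of $\Var(\Ztil)$. So the ratio of the two concentration parameters is not exactly $(1-\varrho)^2$, and only the covariance scaling is exact.

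Finally, the interpretive claims follow directly. A small $\varrho$ means $S_D\approx\E[\Ztil D]$, so correction costs little. As $\varrho\to1$, $S_D\to0$, which is the weak-identification regime of Proposition~\ref{prop:tradeoff}. I will also remark that when $\theta_D<0$, $\varrho<0$ and correction strengthens the first stage, consistent with the signed reading in Section~\ref{sec:share}.
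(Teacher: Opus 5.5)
Your proposal is correct and follows the paper's own proof essentially verbatim. Substituting $S_D=\E[\Ztil D]-\theta_D$ into $\varrho=\theta_D/\E[\Ztil D]$ gives the identity, and the concentration-parameter claim follows because the corrected instrument $Z-e(X)$ has covariance $S_D$ with $D$ and variance $\E[\Var(Z\mid X)]$. Your caveat is also accurate: since $\sigma_v^2$ and the instrument variances differ between the corrected and reported regressions, only the numerator scales exactly by $(1-\varrho)^2$, which matches how the statement is phrased.
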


The practical message follows. The analyst
facing a strong conditional $F$ under collinearity between instrument and covariates has not
found a strong design that merely needs a better second-stage specification. The
strength is an artifact of the very nonlinearity that biases the estimand, and
removing it returns the design to (correctly) weak. The response is to diagnose
with our test, followed by either a redesigned instrument or explicit
weak-IV-robust inference on the corrected estimand
\citep{andrews2019,montielolea2013}.

%=============================================================================
\section{Monte Carlo evidence}\label{sec:mc}
%=============================================================================

\subsection{Designs}
Throughout, $X\sim\text{Uniform}(-2,2)$ and $g(X)=X^2-\tfrac43$ (mean zero), the
per-unit structural effect is $\tau(X)=1+0.25X\in[0.5,1.5]$, so the hull of within-cell
IV slopes is $[0.5,1.5]$, and the outcome is $Y=A_\mu\,g(X)+\tau(X)D+\mathcal N(0,1)$
with $A_\mu=1.5$. We use four designs that span the support the estimator meets in
practice. \textbf{(A)} An \emph{ordered} treatment
$D\in\{0,1,2,3\}$, built from three independent monotone stages
$D=\sum_{j=1}^3\mathbf 1\{V_j\le\pi_{0j}+Z\,\delta_j\}$ with a binary instrument
$Z\sim\text{Bernoulli}(e(X))$ and $e(X)=\Lambda(\kappa\,g(X))$, where the knob $\kappa$
drives $\Var(Z\mid X)\to0$. \textbf{(B)} A \emph{continuous} instrument
$Z=e(X)+\sigma\,\mathcal N(0,1)$ with $e(X)=0.3X+0.6\,g(X)$ and a \emph{continuous}
treatment $D=0.5+0.5\,g(X)+(Z-e(X))+0.3\,\mathcal N(0,1)$, so $\Var(Z\mid X)=\sigma^2$ is
the collinearity knob and $\Cov(Z,D\mid X)=\sigma^2$ the signal. \textbf{(C)} An
\emph{over-identified} design with two binary instruments whose propensities
$e_1,e_2$ are nonlinear in $X$ and a binary treatment monotone in both. \textbf{(D)} A
binary instrument and binary treatment, $e(X)=\Lambda(\kappa g(X))$ with a binary
complier structure, the case of Remark~\ref{rem:binary}, used for the finite-sample
size, power, and head-to-head exercises below. The within-cell covariance
$\Cov(Z,D\mid X)$ is of a single sign by construction in every design, so the sign
condition of Section~\ref{sec:framework} holds and the within-cell hull $[0.5,1.5]$ is
the correct convex benchmark. All numbers are from the \texttt{R} scripts released with
the paper.

\subsection{The collinearity limit}
Table~\ref{tab:collin} traces designs (A) and (B) as $\Var(Z\mid X)\to0$, with
$n=\num{200000}$ to approximate the population. In both, the 2SLS
estimand leaves the within-cell slope hull $[0.5,1.5]$ already at mild collinearity and
converges monotonically to the theoretical contamination limit
$\Cov(h,m_Y)/\Cov(h,m_D)$ of Proposition~\ref{prop:collin} (which the ordered design
approaches from below as its own limit falls, and the continuous design hits exactly at
$4.0$). Throughout, the conditional first-stage $F$ stays enormous, on the order of
$\num{12000}$ in (A) and in the millions in (B), never warning, confirming
Corollary~\ref{cor:Ffails}. Figure~\ref{fig:collin} plots both. On a single contaminated
dataset from each design ($n=\num{100000}$) the directed test rejects decisively, with
screening and bias-targeting $p$-values below $10^{-4}$, and the bias-corrected estimator
recovers the within-cell benchmark, $0.99\,[0.94,1.05]$ in (A) and $1.00\,[0.96,1.04]$ in
(B).

\begin{table}[t]\centering
\caption{The collinearity limit ($n=\num{200000}$). In panel~(A)
the treatment is ordered, $D\in\{0,1,2,3\}$, and the instrument binary. In panel~(B) both
the instrument and the treatment are continuous. In each the 2SLS estimand leaves the
within-cell slope hull $[0.5,1.5]$ and converges to the contamination limit
$\Cov(h,m_Y)/\Cov(h,m_D)$, while the conditional $F$ stays orders of magnitude above the
threshold of $10$.}
\label{tab:collin}
\begin{tabular}{rrrrr}
\toprule
\multicolumn{5}{l}{\emph{Panel (A). Ordered treatment $D\in\{0,1,2,3\}$, binary instrument}}\\
$\kappa$ & $\beta_{2SLS}$ & cond.\ $F$ & contam.\ limit & $\E[\Var(Z\mid X)]$\\
\midrule
0.10 & 1.530 & \num{12570} & 151.4 & 0.249\\
0.50 & 3.539 & \num{12635} & 32.8  & 0.230\\
1.00 & 5.377 & \num{12821} & 19.1  & 0.189\\
2.00 & 7.241 & \num{12890} & 13.1  & 0.119\\
4.00 & 8.415 & \num{12666} & 10.7  & 0.057\\
8.00 & 8.844 & \num{12187} & 9.78  & 0.028\\
32.0 & 8.857 & \num{12473} & 9.10  & 0.007\\
\midrule
\multicolumn{5}{l}{\emph{Panel (B). Continuous instrument and continuous treatment}}\\
$\sigma$ & $\beta_{2SLS}$ & cond.\ $F$ & contam.\ limit & $\E[\Var(Z\mid X)]$\\
\midrule
1.00 & 1.897 & $2.8{\times}10^{6}$ & 4.00 & 1.000\\
0.60 & 2.623 & $1.5{\times}10^{6}$ & 4.00 & 0.360\\
0.40 & 3.176 & $1.1{\times}10^{6}$ & 4.00 & 0.160\\
0.25 & 3.616 & $9.1{\times}10^{5}$ & 4.00 & 0.062\\
0.15 & 3.859 & $8.2{\times}10^{5}$ & 4.00 & 0.022\\
0.05 & 3.974 & $7.9{\times}10^{5}$ & 4.00 & 0.003\\
0.02 & 3.993 & $8.0{\times}10^{5}$ & 4.00 & 0.000\\
\bottomrule
\end{tabular}
\end{table}

\begin{figure}[t]\centering
\includegraphics[width=\textwidth]{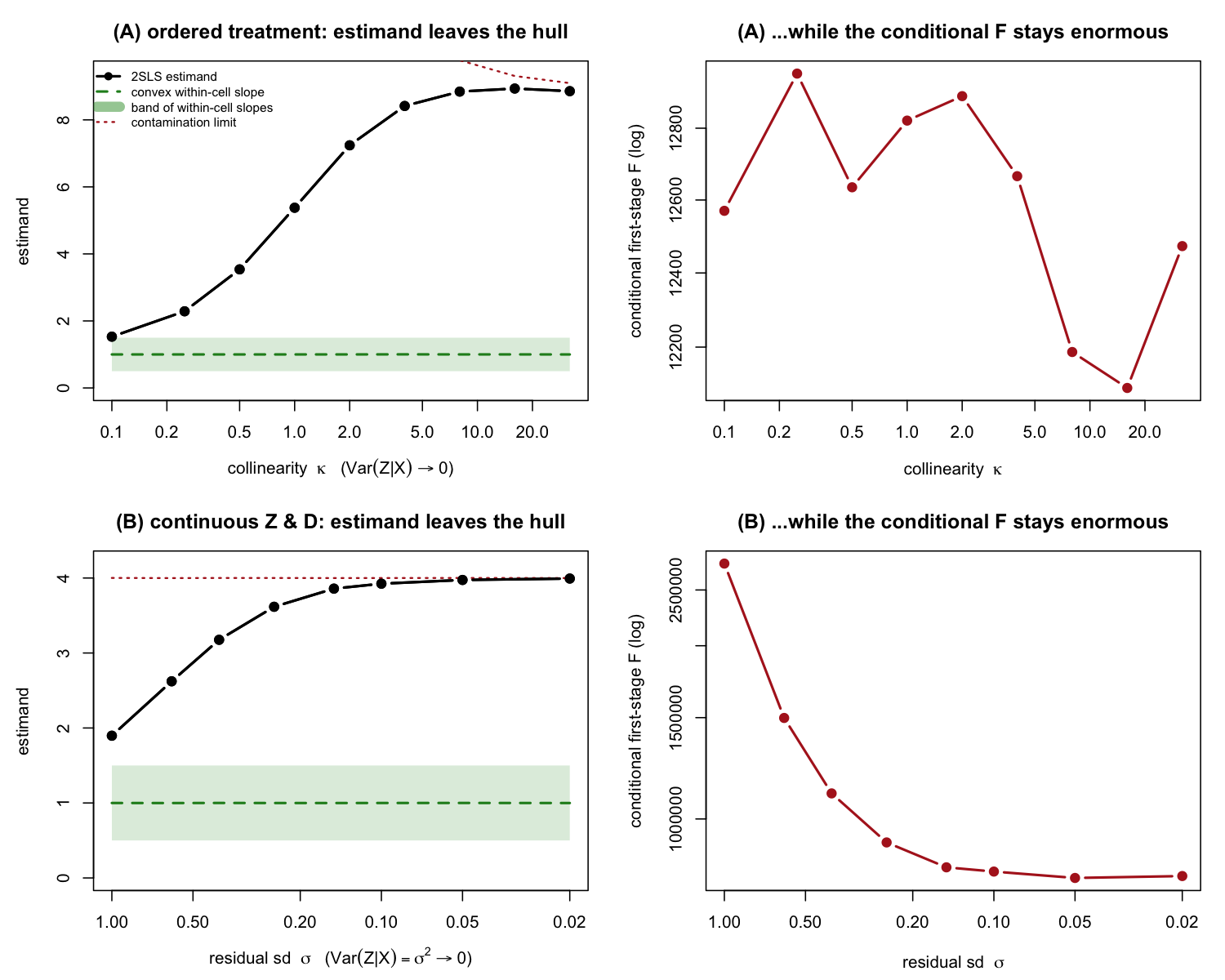}
\caption{Top row, ordered treatment (A). Bottom row, continuous instrument and treatment
(B). In the left panels, as $\Var(Z\mid X)\to0$ the 2SLS estimand (solid) leaves the band
of within-cell IV slopes (shaded, $[0.5,1.5]$) and departs from the convex within-cell
slope (dashed), converging to the contamination limit (dotted). In the right panels the
conditional first-stage $F$ (log scale) stays enormous throughout, far above the weak-IV
threshold of $10$.}
\label{fig:collin}
\end{figure}

\subsection{The binary case, where the estimand leaves the LATE hull}\label{sec:mcbinary}
Design (D) makes the collinearity limit concrete in the model where the benchmark has a
causal reading. Here the within-cell slopes are the conditional LATEs, so the hull
$[0.5,1.5]$ is the LATE hull, and $\bar L$ is a convex average of conditional
LATEs. Table~\ref{tab:collinbinary} traces the binary design as $\kappa$ rises, with
$n=\num{200000}$. The 2SLS estimand leaves the LATE hull already at mild collinearity and
climbs to $8.86$, far above $\sup_x\late(x)=1.5$, even though every conditional LATE is
positive and bounded by $1.5$. It converges to the contamination limit
$\Cov(h,m_Y)/\Cov(h,m_D)$, which is itself outside the hull, while the conditional first-stage
$F$ stays near $\num{40000}$ throughout, four orders of magnitude above the threshold of
$10$. This is the construction behind Proposition~\ref{prop:collin}, an instance in which
all $\late(x)>0$ yet the limit exceeds $\sup_x\late(x)$. Figure~\ref{fig:collinbinary} plots
the estimand leaving the band of all conditional LATEs against the unmoved $F$.

\begin{table}[t]\centering
\caption{The binary case ($n=\num{200000}$, $A_\mu=1.5$), design (D). True
$\late(X)\in[0.5,1.5]$ throughout. The 2SLS estimand leaves the LATE hull and converges to
the contamination limit $\Cov(h,m_Y)/\Cov(h,m_D)$, while the conditional $F$ stays near
$\num{40000}$.}
\label{tab:collinbinary}
\begin{tabular}{rrrrr}
\toprule
$\kappa$ & $\beta_{2SLS}$ & cond.\ $F$ & contam.\ limit & $\E[e(1-e)]$\\
\midrule
0.10 & 1.527 & \num{39808} & 151.4 & 0.249\\
0.25 & 2.316 & \num{39888} & 61.9  & 0.245\\
0.50 & 3.497 & \num{40459} & 32.8  & 0.230\\
1.00 & 5.408 & \num{38776} & 19.1  & 0.189\\
2.00 & 7.325 & \num{38904} & 13.1  & 0.118\\
4.00 & 8.487 & \num{37790} & 10.7  & 0.057\\
8.00 & 8.794 & \num{37881} & 9.77  & 0.027\\
16.0 & 8.877 & \num{37979} & 9.32  & 0.013\\
32.0 & 8.860 & \num{38215} & 9.11  & 0.007\\
\bottomrule
\end{tabular}
\end{table}

\begin{figure}[t]\centering
\includegraphics[width=\textwidth]{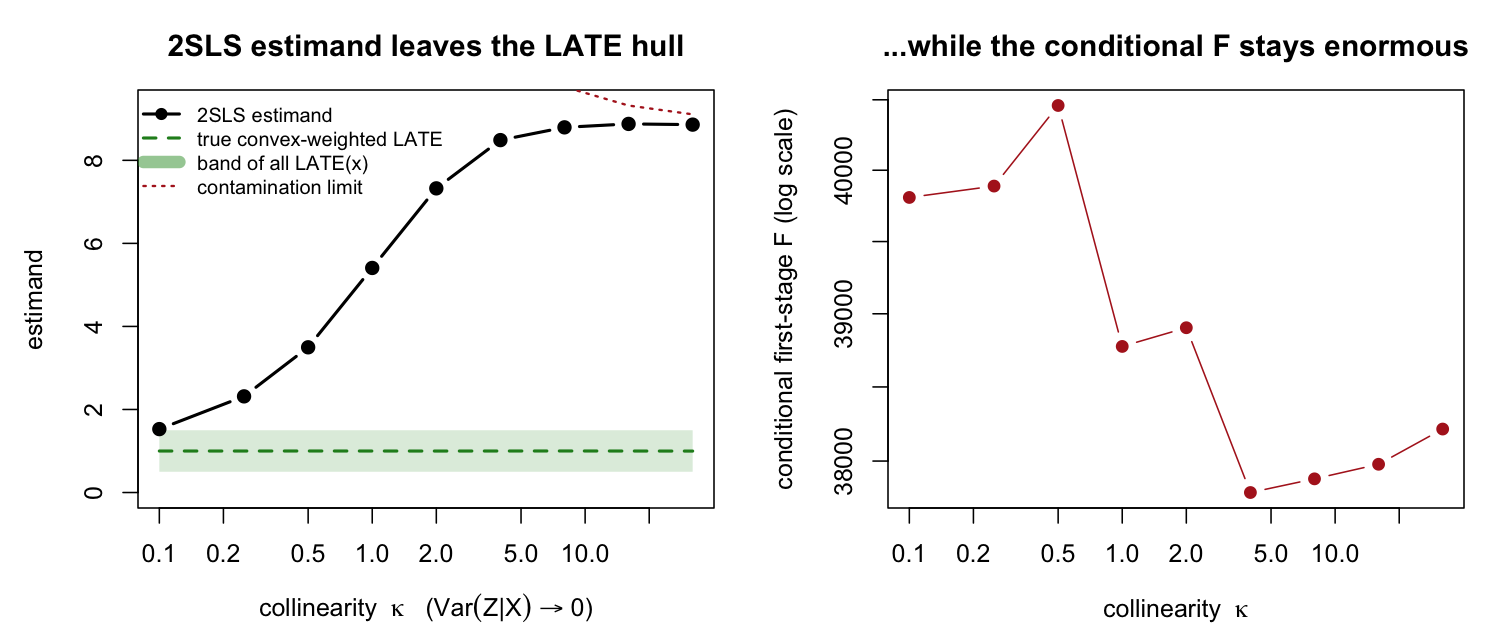}
\caption{The binary design (D). Left panel. As collinearity $\kappa$ rises, the 2SLS
estimand (solid) leaves the band of all conditional LATEs (shaded) and departs from the
true convex average of LATEs (dashed), converging to the contamination limit (dotted). Right
panel. The conditional first-stage $F$ (log scale) stays near $\num{40000}$ throughout, far
above the weak-IV threshold of $10$.}
\label{fig:collinbinary}
\end{figure}

\subsection{Over-identification and the stacked screen}\label{sec:overidMC}
Design (C) tests the over-identified extension. With two instruments the screen stacks the
four contamination moments $(\theta_{D,1},\theta_{Y,1},\theta_{D,2},\theta_{Y,2})$ and
refers the Wald statistic to $\chi^2_4$. Over $\num{2000}$ replications at $n=3000$, under
linear propensities (the null) it rejects $1.5\%$ of the time at the $5\%$ level, and under
curved propensities its power is $1.00$, while the joint first-stage $F$ of the two
instruments averages $131$ under the null and $246$ under the alternative. As in the
just-identified case the screen separates designs the first-stage $F$ cannot.

The same design tests the exact restriction \eqref{eq:oidexact}. Alongside the contaminated
design and a linear-propensity clean design, we add a \emph{harmless} over-identified
design, curved propensities with a linear outcome level and a homogeneous effect, where
$\theta_D,\theta_Y\neq0$ yet $\beta_{2SLS}=\bar L$. Table~\ref{tab:overid} reports rejection
rates. The stacked screen fires on both the contaminated and the harmless designs, because
curvature is present in each, while the exact test $T_n^{\dagger}$ of \eqref{eq:oidexact}
has power one against the contaminated design ($\beta_{2SLS}=9.5$ against $\bar L=1.0$) and
holds size on the harmless ($4.3\%$) and clean ($5.7\%$) designs, where
$\beta_{2SLS}=\bar L$. It is the over-identified counterpart of the head-to-head in
Table~\ref{tab:head}, separating harmful from harmless curvature exactly where the stacked
screen and the joint first-stage $F$ cannot.

\begin{table}[t]\centering
\caption{Over-identified head-to-head (two instruments, $\num{2000}$ replications, $n=3000$,
cubic basis, $B=199$). The stacked screen tests $\theta_D=\theta_Y=0$ ($\chi^2_4$); the
exact test $T_n^{\dagger}$ tests $\beta_{2SLS}=\bar L$ via \eqref{eq:oidexact} ($\chi^2_1$).
Rejection rates at the $5\%$ level. The harmless design has curvature present
($\theta\neq0$) but no bias ($\beta_{2SLS}=\bar L$), the case only the exact test resolves.}
\label{tab:overid}
\begin{tabular}{lcccc}
\toprule
design & stacked screen & exact $T_n^{\dagger}$ & $\beta_{2SLS}$ & $\bar L$\\
\midrule
CONTAMINATED (bias $\neq0$)     & 1.00  & 1.00           & 9.51 & 1.00\\
HARMLESS \ \ \ \ (bias $=0$)    & 1.00  & \textbf{0.043} & 1.00 & 0.99\\
CLEAN \ \ \ \ \ \ \ \ (bias $=0$) & 0.016 & 0.057        & 1.01 & 1.00\\
\bottomrule
\end{tabular}
\end{table}

\subsection{Size and power}
In design (D), the binary instrument and treatment, two exercises confirm the test's
finite-sample behaviour. Both are collected in \ref{app:mc}. On a single large contaminated dataset ($n=\num{100000}$,
$\kappa=4$) the directed test rejects at $p<10^{-4}$ while the conditional $F$ is
$\num{19737}$, and the bias-corrected estimator recovers the true convex average of LATEs (Table~\ref{tab:sizepower}). Over $\num{2000}$ replications at $n=3000$ it has a
conservative size of $1.9\%$ against the $5\%$ level under a linear propensity, and
power essentially one under a curved one, at
a mean conditional $F$ near $585$ in \emph{both} cases. The $F$ cannot discriminate
the two designs the test separates almost perfectly. A power sweep
(Table~\ref{tab:power}, Figure~\ref{fig:power}) raises the induced bias from $0$ to
$2.64$ while the mean conditional $F$ stays pinned near $\num{1740}$.

\subsection{Head-to-head comparison across three designs}\label{sec:head}
Still in design (D), Table~\ref{tab:head} and Figure~\ref{fig:head} compare three
procedures across three designs at $n=3000$. The first procedure is a Ramsey RESET testing
curvature of the propensity $Z\sim X$, a functional of $(Z,X)$ alone. The second
is the strongly identified screening test of Section~\ref{sec:test}
($H_0:\theta_D=\theta_Y=0$), which we label ``Directed-S''. The third is the
bias-targeting test \eqref{eq:exactnull} of Section~\ref{sec:exact}, labelled
``Directed-X''. The \emph{contaminated} design has a nonlinear propensity and a
nonlinear outcome, so it carries real bias. The \emph{harmless} design has a
nonlinear propensity but a linear outcome and a homogeneous effect, so the bias
is zero despite strong curvature. The \emph{clean} design has a linear propensity
and zero bias. All three designs carry a mean conditional $F$ near $600$.

The pattern illustrates Proposition~\ref{prop:impossible}. In the
harmless design the RESET rejects $100\%$ of the time. It detects the propensity
curvature but, being blind to the outcome, cannot tell that the curvature is
inconsequential, and the same $F\approx600$ that accompanies a large bias in the
contaminated design accompanies zero bias here. Directed-S also raises false alarms
($100\%$), because it tests the sufficient, not the necessary, condition. Only
Directed-X separates the three designs as the theory requires. It has full power
against genuine contamination, holds size in the harmless design ($4.7\%$ in the plug-in
form $T_n^{\ast}$ and $5.0\%$ in the signal-cleared form $A_n$) that the other two
procedures and the conditional $F$ cannot certify, and sits at $4.1\%$ in the clean
design, within sampling error of the nominal level. The two forms coincide here because
all three designs are strongly identified, as Proposition~\ref{prop:arbias} anticipates.
The clean-design size holds across the sample sizes of Table~\ref{tab:nsweep}, from
$5.8\%$ at $n=1000$ to $4.4\%$ at $n=\num{20000}$, and is correctly calibrated rather than
merely asymptotic, with the sensitivity to the number of studentizing bootstrap draws
detailed in that table.

\begin{table}[t]\centering
\caption{Head-to-head rejection rates ($5\%$ level, $\num{3000}$ replications,
$n=3000$, cubic basis, $B=999$). Directed-X is the plug-in $T_n^{\ast}$ and
Directed-X (AR) the signal-cleared $A_n$ of \eqref{eq:arstat}. ``mean cond.\ $F$'' is
the average Sanderson--Windmeijer conditional first-stage $F$. Monte Carlo standard
errors are below $0.004$.}
\label{tab:head}
\begin{tabular}{lccccc}
\toprule
design & RESET & Directed-S & Directed-X & Directed-X (AR) & mean cond.\ $F$\\
\midrule
CONTAMINATED (bias $\neq0$) & 1.00 & 1.00 & 1.00 & 1.00 & 605\\
HARMLESS \ \ \ (bias $=0$)  & 1.00 & 1.00 & \textbf{0.047} & \textbf{0.050} & 607\\
CLEAN \ \ \ \ \ \ \ (bias $=0$) & 0.045 & 0.014 & 0.041 & 0.041 & 573\\
\bottomrule
\end{tabular}
\end{table}

\begin{figure}[t]\centering
\includegraphics[width=\textwidth]{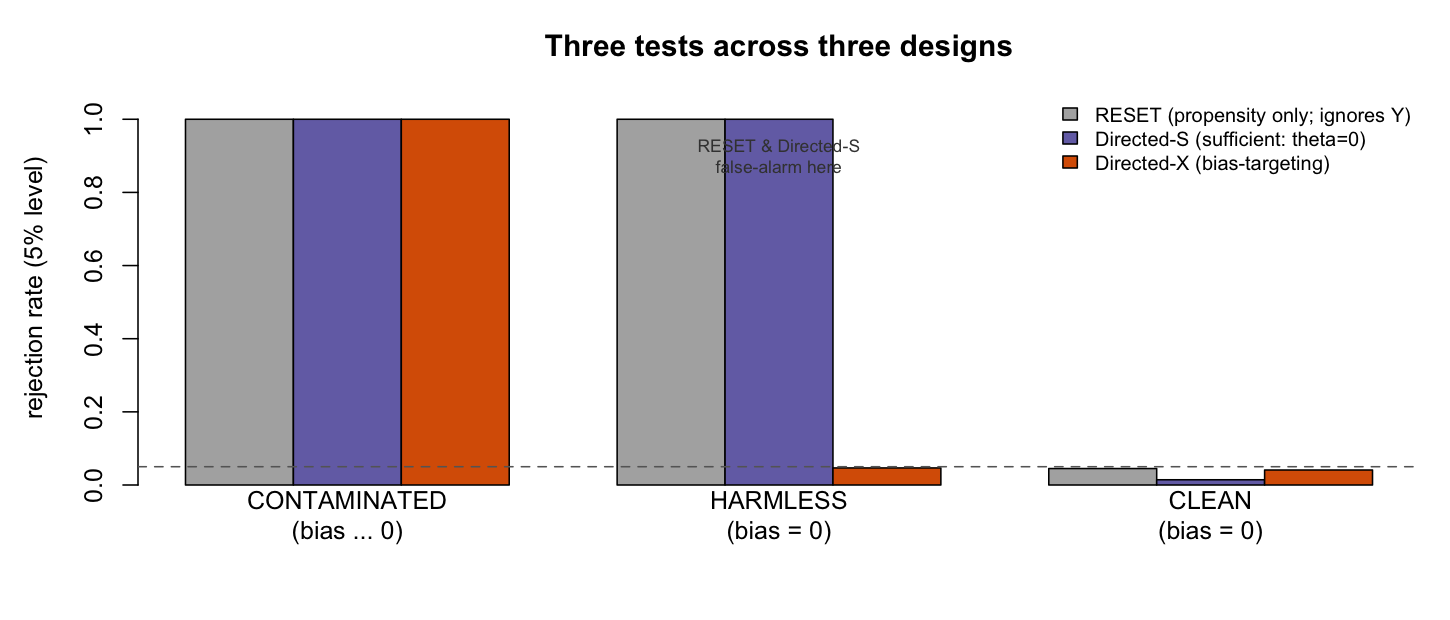}
\caption{Rejection rates of the RESET, the sufficient (Directed-S) and the
bias-targeting (Directed-X) tests across the contaminated, harmless, and clean
designs. In the harmless design (middle group) the RESET and Directed-S
raise false alarms while Directed-X holds size, and the conditional $F$ (near $600$ in
all groups) cannot discriminate.}
\label{fig:head}
\end{figure}

Table~\ref{tab:svx} in \ref{app:mc} pushes the same contrast along the
collinearity knob. Both bias-targeting forms keep full power against genuine bias as
$\Var(Z\mid X)$ falls by an order of magnitude, so neither loses power to
collinearity. They part company from the screen only in the harmless design. There
Directed-S raises a false alarm every time, because its \emph{sufficient} condition $\theta=0$
is false whenever the propensity is nonlinear, while both the plug-in Directed-X and the
signal-cleared $A_n$ control size throughout, the size guarantee for $A_n$ holding by
Proposition~\ref{prop:arbias} even at the smallest $\Var(Z\mid X)$ in the sweep. This is
the content of the ``report both'' recommendation. Directed-S is a cheap screen that, at the cubic basis used
here, misses no contamination in the probed directions but cannot vindicate harmless
curvature, and Directed-X is the adjudicator that, in those same directions,
separates harmful from harmless. Both guarantees are relative to the fixed basis, per
Remark~\ref{rem:power}. A growing or orthogonalized basis extends them to all
directions.

\subsection{Estimating the contamination share}\label{sec:shareMC}
Table~\ref{tab:share} and Figure~\ref{fig:share} in \ref{app:mc} validate
the share estimator \eqref{eq:share}. The plug-in $\hat\varrho$ at $n=3000$ tracks
the population share from $\varrho\approx0.02$ to $\varrho\approx0.75$ with
$93$--$95\%$ bootstrap coverage. Because the bias also loads on the outcome-side
term, a modest share can coincide with substantial bias. At $\varrho=0.02$ the
estimand is already displaced by $1.3$ here. Thus $\hat\varrho$ flags manufactured
strength, while the test of Section~\ref{sec:exact} adjudicates its consequences.

%=============================================================================
\section{Empirical application}\label{sec:empirics}
%=============================================================================

We work one application in full and corroborate it with two further designs. The
flagship is the effect of spousal health insurance coverage on wives' labour supply
(Section~\ref{sec:hi}), a binary-instrument, binary-treatment design in which an
overwhelming first stage and a small contamination share are both reassuring, yet a
standard linear control for husband income leaves the estimate biased by more than a
factor of two, a contamination the directed test flags and the correction repairs
without loss of identification. The 401(k) eligibility design (Section~\ref{sec:401k})
is a second binary instance of the same mode~(b). The slave trade design of
\citet{nunn2011} (Section~\ref{sec:nw}), with a continuous instrument and a continuous
treatment, is the real data face of mode~(a), manufactured strength, where correcting
the propensity collapses a first stage that looked powerful. Throughout, the flexible
propensity uses a low-order polynomial (or spline) basis and inference uses the pairs
bootstrap of Section~\ref{sec:test}.

One interpretive point governs how these results should be read, made in the
introduction and specialized here to the applications. The curvature $h=e-L(e\mid1,X)$, the
moments $\theta_D=\E[hD]$ and $\theta_Y=\E[hY]$, the share $\varrho$, and the two
directed tests are functionals of $\E[Z\mid X]$ and the level functions alone, so by
Sections~\ref{sec:test}--\ref{sec:correction} they retain their size and consistency
across these applications whatever the support of $Z$ and $D$. The binary-treatment LATE structure
of Remark~\ref{rem:binary} additionally reads the signal ratio $\bar L=S_Y/S_D$ as a
convex average of conditional LATEs through the Wald--LATE identity, which holds for the
health insurance and 401(k) designs. For the continuous slave trade design $\bar L$
is the saturated-propensity linear-IV estimand, and the test there detects
functional-form contamination of the reported estimand relative to that benchmark. The
convex decomposition for multivalued treatments is the open frontier of
Section~\ref{sec:conclusion}.

\subsection{Spousal health insurance and wives' labour supply}\label{sec:hi}

Our headline application is the effect of spousal health insurance coverage on the
weekly hours worked by married women, the design of \citet{olson1998}. The treatment
$D$ is whether a wife is covered by her husband's health insurance, the outcome $Y$ is
her weekly hours, and the instrument $Z$ is whether the husband holds insurance through
his own job, a precondition for that coverage that shifts it without bearing on the
wife's hours except through it. We use the cross-section of $n=\num{22272}$ married
women assembled by \citet{olson1998} and distributed with the \texttt{Ecdat} package.
The instrument is overwhelming, with a conditional first-stage $F$ above \num{36000}.
But husbands hold own-job coverage in a manner strongly nonlinear in their earnings,
and a husband's earnings are a first-order, nonlinear determinant of his wife's hours,
while husband income enters the standard specification linearly. This is precisely the
configuration of Lemma~\ref{lem:decomp} in which linear covariate adjustment
contaminates the estimand.

Figure~\ref{fig:hi} makes the mechanism visible. The probability that the husband
holds own-job insurance rises steeply and then saturates in his income (panel~A), so
the propensity curvature $h$ that a linear control leaves behind is large, and mean
wife hours are single-peaked in husband income (panel~B), so the outcome level $m_Y$
that curvature multiplies is itself far from linear. Their covariance is the
outcome-side contamination $\theta_Y$, and no first-stage number sees it.

\begin{figure}[t]\centering
\includegraphics[width=\textwidth]{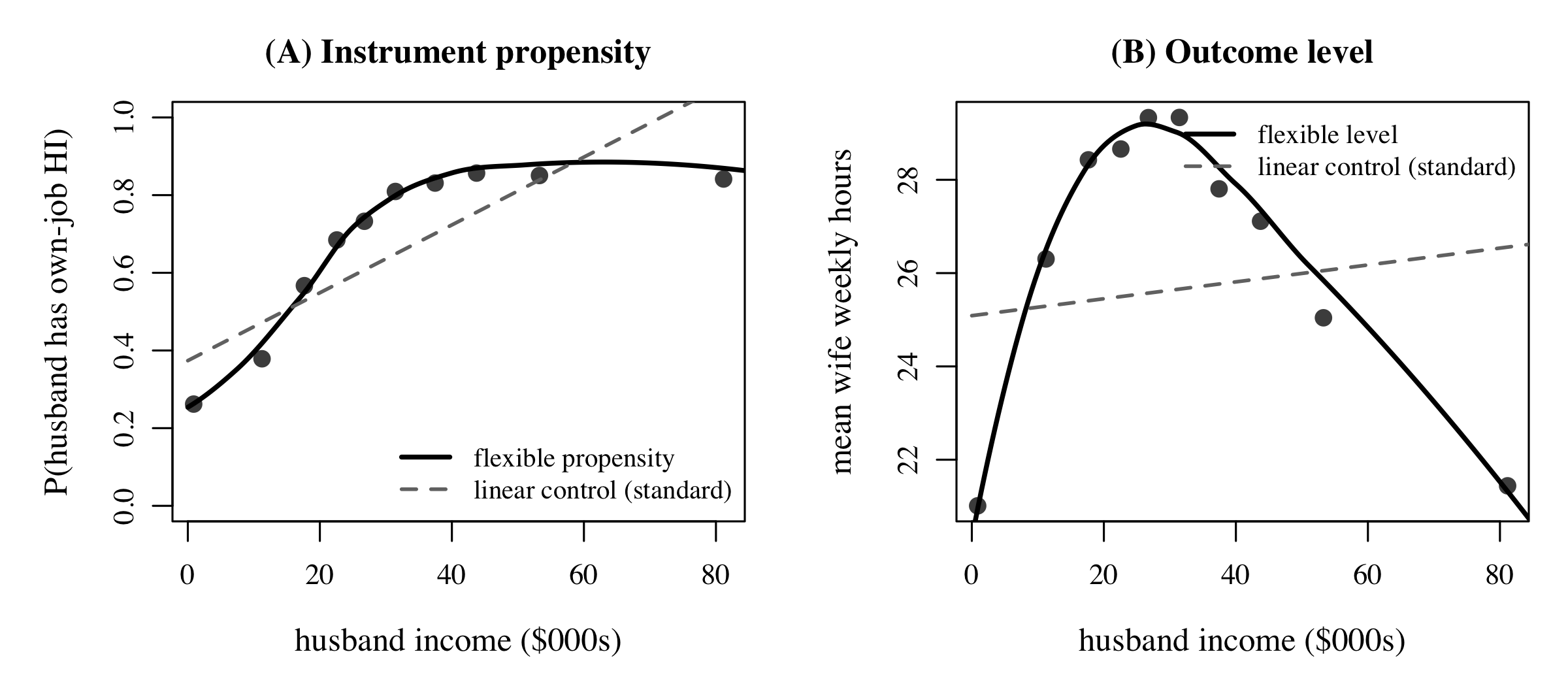}
\caption{Husband income enters both the instrument propensity and the outcome
nonlinearly, while the standard specification enters it linearly. Panel~A, the
probability that the husband holds own-job insurance (the instrument) against husband
income, with binned means, a flexible fit, and the linear control the standard
specification imposes. Panel~B, mean wife weekly hours (the outcome) against husband
income. The propensity curvature a linear control leaves behind (panel~A) covaries with
the curved outcome level (panel~B), and that covariance is the outcome-side
contamination $\theta_Y$ no first-stage number sees.}
\label{fig:hi}
\end{figure}

Table~\ref{tab:hi} reports the just-identified 2SLS estimate under a linear
husband income control, with the diagnostic.\footnote{Treatment $D$ is coverage by the
husband's health insurance, instrument $Z$ is the husband holding insurance through his
own job, outcome $Y$ is the wife's weekly hours worked. The linear control vector is
(education, potential experience, husband income, children under six, children six to
eighteen, race, Hispanic origin, region). The curvature basis $b(X)$ for the naive
specification is a cubic in husband income $(\mathrm{inc}^2,\mathrm{inc}^3)$
residualized against the linear controls, and the flexible benchmark controls husband
income by a natural spline with six degrees of freedom. Inference is a pairs bootstrap
with $B=999$ resamples that refits the propensity, the basis, and the level functions on
each resample. Data ship with the \texttt{Ecdat} package, code in \texttt{hi\_full.R}.}
Under a linear control, the common default, spousal coverage lowers weekly hours by
$1.87$. Both the screening and the bias-targeting tests reject decisively
($p<10^{-3}$) \emph{despite} $F>\num{36000}$, and the bias-corrected estimand is
$-4.45$ ($[-5.14,-3.81]$), more than double the naive figure with an interval that
excludes it. The correction is validated the way it is constructed. Controlling husband
income by a natural spline in an ordinary 2SLS returns $-4.51$, which the correction
from the linear specification reproduces to the second decimal. The finding is
invariant to the basis, moving from $-3.97$ under a quadratic control to $-4.51$ under a
six-knot spline, and both tests reject at every basis.

\begin{table}[t]\centering
\caption{Spousal health insurance and wives' weekly hours ($n=\num{22272}$). Effect of
coverage by the husband's insurance, instrumented by the husband holding own-job
insurance. A conditional $F$ above \num{36000} does not prevent contamination under a
linear husband income control, and the correction recovers the flexible-income answer
without loss of identification. Test $p$-values from the pairs bootstrap. For the
corrected row, cond.\ $F$ is the corrected estimator's own first stage (regressing $D$
on the residualized instrument $Z-\hat e(X)$) and the bracketed interval is the pairs
bootstrap.}
\label{tab:hi}
\resizebox{\textwidth}{!}{%
\begin{tabular}{lccccc}
\toprule
husband income control & 2SLS effect & cond.\ $F$ & $\hat\varrho$ & Directed-S $p$ & Directed-X $p$\\
\midrule
linear (naive) & $-1.87$ & 36301 & 0.09 & $<0.0001$ & $<0.0001$\\
natural spline (benchmark) & $-4.51$ & n/a & n/a & n/a & n/a\\
\emph{contamination-corrected} & \emph{$-4.45$ $[-5.14,-3.81]$} & \emph{31314} & & & \\
\bottomrule
\end{tabular}}
\end{table}

This is the paper's thesis on real data, in near-pure mode~(b). The first stage is
honest and immense, and the contamination share is only $\hat\varrho=0.09$, so both
numbers a first stage reports, the conditional $F>\num{36000}$ and the small share,
describe a clean, strongly identified design. They miss the bias because it loads on
the outcome side, through $\theta_Y$ rather than the first-stage term $\theta_D$. By
Proposition~\ref{prop:impossible} no first-stage diagnostic could reveal this, and none
does. A propensity RESET would register the income curvature as present, as the
screening test does, but presence is not harm, and only the outcome-directed test and
the correction establish that the curvature reaches the estimate here and return the
specification-robust answer of $-4.45$. The design in which every first-stage number is
reassuring is the one that is biased.

Unlike a design in which correction destroys identification
(Proposition~\ref{prop:tradeoff}), here that caveat does not bind. Because the
contamination share is small, the corrected estimator retains almost all of the
first-stage covariance, and its own first-stage $F$, regressing $D$ on the residualized
instrument $Z-\hat e(X)$, is \num{31314}. The corrected estimate is therefore strongly
identified and its bootstrap interval is valid. This is the same eligibility structure
as the 401(k) design below, a treatment attainable only where an instrument grants
eligibility, but it is the cleaner illustration for two reasons. The linear control at
issue is husband income, which a labour-supply equation enters linearly as a matter of
course rather than a specification a careful analyst would already avoid, and
\citet{olson1998} set the design up precisely to compare parametric and semiparametric
estimates, having suspected the functional form that the diagnostic now localizes and
tests against the outcome.

\subsection{A second binary instance: the 401(k) design}\label{sec:401k}

The 401(k) eligibility design reproduces the pattern in the most studied
strong-instrument application in the literature. The effect of 401(k) participation on
household net financial assets is identified using 401(k) \emph{eligibility} as an
instrument \citep{poterba1995,abadie2003}, on the cross-section of $n=9{,}275$
households of \citet{wooldridge2010}. Eligibility is overwhelming, with a conditional
first-stage $F$ exceeding $7{,}000$, and firms offer it in a manner strongly nonlinear
in income while income is a first-order nonlinear determinant of wealth, the
configuration of Lemma~\ref{lem:decomp}.

Table~\ref{tab:401k} reports the just-identified 2SLS estimate (participation on
net financial assets, in thousands of dollars) under three income
specifications, with the diagnostic.\footnote{Full specification, from the released
\texttt{iv\_401k.R}. Treatment $D$ is 401(k) participation, instrument $Z$ is 401(k)
eligibility, outcome $Y$ is net financial assets. The linear control vector is
(income, age, married, male, family size). The curvature basis $b(X)$ for the linear
specification is $(\mathrm{inc}^2,\mathrm{inc}^3,\mathrm{age}^2,\mathrm{inc}\times
\mathrm{age})$ residualized against the linear controls. The quadratic specification
adds $\mathrm{inc}^2,\mathrm{age}^2$ to the controls with basis
$(\mathrm{inc}^3,\mathrm{inc}^4,\mathrm{age}^3,\mathrm{inc}\times\mathrm{age})$. The
spline robustness check uses a natural-spline income basis with six degrees of
freedom, and the flexible benchmark controls income and age by natural splines with
five and four degrees of freedom. Inference is a pairs bootstrap with $B=1499$
resamples that refits the propensity, the curvature basis, and the level functions on
each resample.} Under a linear income control, the common default, the estimate is
$8.40$. Both directed tests reject decisively ($p<10^{-3}$) \emph{despite} $F>7{,}000$,
and the correction returns $12.6$ ($[9.0,16.0]$), reproduced by a quadratic income
control ($13.7$, which also clears the flag at $p_X=0.11$) and a natural-spline control
($13.1$). The naive linear-income estimate is the outlier, understating the effect by
roughly a third. The contamination share is $\hat\varrho=0.03$, so the displacement is
almost entirely the outcome-side term $\theta_Y$, and the corrected estimator's own
first stage, $\num{7295}$, confirms that de-contamination costs no identification here.

\begin{table}[t]\centering
\caption{The 401(k) design ($n=9{,}275$). Effect of participation on net financial
assets (\$000s), instrumented by eligibility. A first-stage $F>7{,}000$ does not
prevent contamination under linear income control, and the correction recovers the
flexible-income answer. Test $p$-values from the pairs bootstrap. For the corrected
row, cond.\ $F$ is the corrected estimator's own first stage (regressing $D$ on the
residualized instrument $Z-\hat e(X)$), the bracketed interval is the pairs
bootstrap, and its identification-robust Anderson--Rubin $95\%$ interval is
$[8.2,17.0]$.}
\label{tab:401k}
\resizebox{\textwidth}{!}{%
\begin{tabular}{lccccc}
\toprule
income control & 2SLS effect & cond.\ $F$ & $\hat\varrho$ & Directed-S $p$ & Directed-X $p$\\
\midrule
linear (naive) & 8.40 & 7401 & 0.03 & $<0.0001$ & $0.0001$\\
quadratic & 13.66 & 7423 & 0.00 & 0.121 & 0.113\\
natural splines (benchmark) & 13.05 & n/a & n/a & n/a & n/a\\
\emph{contamination-corrected} & \emph{12.6 [9.0, 16.0]} & \emph{7295} & & & \\
\bottomrule
\end{tabular}}
\end{table}

Here too a first-stage $F$ above seven thousand offers no protection, and by
Proposition~\ref{prop:impossible} none could. The corrected estimand's
identification-robust Anderson--Rubin $95\%$ interval, $[8.2,17.0]$, coincides with the
normal-theory one, so the corrected estimate is strongly identified, and controlling
income flexibly is exactly what the correction reproduces.

\subsection{Manufactured strength: the slave trade instrument}\label{sec:nw}

Mode~(a) also has a real data face. \citet{nunn2011} instrument an ethnic group's
historical slave exports with its distance from the coast to estimate the effect of the
slave trade on present-day trust, one of the most cited designs in the field, on
$n=\num{17371}$ individuals in $156$ ethnic-group clusters. The instrument looks
powerful, with an unclustered first-stage $F$ above \num{1200}. Clustering at the
ethnic-group level at which it varies already cuts this to $17$, and the diagnostic
explains why even the honest number is generous. The contamination share is
$\hat\varrho=0.52\,[0.13,1.00]$. About half of the residualized instrument is smooth
geographic variation, the curvature of latitude and longitude that distance-to-coast
inherits, rather than identifying variation. Entering those coordinates flexibly rather
than linearly, as \citet{nunn2011} do, moves the trust-in-neighbours estimate from
$-0.243$ to $-0.190$ and attenuates the related trust outcomes by as much as a half,
and it collapses the corrected instrument's own first stage from $17$ to $2.4$. This is
manufactured strength in the sense of Proposition~\ref{prop:collin}. Much of the
apparent identification was the covariate curvature the conditional $F$ cannot separate
from signal, and removing it leaves too little variation to identify the effect. The
bias-targeting test does not reject here, as Proposition~\ref{prop:weakrobust} leads one
to expect under weak identification, since once the manufactured component is stripped
the design no longer speaks, while the screening test, which needs no identification,
still registers the curvature. The instrument's apparent strength is thus in material
part geographic functional form, invisible to the first stage and visible to the share.
Because distance-to-coast is nearly a deterministic function of the coordinates,
controlling them flexibly absorbs much of any geographic instrument, so what the design
establishes is the fragility of the reported strength to how geography enters, not the
magnitude of the effect itself. Appendix Table~\ref{tab:nwrobust} shows the collapse does not depend on the cubic basis. Entering the coordinates quadratically, cubically, or through natural splines leaves the contamination share near one half and drives the corrected first stage to between $2.4$ and $3.0$ in every case, for trust in relatives as well as trust in neighbours, while the cluster-robust first stage of the original instrument stays at $17.0$ throughout.

Taken together the three applications exhibit both failure modes on canonical data. The
health insurance and 401(k) designs are mode~(b), where an honest first stage, above
thirty-six thousand and above seven thousand, conceals an outcome-side bias the directed
test detects and the correction repairs. The slave trade design is mode~(a), where the
apparent strength is in material part covariate curvature and correcting it collapses
the first stage. In each the conditional $F$ is silent, whether because the bias loads
on the outcome or because the strength itself is manufactured, and the directed test and
the reportable share speak where it cannot.

%=============================================================================
\section{Recommendations for practice}\label{sec:practice}
%=============================================================================

Our results imply a short protocol that adds little cost to a standard IV
analysis. (1) Report the \emph{conditional} first-stage $F$ of
\citet{sanderson2016}, but do not treat it as evidence for a causal reading of
the second stage, and report the estimated contamination share
$\hat\varrho$~\eqref{eq:share} beside it. (2) Estimate the instrument propensity
$\E[Z\mid X]$ flexibly and test its curvature against $D$ and $Y$ using the
screening statistic of Section~\ref{sec:test} and, to distinguish harmful from
harmless curvature, the bias-targeting test of Section~\ref{sec:exact}. A screening
rejection signals curvature that loads on $D$ and $Y$ even when the conditional $F$
is large, but it does not by itself establish bias. Use the bias-targeting test to
decide, reading its signal-cleared form $A_n$~\eqref{eq:arstat} for a size guarantee
that holds even when the design is weakly identified and the plug-in $T_n^{\ast}$ for
its extra power when it is not. (3) If the \emph{bias-targeting} test rejects, report
the bias-corrected estimand~\eqref{eq:correction} together with weak-IV-robust
inference, and
recognize (Proposition~\ref{prop:tradeoff}) that a collapse in the corrected
first stage means the apparent strength was contamination rather than
identification. A screening rejection that the bias-targeting test does not confirm
is the harmless-curvature case and calls for no correction. (4) Where feasible, prefer an instrument whose propensity is
approximately linear in the controls, or saturate the controls at the design
stage while monitoring the resulting conditional strength.

%=============================================================================
\section{Conclusion}\label{sec:conclusion}
%=============================================================================

The applied ritual of reporting a first-stage $F$ answers whether an instrument
moves the treatment, not whether the second-stage estimate is a meaningful causal
average. Under collinearity between instrument and covariates these questions come apart in a
specific, detectable way. The same propensity curvature that contaminates the
2SLS estimand also inflates the conditional $F$, so the strength
diagnostic cannot see the problem and the natural saturation fix trades the bias
for weak identification. A directed conditional-moment test, built entirely from
reduced-form regressions, has power exactly where the conditional $F$ is silent.

The analysis covers a scalar instrument and a scalar treatment of arbitrary support,
entered in a just-identified linear model with linear covariates, with the
binary-treatment LATE model as the leading special case. We now say what survives
beyond that boundary and what does not. Over-identification is handled in
Section~\ref{sec:overid}, the less mechanical direction. With a vector of instruments the
numerator and denominator of Lemma~\ref{lem:decomp} become signal-plus-contamination
vectors and 2SLS combines them through the residualized-instrument second-moment matrix, so
the population coefficient is a weighted combination rather than a clean term-by-term ratio.
The stacked screen $\theta_D=\theta_Y=0$ stays sufficient for no contamination, the
impossibility result (Proposition~\ref{prop:impossible}) holds verbatim since the bias
still loads on the outcome, and the exact bias-targeting restriction, which inverts that
second-moment matrix, is the single moment \eqref{eq:oidexact}, confirmed against the
stacked screen in Section~\ref{sec:overidMC}. Multiple, that is \emph{vector},
treatments are the substantive frontier. A multivalued single treatment is already
covered above, but with a vector of treatments our single-treatment functional-form
contamination compounds with the \emph{cross-treatment} contamination of
\citet{goldsmith2024}, a conceptually distinct channel that persists even under a
saturated propensity. Disentangling the two, where the object to be tested is a matrix
rather than a scalar contamination, is the main extension we leave to future work. What
extends cleanly in every case is the reason the first stage cannot help. The bias always
loads on the conditional law of the outcome, so Proposition~\ref{prop:impossible} is the
general fact, of which the binary model is one instance. The remaining loose end, a
data-driven curvature basis, is closed by Proposition~\ref{prop:orthogonal}.

\clearpage
%=============================================================================
\appendix
\section{Proofs}\label{app:proofs}
%=============================================================================

\subsection*{Proof of Lemma~\ref{lem:decomp}}
By \eqref{eq:fwl}, $\beta_{2SLS}=\E[\Ztil Y]/\E[\Ztil D]$ with $\Ztil=Z-\ell(X)$. Write
$\Ztil=(Z-e(X))+h(X)$ using $h=e-\ell$. For any conditionally square-integrable $W\in\{D,Y\}$,
\[
\E[\Ztil W]=\E[(Z-e(X))W]+\E[h(X)W].
\]
For the first term, condition on $X$. Since $\E[Z-e(X)\mid X]=0$,
$\E[(Z-e(X))W\mid X]=\E[(Z-e(X))(W-m_W(X))\mid X]=\Cov(Z,W\mid X)$, and taking expectations over $X$
gives $S_W$. For the second term, $\E[h(X)W]=\E[h(X)\,\E[W\mid X]]=\E[h(X)m_W(X)]=\Cov(h(X),m_W(X))$,
the last equality because $\E[h(X)]=0$. This gives $\theta_W$ and \eqref{eq:decomp}.
Equation~\eqref{eq:bias} follows by subtracting $\bar L=S_Y/S_D$ from \eqref{eq:decomp} and
simplifying, and $S_D+\theta_D=\E[\Ztil D]$ by construction. The binary specialization
(Remark~\ref{rem:binary}) follows because for binary $Z$,
$\Cov(Z,W\mid X)=e(X)\{1-e(X)\}\{\E[W\mid Z=1,X]-\E[W\mid Z=0,X]\}$, which is $e(1-e)p$ for $W=D$ and,
by the conditional Wald identity under the LATE assumptions, $e(1-e)p\,\late$ for $W=Y$. \hfill$\qed$

\subsection*{Proof of Proposition~\ref{prop:collin}}
By Assumption~\ref{as:seq}(i), $\Var_t(Z\mid X)\to0$ a.e.\ and in $L^1$. By Cauchy--Schwarz,
$|\Cov(Z,W\mid X)|\le\sqrt{\Var_t(Z\mid X)\,\Var_t(W\mid X)}$. For $W=D$, with
$\sup_t\E[\Var_t(D\mid X)]<\infty$, the Cauchy--Schwarz inequality applied to the outer expectation
gives $|S_{D,t}|\le\sqrt{\E[\Var_t(Z\mid X)]\,\E[\Var_t(D\mid X)]}\to0$. For $W=Y$, the integrand
$|\Cov_t(Z,Y\mid X)|\le\sqrt{\Var_t(Z\mid X)}\sqrt{\Var_t(Y\mid X)}\to0$ a.e., and the family is
uniformly integrable because $\{\Var_t(Y\mid X)\}$ is and $\Var_t(Z\mid X)$ is bounded, so Vitali's
theorem gives $S_{Y,t}\to0$. By Assumption~\ref{as:seq}(ii)--(iii), $h_t\to h^\ast$ and
$m_{D,t},m_{Y,t}$ converge in $L^2$, so by Cauchy--Schwarz $\theta_{D,t}\to\Cov(h^\ast,m_D^\ast)$ and
$\theta_{Y,t}\to\Cov(h^\ast,m_Y^\ast)$. The denominator converges to $\Cov(h^\ast,m_D^\ast)\neq0$ by
Assumption~\ref{as:seq}(iv), so \eqref{eq:limit} follows from the continuous mapping theorem. That the
limit can fall outside the hull of within-cell slopes, and in the binary specialization outside the
LATE hull, is confirmed by the binary design of Section~\ref{sec:mcbinary}
(Table~\ref{tab:collinbinary}, Figure~\ref{fig:collinbinary}), where every conditional LATE is
positive and bounded by $1.5$ yet the estimand converges to a limit near $9$. \hfill$\qed$

\subsection*{Proof of Corollary~\ref{cor:Ffails}}
The population conditional first-stage slope from regressing $D$ on $\Ztil$ is
$\pi_t=\E[\Ztil_t D]/\Var(\Ztil_t)$. By Lemma~\ref{lem:decomp},
$\E[\Ztil_t D]=S_{D,t}+\theta_{D,t}\to\Cov(h^\ast,m_D^\ast)$, and
$\Var(\Ztil_t)=\E[\Var_t(Z\mid X)]+\E[h_t^2]\to\E[(h^\ast)^2]$ by Assumption~\ref{as:seq}(i)--(ii),
using $\Var(\Ztil_t)=\E[\Var_t(Z\mid X)]+\Var(h_t(X))$ and $\E[h_t]=0$. Hence
$\pi_t\to\Cov(h^\ast,m_D^\ast)/\E[(h^\ast)^2]\neq0$. The conditional $F$ has population analogue
$n\,\pi_t^2\,\Var(\Ztil_t)/\sigma^2_v$ with $\sigma^2_v=\Var(D-\pi_t\Ztil-X'c)$ bounded, so it is
bounded away from zero and diverges in $n$. \hfill$\qed$

\subsection*{Proof of Proposition~\ref{prop:impossible}}
Fix the joint law of $(Z,D,X)$ with $h\neq0$. Then $e(\cdot)$, $\ell(\cdot)$, $h(\cdot)$,
$m_D(\cdot)$, $S_D=\E[\Cov(Z,D\mid X)]$, $\theta_D=\E[h(X)D]$ and $\Var(\Ztil)$ are all determined, so
any functional $T$ of this law takes a fixed value. By \eqref{eq:bias},
$\beta_{2SLS}-\bar L=(\theta_Y-\bar L\theta_D)/(S_D+\theta_D)$ with $\theta_Y=\E[h(X)Y]$ and
$\bar L=S_Y/S_D$, $S_Y=\E[\Cov(Z,Y\mid X)]$. Both $\theta_Y$ and $S_Y$ are determined by the
conditional law of $Y$ given $(Z,X)$, which is not restricted by the marginal law of $(Z,D,X)$.
Construct two outcome laws sharing that marginal but with conditional means $m_Y$ and $m_Y'$ that
differ on the span orthogonal to $(1,X')$, so that $\Cov(h,m_Y)\neq\Cov(h,m_Y')$. Because $h\neq0$
such a perturbation exists. The perturbation changes only the level $\E[Y\mid X]$ and leaves the law
of $(Z,D,X)$ and the within-cell slopes untouched, so both DGPs are admissible. The two biases then
differ while $T$ is identical, and since the gap $\Cov(h,m_Y)-\Cov(h,m_Y')$ can be made arbitrarily
large, the bias is unbounded as a function of the outcome law at fixed $T$. Hence $T$ carries no
information about the bias. When $h\equiv0$ the construction is empty because $\Cov(h,\cdot)\equiv0$,
consistent with the first stage then certifying no contamination. \hfill$\qed$

\subsection*{Proof of Proposition~\ref{prop:test}}
$\hat\theta$ is a two-step estimator, a first-step OLS $\hat\delta$ from
regressing $Z$ on $(1,X',b(X)')$, and second-step sample means of $\hat h_i W_i$,
$W\in\{D,Y\}$. Under the stated moment conditions and fixed $K$,
$\hat\delta\convp\delta$ and is asymptotically linear with influence
$G^{-1}b(X_i)u_i$, $G=\E[bb']$. A mean-value expansion of
$\hat\theta_W=n^{-1}\sum_i b(X_i)'\hat\delta\,W_i$ around $\delta$ gives
$\sqrt n(\hat\theta_W-\theta_W)=n^{-1/2}\sum_i\{b(X_i)'\delta\,W_i-\theta_W\}
+\E[W b(X)']\sqrt n(\hat\delta-\delta)+o_p(1)$. Stacking $W\in\{D,Y\}$ and substituting
the first-step influence gives the influence function
\[
\psi_i=\begin{pmatrix} b(X_i)'\delta\,D_i-\theta_D\\ b(X_i)'\delta\,Y_i-\theta_Y\end{pmatrix}
+\Big(\E[D\,b(X)'],\ \E[Y\,b(X)']\Big)'\,G^{-1}b(X_i)u_i,
\qquad u_i=Z_i-\alpha_0-X_i'\alpha_1-b(X_i)'\delta.
\]
The central limit theorem yields
$\sqrt n(\hat\theta-\theta)\convd N(0,\Sigma)$ with
$\Sigma=\Var(\psi_i)$. Under $H_0$, $\theta=0$ and $W_n=n\hat\theta'\hat\Sigma^{-1}
\hat\theta\convd\chi^2_2$ provided $\hat\Sigma\convp\Sigma\succ0$. Consistency
against fixed alternatives and the local power statement are immediate from
$\hat\theta\convp\theta\neq0$ and $\theta=\eta/\sqrt n$, respectively.
Consistency of the pairs bootstrap for $\Sigma$ follows from Hadamard
differentiability of $\hat\theta$ in the empirical measure. \hfill$\qed$

\subsection*{Proof of Proposition~\ref{prop:weakrobust}}
Fix a sequence $\{P_n\}$ as stated. The estimator $\hat\theta$ and its influence
function
$\psi_{i}=(b(X_i)'\delta\,W_i-\theta_W)_{W\in\{D,Y\}}+\big(\E[Wb(X)']\big)_W G^{-1}b(X_i)u_i$
from the proof of Proposition~\ref{prop:test} involve the first-stage strength
nowhere. They are determined by the reduced-form regression of $Z$ on
$(1,X',b(X)')$ and by the cross-moments $\E[Wb(X)']$, none of which is the
first-stage covariance $\E[\Ztil D]$. The weak-instrument problem is a
small-denominator problem for $\beta_{2SLS}=\E[\Ztil Y]/\E[\Ztil D]$, and that
denominator does not appear. The uniform fourth-moment bound and
$\Sigma_n\to\Sigma_\ast\succ0$ posited for the null sequence supply a Lyapunov
condition, so the Lindeberg--Feller central limit theorem for triangular arrays
applies to $n^{-1/2}\sum_i\psi_{i,n}$, giving
$\sqrt n(\hat\theta-\theta)\convd N(0,\Sigma_\ast)$ along the sequence. Under $H_0$,
$\theta=0$ and $\hat\Sigma\convp\Sigma_\ast$, so $W_n\convd\chi^2_2$. The
generated-regressor term $\E[Wb(X)']G^{-1}b(X_i)u_i$ has variance of order
$\Var(u_i)\to0$ as $Z$ becomes collinear, but the leading term
$b(X_i)'\delta\,W_i-\theta_W$ has variance bounded below because the curvature stays
nondegenerate, $h_n\to h^\ast\neq0$ and hence $\delta\to\delta^\ast\neq0$. This
delivers $\Sigma_\ast\succ0$. The identical argument applied to the orthogonalized
influence $\psi^o_i$, which is likewise free of the first-stage covariance, gives
$W^o_n\convd\chi^2_2$. Finally, along the contaminated sequence of
Assumption~\ref{as:seq} the same central limit theorem gives
$\sqrt n(\hat\theta-\theta_n)\convd N(0,\Sigma_\ast)$ while $\theta_n\to\theta^\ast$
with $\theta_D^\ast=\Cov(h^\ast,m_D^\ast)\neq0$. The noncentrality
$n\,\theta_n'\Sigma_n^{-1}\theta_n\to\infty$, so $W_n\to\infty$ in probability and
the power tends to one. \hfill$\qed$

\subsection*{Proof of Proposition~\ref{prop:orthogonal}}
Mean zero at the truth is immediate, because
$\E[\psi^o_W]=\E[h(X)W]+\E[\tilde m_W(X)(Z-e(X))]-\theta_W=\theta_W+0-\theta_W=0$,
using $\E[Z-e(X)\mid X]=0$. For orthogonality with respect to $e$, perturb
$e\to e+t\Delta$. Because $h=e-L(e\mid1,X)$ is linear in $e$ and the linear
projection is self-adjoint,
$\partial_t\E[h(X)W]\big|_{t=0}=\E[(\Delta-L(\Delta\mid1,X))\,m_W(X)]
=\E[\Delta\,\tilde m_W(X)]$, while
$\partial_t\E[\tilde m_W(X)(Z-e(X))]\big|_{t=0}=-\E[\tilde m_W(X)\,\Delta(X)]$. The
two terms cancel. Perturbing $m_W\to m_W+t\,\Xi$ moves only the correction term,
with derivative $\E[\tilde\Xi(X)(Z-e(X))]=\E[\tilde\Xi(X)\,\E(Z-e(X)\mid X)]=0$.
Hence $\psi^o_W$ is Neyman-orthogonal in both nuisances. The affine projection
$L(\cdot\mid1,X)$ is not a third nuisance but a linear functional of $e$. The
$e$-perturbation above moves $h=e-L(e\mid1,X)$ through the projection, and the same
self-adjointness that makes the two derivatives cancel annihilates its first-order
contribution \emph{to the pathwise derivative}. In the estimator the projection is a
parametric $\sqrt n$ step, an OLS of $Z$ on $(1,X)$. Its estimation error contributes
a regular $O_p(n^{-1/2})$ term to the influence function, not orthogonalized away but
asymptotically linear, which the pairs bootstrap captures by recomputing the
projection, the curvature basis, and the level functions on each resample. With $K$-fold cross-fitting the
plug-in decomposes as
$\hat\theta^o_W-\theta_W=n^{-1}\sum_i\psi^o_W(O_i;e,m_W)
+\text{(bias)}+\text{(remainder)}$. Orthogonality bounds the bias by the product
$\|\hat e-e\|_{L^2}\|\hat m_W-m_W\|_{L^2}=o_p(n^{-1/2})$, and cross-fitting makes
the remainder $o_p(n^{-1/2})$ under the stated $L^2$ rates and fourth moments
\citep{chernozhukov2018}. The leading term is a sample average of the fixed
influence $\psi^o_i$, so $\sqrt n(\hat\theta^o-\theta)\convd N(0,\Sigma_o)$ by the
central limit theorem, and $W^o_n\convd\chi^2_2$ under $H_0$ follows as in the
proof of Proposition~\ref{prop:test}. \hfill$\qed$

\subsection*{Proof of Propositions~\ref{prop:correction} and \ref{prop:tradeoff}}
Write the numerator of \eqref{eq:correction} as
$n^{-1}\sum_i(Z_i-\hat e(X_i))Y_i=n^{-1}\sum_i(Z_i-e(X_i))Y_i
-n^{-1}\sum_i(\hat e(X_i)-e(X_i))Y_i$. The first term converges in probability to
$\E[(Z-e(X))Y]=S_Y$ by the argument in the proof of Lemma~\ref{lem:decomp}. For
the second, by Cauchy--Schwarz its magnitude is at most
$(n^{-1}\sum_i(\hat e(X_i)-e(X_i))^2)^{1/2}(n^{-1}\sum_i Y_i^2)^{1/2}
=\|\hat e-e\|_{n}\cdot O_p(1)$, where $\|\hat e-e\|_{n}^2=n^{-1}\sum_i(\hat
e(X_i)-e(X_i))^2$ is the in-sample empirical norm. This is $o_p(1)$ whenever
$\|\hat e-e\|_{n}=o_p(1)$. For the fixed, low-dimensional basis used throughout,
the empirical norm converges to the population norm $\|\hat e-e\|_{L^2}$ by a
uniform law of large numbers, so mere $L^2$-consistency of the propensity estimator
suffices, with no orthogonality or rate requirement. For a flexible,
high-dimensional or machine-learning $\hat e$ trained in-sample, population
$L^2$-consistency need not control the empirical norm, and one should then cross-fit
$\hat e$, evaluating it on observations not used to fit it, so that the displayed
average is out-of-sample and the argument goes through unchanged. The
denominator converges
to $\E[(Z-e(X))D]=S_D$ identically, and provided $S_D\neq0$ in the fixed
data-generating process the ratio converges to $S_Y/S_D=\bar L$, proving
Proposition~\ref{prop:correction}. (Asymptotic normality and inference for
$\hat{\bar L}$, as opposed to consistency, do require controlling the first-step
influence, for example a Neyman-orthogonalized moment with cross-fitting, and,
because $S_D$ may be small, weak-instrument-robust confidence sets. See
Proposition~\ref{prop:orthogonal} and Section~\ref{sec:correction}.) For
Proposition~\ref{prop:tradeoff}, the corrected estimator's first-stage covariance
is exactly this denominator $\E[(Z-e(X))D]=S_D=\E[e(X)\{1-e(X)\}p(X)]$, which by
Proposition~\ref{prop:collin} tends to $0$ under Assumption~\ref{as:seq}. The
identity of Corollary~\ref{cor:frontier} is immediate. Since $\varrho=\theta_D/\E[\Ztil D]$
by \eqref{eq:share} and $S_D=\E[\Ztil D]-\theta_D$, we have $S_D=(1-\varrho)\E[\Ztil D]$.
The concentration-parameter statement follows because the corrected instrument
$Z-e(X)$ has covariance $S_D$ with $D$ and variance $\E[\Var(Z\mid X)]$. \hfill$\qed$

\subsection*{Asymptotics of the bias-targeting statistic $T_n^{\ast}$}
Let $\hat g=n^{-1}\sum_i\hat h_i(Y_i-\hat{\bar L}D_i)$ estimate
$g(\bar L)=\E[h(X)(Y-\bar L D)]$, which equals $\theta_Y-\bar L\theta_D$ and is
zero under $H_0^{\ast}$ by \eqref{eq:bias}. Expanding in $(\hat\delta,\hat{\bar L})$
about $(\delta,\bar L)$,
$\sqrt n\,\hat g=n^{-1/2}\sum_i\{h(X_i)(Y_i-\bar L D_i)-g(\bar L)\}
+\E[b(X)'(Y-\bar L D)]\,\sqrt n(\hat\delta-\delta)
-\theta_D\,\sqrt n(\hat{\bar L}-\bar L)+o_p(1)$.
We require, beyond Assumption~\ref{as:test} and $S_D$ bounded away from zero, that
$\hat{\bar L}$ be $\sqrt n$-asymptotically linear with an influence function
$\phi_i$, $\sqrt n(\hat{\bar L}-\bar L)=n^{-1/2}\sum_i\phi_i+o_p(1)$. This is a
genuine additional condition. Proposition~\ref{prop:correction} delivers only
consistency of $\hat{\bar L}$, so its asymptotic linearity must be supplied
separately, for instance by the Neyman-orthogonalized, cross-fitted estimator of
Proposition~\ref{prop:orthogonal} under those rate
conditions, which is why we invoke that construction rather than the plug-in when we
need an interval. Given it, $\sqrt n(\hat\delta-\delta)$ and
$\sqrt n(\hat{\bar L}-\bar L)$ are both asymptotically linear, so
$\sqrt n\,\hat g\convd N(0,v)$ for a finite $v>0$. In this fixed, strongly identified
data-generating process the pairs bootstrap is first-order valid for the smooth
functional $\hat g$ by the same Hadamard-differentiability argument as in
Proposition~\ref{prop:test}, so the bootstrap variance $\hat v$ satisfies
$n\hat v\convp v$. Then
$T_n^{\ast}=\hat g^2/\hat v=(\sqrt n\,\hat g)^2/(n\hat v)\convd\chi^2_1$
under $H_0^{\ast}$, and $\hat g\convp g(\bar L)\neq0$ gives consistency against any
fixed alternative with nonzero bias. (Equivalently, one may write the statistic as
$n\hat g^2/\hat v_1$ with $\hat v_1\convp v$ a consistent estimate of the asymptotic
variance of $\sqrt n\,\hat g$, and the two forms coincide since $\hat v_1=n\hat v$.) As
$S_D\to0$ the term $-\theta_D\sqrt n(\hat{\bar L}-\bar L)$ ceases to be $O_p(1)$,
$\hat{\bar L}$ is no longer $\sqrt n$-consistent, and both the Gaussian limit and the
bootstrap approximation for $T_n^{\ast}$ fail. Size control at that boundary is
recovered not by $T_n^{\ast}$ but by the signal-cleared statistic $A_n$ of
\eqref{eq:arstat}, whose validity along the collinearity sequence is established in
Proposition~\ref{prop:arbias} below, which is why the protocol reads $A_n$ for size
and reserves $T_n^{\ast}$ for the extra power it buys under strong identification.
\hfill$\qed$

\subsection*{Proof of Proposition~\ref{prop:arbias}}
Write $m=(\theta_D,\theta_Y,S_D,S_Y)'$ and $\hat m$ for its plug-in from the flexible
first-step regression, with $\psi=q(m)=\theta_Y S_D-\theta_D S_Y$ and gradient
$\nabla\psi=(-S_Y,S_D,\theta_Y,-\theta_D)'$. Each coordinate of $\hat m$ is a two-step
sample mean of the kind analysed in Propositions~\ref{prop:test}
and~\ref{prop:weakrobust}. For $\hat\theta_W$ the influence function is the
$\psi_i$ of Proposition~\ref{prop:test}, and for $\hat S_W=n^{-1}\sum_i(Z_i-\hat
e(X_i))W_i$ it is $(Z_i-e(X_i))W_i-S_W$ plus the same generated-regressor correction,
so along $\{P_n\}$ the stacked vector obeys
$\sqrt n\,(\hat m-m_n)=n^{-1/2}\sum_i\zeta_{i,n}+o_p(1)$ with
$\Omega_n=\Var(\zeta_{i,n})$, and a Lyapunov condition holds under the uniform fourth
moments of Assumption~\ref{as:test}, exactly as in
Proposition~\ref{prop:weakrobust}. A first-order expansion gives
$\hat\psi-\psi_n=\nabla\psi_n'(\hat m-m_n)+R_n$, where the remainder $R_n$ is a sum of
products of two centered coordinates of $\hat m-m_n$. Since each coordinate is
$O_p(n^{-1/2}\sigma_{\cdot,n})$ with $\sigma_{\cdot,n}$ the sd of the relevant
influence, $\sqrt n\,R_n=O_p(n^{-1/2}\max_k\sigma_{k,n}^2)=o_p(\sigma_{\psi,n})$, where
$\sigma_{\psi,n}^2=\nabla\psi_n'\Omega_n\nabla\psi_n$ is the asymptotic variance of
$\sqrt n\,\hat\psi$, so the remainder is negligible after self-normalization. Under
$H_0^{\ast}$, $\psi_n=0$, hence
$\sqrt n\,\hat\psi/\sigma_{\psi,n}=\nabla\psi_n'n^{-1/2}\sum_i\zeta_{i,n}/\sigma_{\psi,n}+o_p(1)
\convd N(0,1)$
by the Lindeberg--Feller theorem for triangular arrays, provided
$\sigma_{\psi,n}^2$ stays bounded away from zero after normalization, which holds
because as $S_{D,n},S_{Y,n}\to0$ the gradient tends to $(0,0,\theta_Y^\ast,-\theta_D^\ast)$
with $(\theta_D^\ast,\theta_Y^\ast)\neq0$, so
$\sigma_{\psi,n}^2\to\Var(\theta_Y^\ast\zeta^{S_D}-\theta_D^\ast\zeta^{S_Y})>0$ under
the nondegeneracy of the reduced-form scores. This last step uses that the statistic
is self-normalized, so only the ratio $\sqrt n\,\hat\psi/\sigma_{\psi,n}$ matters and
the common vanishing of $\psi$ and its standard error cancels. With $\hat v_\psi$
ratio-consistent, $n\hat v_\psi/\sigma_{\psi,n}^2\convp1$, so
$A_n=\hat\psi^2/\hat v_\psi=(\sqrt n\,\hat\psi/\sigma_{\psi,n})^2\,(\sigma_{\psi,n}^2/n\hat v_\psi)
\convd\chi^2_1$, whether or not $S_{D,n}\to0$. For power, at a fixed law with
$b\neq0$ and $S_D$ bounded away from zero, $\psi=b\,S_D\,\E[\Ztil D]\neq0$ is fixed
while $\hat v_\psi=O_p(n^{-1})$, so $A_n\to\infty$ and the power tends to one. Holding
$b$ fixed while $S_D\to0$ sends $\psi=b\,S_D\,\E[\Ztil D]\to0$, so the noncentrality
$\psi_n^2/\Var(\hat\psi)$ need no longer diverge and the power against that fixed $b$
declines toward $\alpha$, the Anderson--Rubin trade-off. \hfill$\qed$

%=============================================================================
\section{Additional Monte Carlo evidence}\label{app:mc}
%=============================================================================

This appendix collects the size-and-power, power-curve, collinearity-sweep, and
share-estimation exhibits summarized in Section~\ref{sec:mc}. The design is that of
Section~\ref{sec:mc}.

\begin{table}[h]\centering
\caption{Directed contamination test. Panel A, one dataset, $n=\num{100000}$.
Panel B, rejection rates, $5\%$ level, $\num{2000}$ replications, $n=3000$, cubic basis.}
\label{tab:sizepower}
\begin{tabular}{lccccc}
\multicolumn{6}{l}{\emph{Panel A. Single contaminated dataset}}\\
\toprule
$\hat\theta_D$ & $\hat\theta_Y$ & $p$-value & naive $\beta_{2SLS}$ & cond.\ $F$ & corrected $\hat{\bar L}$\\
\midrule
0.067 & 0.800 & $<10^{-4}$ & 8.52 & \num{19737} & 1.03\\
\bottomrule
\end{tabular}

\vspace{1em}
\begin{tabular}{lcc}
\multicolumn{3}{l}{\emph{Panel B. Size and power}}\\
\toprule
design & rejection rate ($5\%$) & mean cond.\ $F$\\
\midrule
SIZE\ \ (linear propensity, $H_0$ true)   & 0.019 & 569\\
POWER (curved propensity, $H_0$ false)     & 1.000 & 606\\
\bottomrule
\end{tabular}
\end{table}

\begin{table}[h]\centering
\caption{Clean design (linear propensity, no bias), Directed-X size against $n$ at the
$5\%$ nominal level, $\num{2000}$ replications, $B=999$. ``Dir-X'' is the plug-in
$T_n^{\ast}$ and ``AR'' the signal-cleared $A_n$. Monte Carlo standard errors are near
$0.005$. Studentizing $T_n^{\ast}$ with too few bootstrap draws ($B=99$) inflates the
clean-design rejection rate through a noisy variance estimate, and at $B=999$ it is at
nominal.}
\label{tab:nsweep}
\begin{tabular}{rccc}
\toprule
$n$ & Dir-X size & AR size & mean cond.\ $F$\\
\midrule
\num{1000}  & 0.058 & 0.058 & 192\\
\num{3000}  & 0.049 & 0.048 & 570\\
\num{9000}  & 0.044 & 0.045 & \num{1704}\\
\num{20000} & 0.044 & 0.045 & \num{3783}\\
\bottomrule
\end{tabular}
\end{table}

The power curve sweeps the linear-probability curvature knob $b$ for three sample
sizes. The top axis of Figure~\ref{fig:power} maps $b$ to the induced population
bias $\beta_{2SLS}-\bar L$. At $b=0$ the propensity is exactly linear and the test
controls size, rejecting at $1.4\%$, $1.7\%$ and $1.3\%$ for $n=1000,3000,9000$
against the $5\%$ nominal level (the $b=0$ row of Table~\ref{tab:power}), so the
pairs bootstrap is mildly conservative rather than exactly calibrated in this
design.
Power then rises monotonically in $b$ and in $n$, while the
mean conditional $F$ stays pinned near $\num{1740}$, so the same $F$ accompanies a
bias of $0$ and a bias of $2.64$.

\begin{figure}[h]\centering
\includegraphics[width=\textwidth]{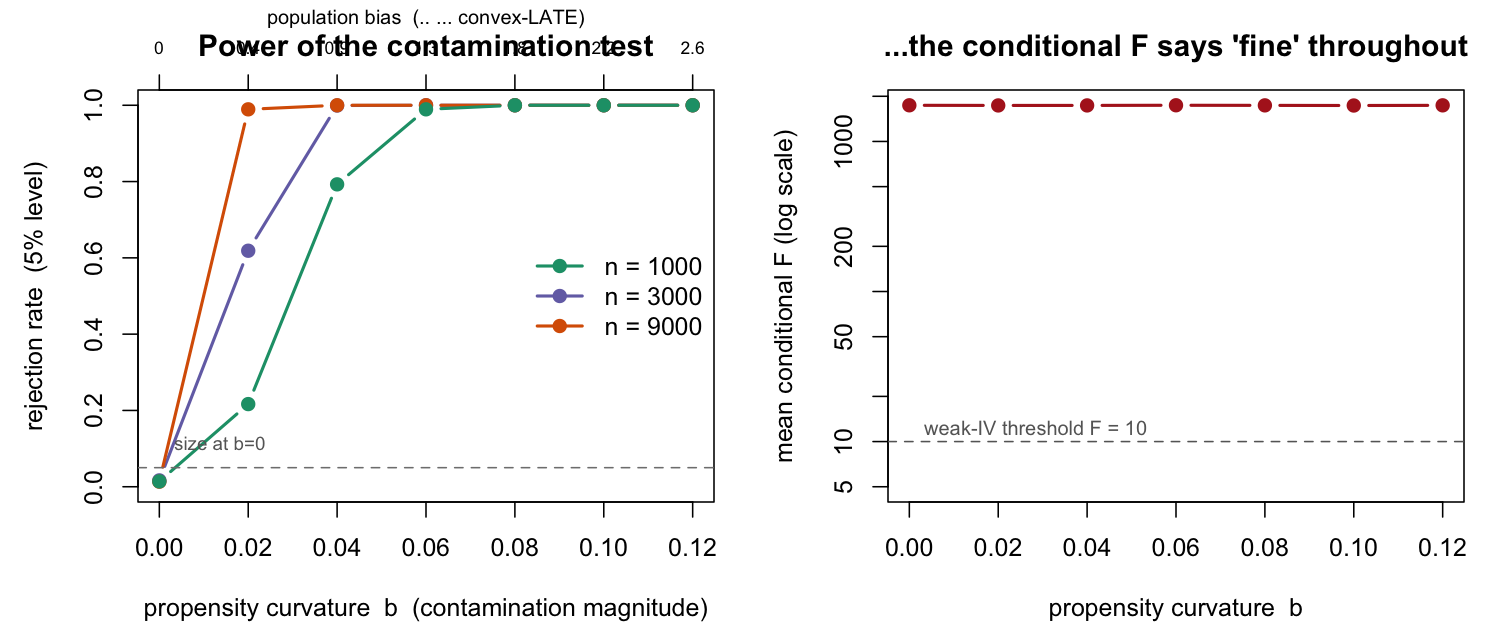}
\caption{Left panel. Rejection rate of the directed test vs.\ propensity
curvature $b$ (bottom axis) and induced population bias (top axis), for
$n\in\{1000,3000,9000\}$, with the dashed line at the $5\%$ level. Right panel.
The mean conditional $F$ (log scale) is invariant to $b$, sitting near
$\num{1740}$ throughout.}
\label{fig:power}
\end{figure}

\begin{table}[h]\centering
\caption{Power curve. Rejection rates ($5\%$ level, $\num{1500}$ replications,
$B=299$) by curvature knob $b$ and sample size. Induced population bias
$\beta_{2SLS}-\bar L$ at $n=\num{200000}$. Mean conditional $F$ at $n=9000$.}
\label{tab:power}
\begin{tabular}{rrrrrr}
\toprule
$b$ & bias & $n=1000$ & $n=3000$ & $n=9000$ & mean cond.\ $F$\\
\midrule
0.00 & $-0.02$ & 0.014 & 0.017 & 0.013 & \num{1742}\\
0.02 & 0.40    & 0.217 & 0.619 & 0.989 & \num{1738}\\
0.04 & 0.87    & 0.793 & 0.999 & 1.000 & \num{1738}\\
0.06 & 1.32    & 0.989 & 1.000 & 1.000 & \num{1743}\\
0.08 & 1.80    & 1.000 & 1.000 & 1.000 & \num{1740}\\
0.10 & 2.18    & 1.000 & 1.000 & 1.000 & \num{1736}\\
0.12 & 2.64    & 1.000 & 1.000 & 1.000 & \num{1741}\\
\bottomrule
\end{tabular}
\end{table}

\begin{table}[h]\centering
\caption{Directed tests as collinearity between instrument and covariates rises, in the
contaminated design (rejection $=$ power, the bias it faces in the third column) and
the harmless design (rejection $=$ size, $5\%$ nominal). $\num{2000}$ replications,
$n=3000$, cubic basis, $B=999$. ``Dir-X'' is the plug-in $T_n^{\ast}$ and ``AR'' the
signal-cleared $A_n$. Directed-S raises a false alarm in every harmless row because it tests
the sufficient condition $\theta=0$. Both bias-targeting forms keep full power against
genuine bias and hold size across the collinearity range, the size guarantee for $A_n$
being that of Proposition~\ref{prop:arbias}.}
\label{tab:svx}
\begin{tabular}{rrccccccc}
\toprule
 & & \multicolumn{4}{c}{Contaminated (power)} & \multicolumn{3}{c}{Harmless (size)}\\
\cmidrule(lr){3-6}\cmidrule(lr){7-9}
$\kappa$ & $\Var(Z\mid X)$ & bias & Dir-S & Dir-X & AR & Dir-S & Dir-X & AR\\
\midrule
0.5  & 0.230 & 2.56 & 1.00 & 1.00 & 1.00 & 1.00 & 0.044 & 0.044\\
1.0  & 0.190 & 4.44 & 1.00 & 1.00 & 1.00 & 1.00 & 0.051 & 0.051\\
2.0  & 0.119 & 6.25 & 1.00 & 1.00 & 1.00 & 1.00 & 0.044 & 0.045\\
4.0  & 0.058 & 7.45 & 1.00 & 1.00 & 1.00 & 1.00 & 0.050 & 0.052\\
8.0  & 0.028 & 7.78 & 1.00 & 1.00 & 1.00 & 1.00 & 0.044 & 0.050\\
16.0 & 0.014 & 7.83 & 1.00 & 1.00 & 1.00 & 1.00 & 0.040 & 0.043\\
\bottomrule
\end{tabular}
\end{table}

\begin{table}[h]\centering
\caption{Robustness of the slave trade finding (Section~\ref{sec:nw}) to the geography
basis and the clustering level. The sample is that of Section~\ref{sec:nw}, in $156$
ethnic groups. Whatever the basis, the instrument has an unclustered first-stage $F$ near
$\num{1258}$ and a cluster-robust $F$ of $17.0$, and the naive 2SLS estimate with geography
entered linearly is $-0.243$ for trust in neighbours and $-0.257$ for trust in relatives.
Each row refits the flexible propensity in the stated basis in latitude and longitude. Here
$\hat\varrho$ is the contamination share with a $95\%$ cluster bootstrap interval
($B=999$), the corrected $\hat\beta$ the residualized propensity estimand, the corrected
$F$ its own cluster-robust first stage, and $p_X$ the bias-targeting test of
Section~\ref{sec:exact}.}
\label{tab:nwrobust}
\begin{tabular}{llcccc}
\toprule
outcome & basis & $\hat\varrho$ $[95\%]$ & corrected $\hat\beta$ & corrected $F$ & $p_X$\\
\midrule
trust in neighbours & quadratic & $0.52\,[0.08,0.92]$ & $-0.205$ & $2.7$ & $0.91$\\
                    & cubic     & $0.52\,[0.12,0.96]$ & $-0.190$ & $2.4$ & $0.98$\\
                    & spline    & $0.52\,[0.09,1.03]$ & $-0.200$ & $3.0$ & $0.98$\\
\addlinespace
trust in relatives  & quadratic & $0.51\,[0.08,0.92]$ & $-0.137$ & $2.7$ & $0.89$\\
                    & cubic     & $0.52\,[0.11,0.95]$ & $-0.127$ & $2.4$ & $0.74$\\
                    & spline    & $0.51\,[0.10,1.04]$ & $-0.170$ & $3.0$ & $0.95$\\
\bottomrule
\end{tabular}
\end{table}

\begin{table}[h]\centering
\caption{Contamination share. Population value, plug-in estimate at $n=3000$,
$95\%$ percentile-bootstrap coverage of the population share ($\num{1000}$
replications, $B=499$), and the induced 2SLS bias.}
\label{tab:share}
\begin{tabular}{rrrrr}
\toprule
$\kappa$ & $\varrho$ (pop.) & $\hat\varrho$ ($n{=}3000$) & $95\%$ coverage & bias\\
\midrule
0.25 & 0.021 & 0.022 & 0.949 & 1.306\\
0.50 & 0.082 & 0.080 & 0.953 & 2.549\\
1.00 & 0.239 & 0.240 & 0.949 & 4.365\\
2.00 & 0.494 & 0.497 & 0.950 & 6.340\\
4.00 & 0.682 & 0.683 & 0.933 & 7.539\\
8.00 & 0.735 & 0.741 & 0.942 & 7.659\\
16.0 & 0.752 & 0.754 & 0.940 & 7.814\\
\bottomrule
\end{tabular}
\end{table}

\begin{figure}[h]\centering
\includegraphics[width=\textwidth]{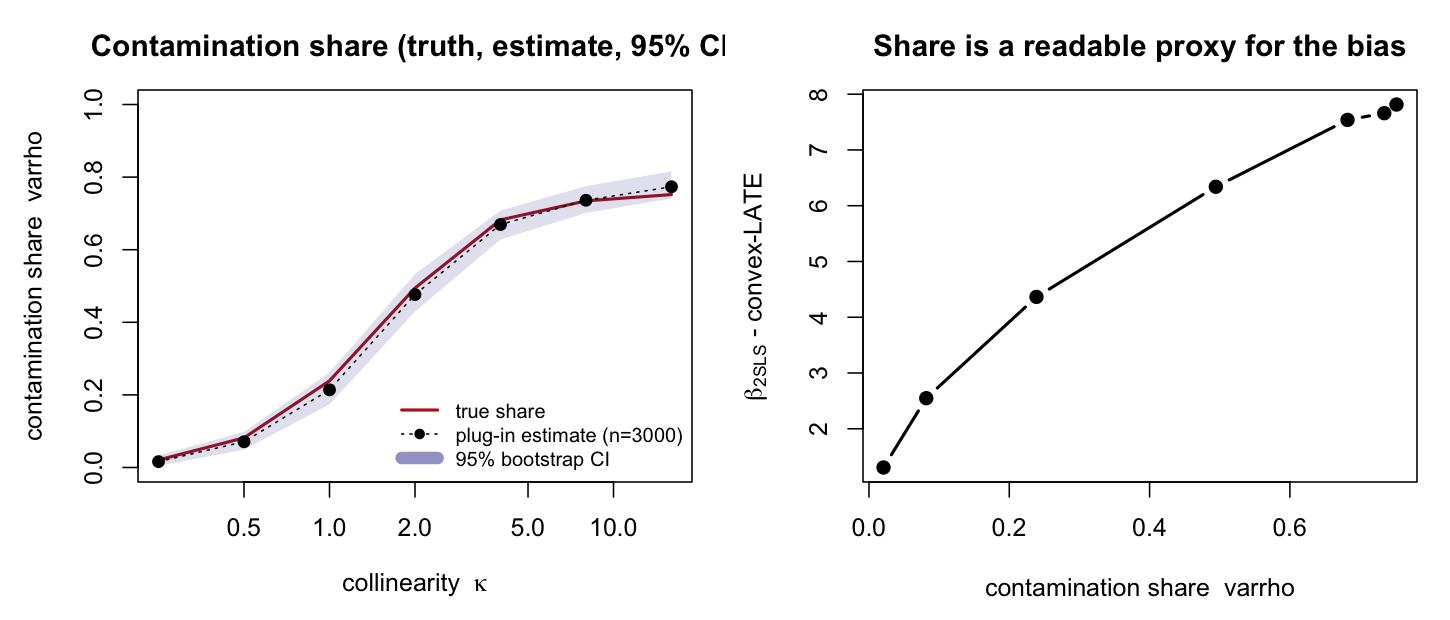}
\caption{Left panel. The population contamination share (solid), the plug-in
estimate at $n=3000$ (points) and its $95\%$ bootstrap band, against the
collinearity knob $\kappa$. Right panel. Along this one-parameter family the
population bias is monotone in the share, so here $\hat\varrho$ orders these designs
by danger. This ordering is a feature of the family swept, not a general property.
Across unrelated designs bias also depends on $\theta_Y$ and $\bar L$, so equal
shares can accompany different biases.}
\label{fig:share}
\end{figure}

\clearpage

\clearpage
\begin{thebibliography}{99}
%=============================================================================

\bibitem[Andrews et~al.(2019)]{andrews2019}
Andrews, I., Stock, J.H., Sun, L., 2019. Weak instruments in instrumental
variables regression: theory and practice. Annual Review of Economics 11,
727--753.

\bibitem[Angrist et~al.(1996)]{angrist1996}
Angrist, J.D., Imbens, G.W., Rubin, D.B., 1996. Identification of causal effects
using instrumental variables. Journal of the American Statistical Association 91,
444--455.

\bibitem[Abadie(2003)]{abadie2003}
Abadie, A., 2003. Semiparametric instrumental variable estimation of treatment
response models. Journal of Econometrics 113, 231--263.

\bibitem[Bierens(1990)]{bierens1990}
Bierens, H.J., 1990. A consistent conditional moment test of functional form.
Econometrica 58, 1443--1458.

\bibitem[Nunn and Wantchekon(2011)]{nunn2011}
Nunn, N., Wantchekon, L., 2011. The slave trade and the origins of mistrust in
Africa. American Economic Review 101, 3221--3252.

\bibitem[Olson(1998)]{olson1998}
Olson, C.A., 1998. A comparison of parametric and semiparametric estimates of the
effect of spousal health insurance coverage on weekly hours worked by wives. Journal
of Applied Econometrics 13, 543--565.

\bibitem[Poterba et~al.(1995)]{poterba1995}
Poterba, J.M., Venti, S.F., Wise, D.A., 1995. Do 401(k) contributions crowd out
other personal saving? Journal of Public Economics 58, 1--32.

\bibitem[Wooldridge(2010)]{wooldridge2010}
Wooldridge, J.M., 2010. Econometric Analysis of Cross Section and Panel Data, 2nd
ed. MIT Press.

\bibitem[Blandhol et~al.(2022)]{blandhol2022}
Blandhol, C., Bonney, J., Mogstad, M., Torgovitsky, A., 2022. When is TSLS
actually LATE? NBER Working Paper 29709.

\bibitem[Goldsmith-Pinkham et~al.(2024)]{goldsmith2024}
Goldsmith-Pinkham, P., Hull, P., Koles{\'a}r, M., 2024. Contamination bias in
linear regressions. American Economic Review 114, 4015--4051.

\bibitem[Zhao et~al.(2025)]{ddl2025}
Zhao, A., Ding, P., Li, F., 2025. Two-stage least squares with treatment--covariate
interactions for treatment effect heterogeneity. Working paper, arXiv:2502.00251.

\bibitem[Krumme and Westphal(2026)]{krumme2026}
Krumme, A., Westphal, M., 2026. Testing IV validity and LATE interpretation using
flexible covariate specifications. IZA Discussion Paper 18573.

\bibitem[Dossani et~al.(2024)]{dossani2024}
Dossani, A., Dotson, J.P., Schonlau, R., 2024. Contaminated control variables in
2SLS models. Working paper.

\bibitem[S{\l}oczy{\'n}ski et~al.(2026)]{ssu2026}
S{\l}oczy{\'n}ski, T., Sun, L., Uysal, S.D., 2026. A practical guide to
instrumental variables methods with heterogeneous treatment effects. Working
paper, arXiv:2605.15115.

\bibitem[Young(2024)]{young2024}
Young, A., 2024. Nearly collinear robust procedures for 2SLS estimation. The Stata
Journal 24, 3--28.

\bibitem[van der Vaart(1998)]{vandervaart1998}
van der Vaart, A.W., 1998. Asymptotic Statistics. Cambridge University Press.

\bibitem[Bound et~al.(1995)]{bound1995}
Bound, J., Jaeger, D.A., Baker, R.M., 1995. Problems with instrumental variables
estimation when the correlation between the instruments and the endogenous
explanatory variable is weak. Journal of the American Statistical Association 90,
443--450.

\bibitem[Chernozhukov et~al.(2018)]{chernozhukov2018}
Chernozhukov, V., Chetverikov, D., Demirer, M., Duflo, E., Hansen, C., Newey, W.,
Robins, J., 2018. Double/debiased machine learning for treatment and structural
parameters. The Econometrics Journal 21, C1--C68.

\bibitem[Conley et~al.(2012)]{conley2012}
Conley, T.G., Hansen, C.B., Rossi, P.E., 2012. Plausibly exogenous. Review of
Economics and Statistics 94, 260--272.

\bibitem[de Chaisemartin(2017)]{dechaisemartin2017}
de Chaisemartin, C., 2017. Tolerating defiance? Local average treatment effects
without monotonicity. Quantitative Economics 8, 367--396.

\bibitem[Heckman and Vytlacil(2005)]{heckman2005}
Heckman, J.J., Vytlacil, E., 2005. Structural equations, treatment effects, and
econometric policy evaluation. Econometrica 73, 669--738.

\bibitem[Imbens and Angrist(1994)]{imbens1994}
Imbens, G.W., Angrist, J.D., 1994. Identification and estimation of local average
treatment effects. Econometrica 62, 467--475.

\bibitem[van Kippersluis and Rietveld(2018)]{kippersluis2018}
van Kippersluis, H., Rietveld, C.A., 2018. Beyond plausibly exogenous. The
Econometrics Journal 21, 316--331.

\bibitem[Mogstad and Torgovitsky(2018)]{mogstad2018}
Mogstad, M., Torgovitsky, A., 2018. Identification and extrapolation of causal
effects with instrumental variables. Annual Review of Economics 10, 577--613.

\bibitem[Mogstad and Torgovitsky(2024)]{mogstad2024}
Mogstad, M., Torgovitsky, A., 2024. Instrumental variables with unobserved
heterogeneity in treatment effects, in: Handbook of Labor Economics. Elsevier.

\bibitem[Montiel Olea and Pflueger(2013)]{montielolea2013}
Montiel Olea, J.L., Pflueger, C., 2013. A robust test for weak instruments.
Journal of Business \& Economic Statistics 31, 358--369.

\bibitem[Newey(1985)]{newey1985}
Newey, W.K., 1985. Maximum likelihood specification testing and conditional
moment tests. Econometrica 53, 1047--1070.

\bibitem[Newey and McFadden(1994)]{newey1994}
Newey, W.K., McFadden, D., 1994. Large sample estimation and hypothesis testing,
in: Handbook of Econometrics, vol.~4. Elsevier, pp.~2111--2245.

\bibitem[Ramsey(1969)]{ramsey1969}
Ramsey, J.B., 1969. Tests for specification errors in classical linear
least-squares regression analysis. Journal of the Royal Statistical Society B 31,
350--371.

\bibitem[Sanderson and Windmeijer(2016)]{sanderson2016}
Sanderson, E., Windmeijer, F., 2016. A weak instrument $F$-test in linear IV
models with multiple endogenous variables. Journal of Econometrics 190, 212--221.

\bibitem[S{\l}oczy{\'n}ski(2022)]{sloczynski2022}
S{\l}oczy{\'n}ski, T., 2022. When should we (not) interpret linear IV estimands
as LATE? Working paper, Brandeis University.

\bibitem[Staiger and Stock(1997)]{staiger1997}
Staiger, D., Stock, J.H., 1997. Instrumental variables regression with weak
instruments. Econometrica 65, 557--586.

\bibitem[Stevenson(2026)]{stevenson2026}
Stevenson, M.T., 2026. Humans in the loop: the next frontier in the credibility revolution. Journal of Economic Literature, forthcoming.

\bibitem[Stock and Yogo(2005)]{stock2005}
Stock, J.H., Yogo, M., 2005. Testing for weak instruments in linear IV
regression, in: Identification and Inference for Econometric Models. Cambridge
University Press, pp.~80--108.

\bibitem[Tauchen(1985)]{tauchen1985}
Tauchen, G., 1985. Diagnostic testing and evaluation of maximum likelihood
models. Journal of Econometrics 30, 415--443.

\end{thebibliography}
\end{document}